\documentclass[12pt,reqno]{amsart}
\usepackage[a4paper, total={17cm,25cm}]{geometry}

\usepackage{color, graphics}
 \usepackage{tikz}
 \usetikzlibrary{arrows.meta}
 \usetikzlibrary{patterns}

\allowdisplaybreaks
 \usepackage{hyperref}
 \usepackage{pdfsync}
 \usepackage{longtable}
 \usepackage{diagbox}
 \usepackage{colortbl}
 \usepackage{enumitem}
 \usepackage{float}
 \usepackage{mathrsfs}
\newtheorem{theorem}{Theorem}[section]
\newtheorem{definition}[theorem]{Definition}
\newtheorem{proposition}[theorem]{Proposition}
\newtheorem{corollary}[theorem]{Corollary}
\newtheorem{lemma}[theorem]{Lemma}
\newtheorem{remark}[theorem]{Remark}

\newcommand\C{\mathbb{C} }

\newcommand\Psix{\textrm{P}_{\textrm{VI}}}

\makeatletter\let\c@figure\c@table\makeatother
\numberwithin{equation}{section}
\numberwithin{figure}{section}
\numberwithin{table}{section}
\setcounter{tocdepth}{1}
\newcommand{\orcidauthorA}{0000-0001-7504-4444}

\definecolor{grass}{rgb}{0.14,0.72,0.2}

\begin{document}

\title[Global asymptotics of $P_{VI}$]{Global asymptotics of the sixth Painlev\'e equation in Okamoto's space}

\author[V.~Heu]{Viktoria Heu}

\address{IRMA, UMR 7501, 7 rue Ren\'e-Descartes, 67084 Strasbourg Cedex, France}

\email{heu@math.unistra.fr}
\author[N.~Joshi]{Nalini Joshi}
\address{School of Mathematics and Statistics F07, The University of Sydney, Sydney, NSW 2006, Australia}
\thanks{N.J.'s ORCID ID is \orcidauthorA.}
\email{nalini.joshi@sydney.edu.au}

\author[M.~Radnovi\'c]{Milena Radnovi\'c}
\address{School of Mathematics and Statistics F07, The University of Sydney, Sydney, NSW 2006, Australia
\newline
Mathematical Institute SANU, Belgrade, Serbia (on leave)
}
\email{milena.radnovic@sydney.edu.au}

\subjclass[2000]{34M55, 34E05, 34M30, 14E15}

\thanks{The research of V.H.~is supported by the ANR grant ANR-16-CE40-0008, that of N.J.~ and M.R.~ is supported by the Discovery Grant \#DP200100210 from the Australian Research Council.
M.R.'s research was also partially supported by the
Mathematical Institute of the Serbian Academy of Sciences and Arts, the Science Fund of Serbia, grant Integrability and Extremal Problems in Mechanics, Geometry and
Combinatorics, MEGIC, Grant No.~7744592.
}


\begin{abstract}
We study dynamics of solutions in the initial value space of the sixth Painlev\'e equation as the independent variable approaches zero. 
Our main results describe the repeller set, show that the number of poles and zeroes of general solutions is unbounded, and that the complex limit set of each solution exists and is compact and connected. 
\end{abstract}

\maketitle
\tableofcontents

\section{Introduction}
In this paper, we consider the celebrated equation
\begin{equation}\label{PVI}
\begin{split}
 & y''=\frac{1}{2}\left(\dfrac{1}{y}+\dfrac{1}{y-1}+\dfrac{1}{y-x}\right)(y')^2 -\, \left(\dfrac{1}{x}+\dfrac{1}{x-1}+\dfrac{1}{y-x}\right)y'\\
&\qquad +\dfrac{y (y-1) (y-x)}{2 x^2 (x-1)^2 }\Bigl(\theta_\infty^2 -\,\dfrac{\theta_0^2 x}{y^2} + \dfrac{\theta_1^2(x-1)}{(y-1)^2} + \dfrac{(1-\theta_x^2)x (x-1)}{(y-x)^2}\Bigr),\\
\end{split}
\end{equation} 
for $x\in\mathbb C$, $(\theta_0, \theta_1, \theta_x, \theta_\infty)\in\mathbb C^4$, 
in its initial value space, where initial values are given at a point $x_0\in\mathbb C$, for small $|x_0|$.

The equation is the sixth Painlev\'e equation, first derived in \cite{Fuchs05} from deformations of a linear system with four regular singular points, a generalization of Gauss' hypergeometric equation; we will refer to it as $\Psix$. Subsequently, it was recognized as the most general equation in the study of second-order ODEs whose movable singularities are poles \cite{Gambier06,Painleve1897}. It has been studied widely because of its relation to mathematical physics and algebraic geometry; see \cite{hitchin95, manin96}. For special values of the parameters $(\theta_0, \theta_1, \theta_x, \theta_\infty)$, $\Psix$ has algebraic and elliptic solutions that turn out to be related to a broad range of mathematical structures; see \cite{DubrovinMazzocco2000, LisovyyTykhyy2014} and references therein. For generic parameters, the solutions are higher transcendental functions that cannot be expressed in terms of algebraic or classical functions \cite{Watanabe99}.

A large amount of work has been devoted to the description of these higher transcendental solutions. In this paper, we study global properties of such solutions of $\Psix$ in the limit as $x\to0$ in its initial value space (see Okamoto \cite{Okamoto}). Under appropriate M\"obius transformations of the variables \cite{OkamotoStudies}, our results also apply in the limit as $x$ approaches $1$ or $\infty$. Further information and properties of $\Psix$ are given in \S\ref{s:background} below.

Our starting point is the equivalent non-autonomous Hamiltonian system
\begin{subequations} \label{eq:yzsys}
\begin{align}
&y'=\phantom{-}\frac{\partial H}{\partial z},\label{eq:ypr}\\
& z'=-\frac{\partial H}{\partial y},\label{eq:zpr}
\end{align}
\end{subequations}
with Hamiltonian
\[
\begin{split} 
H=&\dfrac{y(y-1)(y-x)}{x(x-1)}\biggl( z^2-z\,\left(\frac{\theta_0}{y} +\frac{\theta_1}{y-1}+\frac{\theta_x-1}{y-x}\right) +\frac{\theta \overline{\theta}}{y(y-1)} \biggr).
\end{split}
\]
We will refer to the right side of Equations \eqref{eq:yzsys} as the Painlev\'e vector field and use the terminology 
$$\theta:=\frac{\theta_0+\theta_x+\theta_1+\theta_\infty-1}{2}, \quad \quad \overline{\theta}:=\theta-\theta_\infty .$$
To see  that the system \eqref{eq:yzsys} is equivalent to $\Psix$ (as shown by \cite{OkamotoHamil}), note that $z$ is given by Equation \eqref{eq:ypr} as
\[
2z:= \left( \frac{x-1}{y}-\frac{x}{y-1}+\frac{1}{y-x}\right) y' +\frac{\theta_0}{y}+\frac{\theta_1}{y-1}+\frac{\theta_x-1}{y-x}.
\]
Substituting this into Equation \eqref{eq:zpr} gives $\Psix$.

The Painlev\'e vector field becomes undefined at certain points in $\mathbb C^2$. 
{Those points correspond to the following initial values of the system \eqref{eq:yzsys}: $y=0$ or $y=1$ or $y=x$.}
Okamoto \cite{Okamoto} showed how to regularize the system at such points. For each $x_0\in \mathbb{C}\setminus\{0,1\}$, he compactified the  space of initial values $(y,z)\in (\mathbb{C}\setminus \{0,1,x_0\})\times \mathbb{C}$ to a smooth complex surface $S(x_0)$. The flow of the Painlev\'e vector field is well-defined in $\mathcal{S}:=\bigcup_{x_0\in \mathbb{C}\setminus\{0,1\}} \mathcal{S}(x_0)$, which we refer to as Okamoto's space of initial values.

Our main purpose is to describe the significant features of the flow in the singular limit $x\to0$. In similar studies of  the first, second, and fourth Painlev\'e equations \cite{Nalini1,Nalini2,NaliniMilena1} in singular limits, we showed that successive resolutions of the Painlev\'e vector field at base points terminates after nine blow-ups of $\mathbb C\mathbb P^2$, while for the fifth and third Painlev\'e equations we showed that the construction consists of eleven blow-ups and two blow-downs \cite{NaliniMilena2,NaliniMilena3}.
The initial value space in each case is then obtained by removing the infinity set, denoted $\mathcal{I}$, which are blow-ups of points not reached by any solution. 

Our main results fall into three parts:
\begin{enumerate}[leftmargin=0.8cm,label={\rm(\alph*)}]
\item {\em Existence of a repeller set:} Corollary \ref{cor:repeller} in Section \ref{s:estimates} shows that $\mathcal I$ is a repeller for the flow. Theorem \ref{th:estimates} provides the range of the independent variable for which a solution may remain in the vicinity of $\mathcal{I}$.
\item {\em Numbers of poles and zeroes:} In Corollary \ref{cor:repeller}, we prove that each solution that is sufficiently close to $\mathcal{I}$ has a pole in a neighbourhood of the corresponding value of the independent variable. Moreover, Theorem \ref{th:poles-zeroes-ones} shows that each solution with essential singularity at $x=0$ has infinitely many poles and infinitely many zeroes in each neighbourhood of that point.
\item {\em The complex limit set:} We prove in Theorem \ref{th:limit-set} that the limit set for each solution is non-empty, compact, connected, and invariant under the flow of the autonomous equation obtained as $x\to0$.
\end{enumerate}

\subsection{Background}\label{s:background}

$\Psix $ is the top equation in the well-known list of six Painlev\'e equations. Each of the remaining Painlev\'e equations can be obtained as a limiting form of $\Psix$.

To describe the complex analytic properties of their solutions, we recall that a normalized differential equation of the form $y''=\mathcal R(y',y,x)$ gives rise to two types of singularities, i.e., where the solution is not holomorphic.  A solution may have a \textit{fixed} singularity where $\mathcal R(\cdot, \cdot, x)$ fails to be holomorphic; in the case of $\Psix$, these lie at $x=0, 1, \infty$. 
The solutions may also have \textit{movable} singularities. 
{A movable singularity is a singularity whose location changes in a continuous fashion when going from one solution to a neighbouring solution under small changes in the initial conditions.   We note that this informal definition, which is somewhat difficult to make more precise, dates back to Fuchs \cite[p.699]{fuchs1884}.}

$\Psix$ was discovered by R. Fuchs in 1905 \cite{Fuchs05} in his study of deformations of a linear system of differential equations with four regular singularities, generalizing Gauss' hypergeometric equation. The latter has three regular singularities, placed at $0$, $1$, and $\infty$ by convention, and Fuchs took the fourth one to be at a location, which is deformable. The compatibility of the linear system with the deformation equation gives rise to $\Psix$. 

It is well known that $\Psix$ also has an elliptic form, which arises when we introduce an incomplete elliptic integral on a curve parametrized by $y(x)$.  $\Psix$ then becomes expressible in terms of the Picard-Fuchs equation for the corresponding elliptic curve. This form has been used for the investigation of its special solutions, which exist for special parameter values. This fact was rediscovered by Manin \cite{manin96} in his study of the mirror symmetries of the projective plane.

Given a Painlev\'e equation and $x$ not equal to a fixed singularity of the equation, Okamoto showed \cite{Okamoto} that the space of initial values forms a connected, compactified  and regularised space corresponding to a nine-point blow-up of the two-complex-dimensional projective space $\mathbb C\mathbb P^2$. For each given $x$, this is recognizable as an elliptic surface. These elliptic surfaces form fibres of a vector bundle as $x$ varies, with $\mathbb C$ as the base space. Starting with a point (initial value) on such a fibre, a solution of the Painlev\'e equation follows a trajectory that pierces each successive fibre, forming leaves of a foliated vector bundle \cite{milnor}. 

\subsection{Outline of the paper} The plan of the paper is as follows. In \S \ref{s:okamoto}, we construct the surface $\mathcal{S}(x_0)$. We define the notation and describe the results, with detailed calculations being provided in Appendix \ref{app:resolution}. In \S \ref{s:S0}, we describe the corresponding vector field for the limit $x\to0$. The movable singularities of $\Psix$ correspond to points $x_0$ where  the Painlev\'e vector field becomes unbounded. In  \S \ref{s:movable}, we consider neighbourhoods of exceptional lines where this occurs. Estimates of the Painlev\'e vector field as $x$ approaches $0$ are deduced in \S \ref{s:estimates}. In \S \ref{s:limit}, we consider the limit set. Finally, we give concluding remarks in \S \ref{s:con}.

\section{Resolution of singularities}\label{sec:okamoto}

In this section, we explain how to construct the space of initial values for the system \eqref{eq:yzsys}.
The notion of initial value spaces described in Definition \ref{def:initial-values-space} is based on foliation theory, and we start by first motivating the reason for this construction.
We then explain how to construct such a space by carrying out resolutions or blow-ups, based on the process described in Definition \ref{def:blow-up}.

The system \eqref{eq:yzsys} is a system of two first-order ordinary differential equations for $(y(x), z(x))$.
Given initial values $(y_0, z_0)$ at $x_0$, local existence and uniqueness theorems provide a solution that is defined on a local polydisk $U\times V$ in $\mathbb C\times \mathbb C^2$, where $x_0\in U\subset \mathbb C\setminus\{0,1\}$ and
$(y_0, z_0)\in V\subset (\mathbb C\setminus\{0\})\times\mathbb{C}$.
Our interest lies in global extensions of these local solutions.

However, the occurence of movable poles in the Painlev\'e transcendents acts as a barrier to the extension of $U\times V$ to the whole domain of \eqref{eq:yzsys}.
The first step to overcome this obstruction is to compactify the space $\mathbb C^2$, in order to include the poles.
We carry this out by embedding $\mathbb C^2$ into the {first} Hirzebruch surface $\mathbb{F}_1$ {\cite{hirzebruch, beauville}. $\mathbb{F}_1$ is a projective space covered by four affine coordinate charts (given in \S \ref{s:okamoto}).}  

The {next} step in this process results from the occurence of singularities in the Painlev\'e vector field \eqref{eq:yzsys} in $V$. By the term {\em singularity} we mean points where $(y', z')$ becomes either unbounded or undefined because at least one component approaches the undefined limit $0/0$. We are led therefore to construct a space in which the points where the singularities appear are regularised. The process of regularisation is called ``blowing up" or \emph{resolving a singularity}.

{The appearance of these singularities is related to the irreducibility of the solutions of Painlevé equations, originally due to Painlevé \cite{Painleve1897}, which we have restated below in modern terminology. A function is said to be reducible to another function if it is related to it through a series of allowable operations (described by Painlev\'e and itemized as (O), (P1)--(P5) in \cite[p.33]{Umemura90}).}
\begin{theorem}
	If the space of initial values for a differential equation is a compact rational variety, then the equation can be reduced either to a linear differential equation of higher order or to an equation {governing} elliptic functions.
\end{theorem}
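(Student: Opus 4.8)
The plan is to read the two alternatives off the geometry of the foliation $\mathcal F$ the equation induces on its space of initial values $M$, the leaves of $\mathcal F$ being the graphs of solutions. By hypothesis $M$ is a compact rational variety; the projection $\pi\colon M\to\mathbb C$ onto the independent variable is then a proper submersion away from the fixed singularities, hence a locally trivial fibre bundle whose fibres are compact rational surfaces $S(x_0)$, and $\mathcal F$ is transverse to these fibres. The point that compactness buys is that no leaf can run off to the boundary in finite ``time'': the equation has no movable singularity escaping $M$, and $\mathcal F$ extends to all of $M$ with at worst the isolated singularities coming from equilibria of the frozen system.

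Next I would classify the foliation $\mathcal F_{x_0}$ cut out on a single fibre $S(x_0)$, i.e.\ the autonomous vector field obtained by freezing the independent variable. On a compact rational surface the birational theory of foliated surfaces leaves a short list: after contracting the $\mathcal F$-invariant exceptional curves, $\mathcal F$ either carries a rational first integral, or is a Riccati foliation transverse to a rational fibration off finitely many fibres, or is turbulent, generically adapted to an elliptic fibration, or has a minimal model of general type with transcendental generic leaf. It is the last case that must be ruled out; granting that, the generic leaf of $\mathcal F_{x_0}$ is either a rational or an elliptic curve, and since this holds on every fibre of the bundle $M$ it holds for the non-autonomous foliation as well.

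If the generic leaf is rational, the general solution is a rational function of the independent variable and of finitely many distinguished solutions --- the Riccati case, where a single logarithmic derivative linearises the equation, being the prototype --- and tracing this through the allowable operations $(\mathrm O)$, $(\mathrm{P1})$--$(\mathrm{P5})$ of \cite{Umemura90} presents the equation as reducible to a linear equation of higher order. If the generic leaf is elliptic, inverting the incomplete elliptic integral carried by the pencil shows the solution is governed by the Picard--Fuchs equation of that family, i.e.\ by elliptic functions. The main obstacle is the exclusion step: one must show that a Painlev\'e-type equation whose complete space of initial values is compact and rational cannot support a foliation of general type. This is precisely the dividing line on which the genuine Painlev\'e transcendents sit --- for $\Psix$ the flow is complete only on the \emph{non}-compact surface $S(x_0)\setminus\mathcal I$ left after deleting the infinity set $\mathcal I$, so the hypothesis fails and irreducibility is not contradicted --- and it is the insistence on a \emph{rational} $M$ that closes off the room a transcendental foliation would otherwise occupy.
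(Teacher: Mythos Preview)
The paper does not prove this theorem: it is stated as background, attributed to Painlev\'e, with modern proofs credited to Malgrange, Umemura, and Watanabe. There is therefore no in-paper proof to compare your proposal against; I can only assess your plan on its own terms.

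Your outline is in the right spirit---Malgrange's approach in particular goes through the Galois groupoid of the foliation, and Umemura's through differential-algebraic irreducibility---but as written it has two genuine gaps.

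First, there is slippage between the fibre and the total space. In Okamoto's terminology the \emph{space of initial values} is a single fibre $S(x_0)$, and the hypothesis of the theorem is that this fibre is a compact rational variety. The Painlev\'e foliation $\mathcal F$ lives on the total space and is everywhere transverse to the fibres; it does not restrict to a foliation $\mathcal F_{x_0}$ on $S(x_0)$. What you obtain by ``freezing $x$'' is an autonomous vector field on $S(x_0)$, a different object from $\mathcal F$, and the passage from the leaf genus of that frozen system to reducibility of the non-autonomous equation is not automatic---indeed, for $\Psix$ itself the autonomous limit at $x=0$ is solved by elementary functions (as the paper computes in \S\ref{s:S0}) while the non-autonomous equation is transcendental.

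Second, and you say so yourself, the exclusion of the general-type case is the entire content. The Brunella--McQuillan trichotomy on rational surfaces organises the possibilities, but the hard assertion---that a second-order equation whose initial-value fibre compactifies to a rational surface cannot support a foliation of general type---is exactly what Umemura's classification of Painlev\'e's allowable operations, or Malgrange's groupoid argument, is designed to establish. Your plan names this obstacle but gives no indication of how to surmount it, so the proposal is incomplete precisely at the decisive step.
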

{Modern proofs of the irreducibility of the Painlev\'e equations have been developed by many authors, including Malgrange \cite{Malgrange01}, Umemura \cite{Umemura96,Umemura90} and Watanabe \cite{Watanabe99}.} Since the Hirzebruch surface is a compact rational variety, the {above} theorem implies that it cannot be the space of initial values for \eqref{eq:yzsys}.

We are now in a position to define the notion of initial value space.

\begin{definition}[\cite{Gerard1975}, \cite{GerardSec1972,Gerard1983,Okamoto}]\label{def:initial-values-space} 
	Let $(\mathcal{E},\pi,\mathcal{B})$ be a complex analytic fibration, $\Phi$ a foliation of $\mathcal{E}$, and $\Delta$ a holomorphic differential system on $\mathcal{E}$, such that:
	\begin{itemize}
		\item the leaves of $\Phi$ correspond to the solutions of $\Delta$;
		\item the leaves of $\Phi$ are transversal to the fibres of $(\mathcal{E},\pi,\mathcal{B})$;
		\item for each path $p$ in the base $\mathcal{B}$ and each point $X\in \mathcal{E}$, such that $\pi(X)\in p$, the path $p$ can be lifted into the leaf of $\Phi$ containing point $X$.
	\end{itemize}
	Then each fibre of the fibration is called \emph{a space of initial values} for the system $\Delta$.
\end{definition}

The properties listed in Definition \ref{def:initial-values-space} imply that each leaf of the foliation is isomorphic to the base $\mathcal{B}$.
Since the transcendental solutions of the sixth Painlev\'e equation can be globally extended as meromorphic functions of $x\in\mathbb{C}\setminus\{0,1\}$, we search for the fibration with the base equal to $\mathbb{C}\setminus\{0,1\}$. 

In order to construct the fibration, we apply the blow-up procedure, defined below \cite{HartshorneAG,GrifHarPRINC,DuistermaatBOOK} to the singularities of the system  \eqref{eq:yzsys} that occur where at least one component becomes undefined of the form $0/0$.
Okamoto \cite{Okamoto} showed that such singular points are contained in the closure of infinitely many leaves.
Moreover, these leaves are holomorphically extended at such a point.

\begin{definition}\label{def:blow-up}
	\emph{The blow-up} of the plane $\mathbb{C}^2$ at point $(0,0)$ is the closed subset $X$ of $\mathbb{C}^2\times\mathbb{CP}^1$ defined by the equation $u_1t_2=u_2t_1$, where $(u_1,u_2)\in\mathbb{C}^2$ and $[t_1:t_2]\in\mathbb{CP}^1$, see Figure \ref{fig:blow-up}.
	There is a natural morphism $\varphi: X\to\mathbb{C}^2$, which is the restriction of the projection from $\mathbb{C}^2\times\mathbb{CP}^1$ to the first factor.
	$\varphi^{-1}(0,0)$ is the projective line $\{(0,0)\}\times\mathbb{CP}^1$, called \emph{the exceptional line}.
\end{definition}
\begin{figure}[h]
	\includegraphics[width=11.2cm, height=6.07cm]{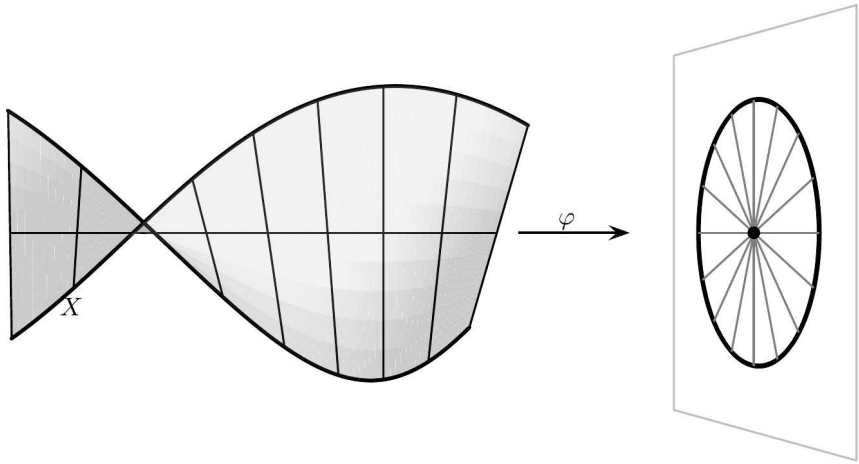}
	\caption{The blow-up of the plane at a point.}\label{fig:blow-up}
\end{figure}

\begin{remark}
	Notice that the points of the exceptional line $\varphi^{-1}(0,0)$ are in bijective correspondence with the lines containing $(0,0)$.
	On the other hand,
	$\varphi$ is an isomorphism between $X\setminus\varphi^{-1}(0,0)$ and $\mathbb{C}^2\setminus\{(0,0)\}$.
	More generally, any complex two-dimensional surface can be blown up at a point \cite{HartshorneAG,GrifHarPRINC,DuistermaatBOOK}.
	In a local chart around that point, the construction will look the same as described for the case of the plane.
\end{remark}

Notice that the blow-up construction separates the lines containing the point $(0,0)$ in Definition \ref{def:blow-up}, as shown in Figure \ref{fig:blow-up}.
In this way, the solutions of \eqref{eq:yzsys} containing the same point can be separated.
Additional blow-ups may be required if the solutions have a common tangent line or a tangency of higher order at such a point.
The explicit resolution of the vector field \eqref{eq:yzsys} is carried out in Appendix \ref{app:resolution}.

Okamoto described so called \emph{singular points of the first class} that are not contained in the closure of any leaf of the foliation given by the system of differential equations.
At such points, the corresponding vector field is infinite.

\section{The construction of Okamoto's space}\label{s:okamoto}
In this section, we construct Okamoto's space of initial values, in such a way as to ensure that the process yields a well-defined compact surface if we set $x_0=0$. We start by defining a new time coordinate $t=\ln x$, or  $x=\exp (t)$, suitable for taking the limit $x\to0$, and rewrite the dependent variables as 
\[
u(t)=y(x),\ v(t)=z(x).
\]
For conciseness, we continue to use the notation $x=e^t$ where needed.

Denoting $t$-derivatives by dots, we get  
$\dot{u}=x\frac{\partial H}{\partial v},\ \dot{v}=-x\frac{\partial H}{\partial u},$
or, equivalently
\begin{equation}\label{eq:vectorfield}
\dot{u}=\dfrac{\partial E}{\partial v},\ \dot{v}=-\dfrac{\partial E}{\partial u},
\end{equation}
where
\[
\begin{split} 
E=&\frac{u(u-1)(u-e^t)}{e^t-1}\Bigl\{ v^2-v\left(\frac{\theta_0}{u} +\frac{\theta_1}{u-1}+\frac{\theta_x-1}{u-e^t}\right)  +{\frac{\theta \overline{\theta}}{u(u-1)}} \Bigr\}.
\end{split}
\]

Suppose we are given $x_0=e^{t_0}\in \C\setminus \{0,1\}$. We compactify the space of initial values $ (u(t_0),v(t_0))\in \C^2$ to the first Hirzebruch surface {$\mathbb{F}_1$ \cite{hirzebruch}}, which is covered by four affine charts in $\mathbb C^2$ \cite{beauville}
\begin{align*}
&(u_0,v_0)=(u,v),&(u_1,v_1)&=\left(u,\frac{1}{v}\right),\\
&(u_2,v_2)=\left(\frac{1}{u},\frac{1}{uv}\right), &(u_3,v_3)&=\left(\frac{1}{u},uv\right).
\end{align*}

{ Let $L$ be the unique section of the natural projection $\mathbb{F}_1\to \mathbb{P}^1$ defined by $(u,v)\mapsto u$. Then, $L$ is given by $\{v_0=0\}\cup \{v_3=0\}$ and the self-intersection of its divisor class is $-1$.}
We identify four particular fibers of this projection: $$\mathcal{V}_j:=\{u_0=j\}\cup\{u_1=j\}\quad \forall j\in \{0,x,1\}\,, \quad \mathcal{D}_\infty :=\{u_2=0\}\cup\{u_3=0\}.$$
 Note that as fibers of the projection, these lines all have self-intersection $0$.
Then $\mathbb{F}_1\setminus\mathbb{C}^2$ is given by $\mathcal{D}_\infty \cup \mathcal{H}$, where 
$$\mathcal H := \{v_1=0\}\cup\{v_2=0\}.$$
This section $\mathcal H$, called a ``horizontal line'' in the following, by a small abuse of common terminology, is topologically equivalent to the formal sum $L+\mathcal{D}_\infty$ in $\mathrm{H}_2(\mathbb{F}_1,\mathbb{Z})$. In particular, its self-intersection number is given by 
$ \mathcal H\cdot \mathcal H=L\cdot L+\mathcal{D}_\infty\cdot \mathcal{D}_\infty+2L\cdot \mathcal{D}_\infty=-1+0+2=+1$,
{where the dot $\cdot$ denots the intersection form of divisor classes in the Picard group of the surface.}

In each chart, the vector field respectively becomes
\begin{equation*}
  \begin{cases}
    \dot{u} &= \dfrac{u (u -1)(u -x )}{x-1}\left(2v -\dfrac{\theta_0}{u }-\dfrac{\theta_1}{u -1}-\dfrac{\theta_x-1}{u -x }\right),\\
    \dot{v} &= -\dfrac{3u^2-2(x+1)u +x}{x-1}v^2+2\dfrac{\theta+\overline{\theta} }{x-1}u v -\left( \dfrac{x\theta_0}{ x-1}+  \theta_1 +\dfrac{\theta+\overline{\theta}}{x-1} \right)v \\
    &\qquad\quad -\dfrac{\theta \overline{\theta} }{x-1}\, , 
  \end{cases}
  \end{equation*}
 \begin{align*}
  &\begin{cases}
  \dot{u}_1 &= \dfrac{u_1 (u_1 -1)(u_1 -x)}{x-1}\left(2\dfrac{1}{v_1} -\dfrac{\theta_0}{u_1 }-\dfrac{\theta_1}{u_1 -1}-\dfrac{\theta_x -1}{u_1 -x}\right),\\
\dot{v}_1 &= \dfrac{3u_1^2-2(x+1)u_1 +x}{x-1}-2\dfrac{\theta+\overline{\theta} }{x-1}u_1v_1  +\left( \dfrac{x\theta_0}{ x-1}+  \theta_1+\dfrac{\theta+\overline{\theta}}{x-1} \right)v_1 \\
    &\qquad\quad+\dfrac{  \theta\overline{\theta}}{x-1}v_1^2 , 
\end{cases}\\
 & \begin{cases}
 \dot u_2 &= 2\dfrac{(u_2-1)(xu_2-1) }{(1-x)v_2}+\dfrac{(u_2-1)(\theta+\overline{\theta}-x\theta_0u_2) }{1-x}- \theta_1 u_2 , \\
\dot v_2 &= -\dfrac{(\theta v_2-1)(\overline{\theta}v_2-1)}{(1-x)u_2}-\dfrac{x}{1-x}(\theta_0v_2-1)u_2 ,
 \end{cases}
  \end{align*}
 \begin{equation*}
  \begin{cases}
 \dot u_3 &= -\dfrac{(u_3-1)(2v_3-(\theta+\overline{\theta}))}{1-x}- \theta_1 u_3+\dfrac{x}{1-x}u_3(u_3-1)(2v_3-\theta_0),\\
\dot v_3 &=\dfrac{(v_3-\theta)(v_3-\overline{\theta})}{(1-x)u_3}-x\dfrac{(v_3-\theta_0)u_3v_3}{1-x}.
 \end{cases}
  \end{equation*}

One realizes that the vector field is infinite on $\mathcal{H}:\{v_1=0\}\cup   \{ v_2=0\}$. More precisely, it is infinite or undetermined 
precisely there. We use the term {\em base point} for points where the vector field becomes undetermined. For example, the point $(u_1,v_1)=(0,0)$ in the coordinate chart $\C^2_{u_1,v_1}$ is a base point because the equation for $\dot{u}_1$ approaches $0/0$ as $(u_1, v_1)\to(0,0)$. 
In total, we find the following five base points in $\mathbb{F}_1$, possibly visible in several charts.
This initial situation is summarized in Table \ref{table:base-points} and Figure \ref{fig:F2Init}.
{Where needed in figures, we indicate the self-intersection number $n$ of an exceptional divisor by annotating it by $(n)$.  
}

\begin{longtable}{|c|c|c|c|}
 \hline
  \diagbox{Points}{Charts}&$( u_1, v_1)$ & $( u_2, v_2)$ & $( u_3, v_3)$ \\
  \hline
$ \beta_0$ & $\left(0, 0\right)$ & &   \\
 \hline 
$\beta_x$    & $\left(x,0\right)$ &    $\left( {1}/{x} , 0\right)$ &\\
\hline 
$ \beta_1$  &  $\left(1,0\right)$ &    $ \left(1,0\right)$ & \\
\hline
$\beta_\infty$  & &  $\left(0,  {1}/{\theta} \right)$ &  $\left(0, \theta \right)$\\
 \hline 
$\beta_\infty^-$  & &  $\left(0,  {1}/{\overline{\theta}} \right)$   &  $\left(0, \overline{\theta} \right)$
\\ \hline
  \caption{Five base points and the charts in which they are visible. The chart $( u_0, v_0)$ is omitted because no base points are visible in this chart.}\label{table:base-points}
\end{longtable}

 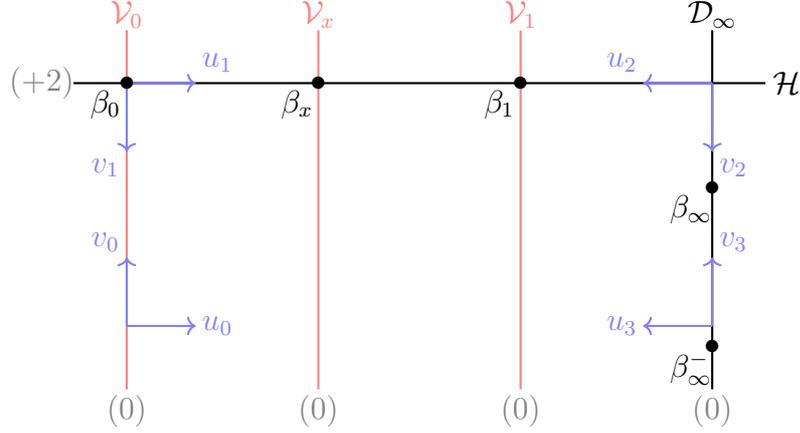
\begin{figure}[h]
\begin{center}
\begin{tikzpicture}[scale=0.7]
\draw[thick] (-6.5,4.3)--(6.5,4.3);
  \draw (6.9,4.3) node {$ {\mathcal H}$} ; \draw[gray] (-7.1,4.3) node {$(+2)$} ;
\draw[ thick] (5.5,-1.5)--(5.5,5.3);
  \draw (5.5,5.6) node {$\mathcal{D}_\infty$} ;\draw[gray] (5.5,-1.9) node {$(0)$} ;

  \draw[red!50,thick] (1.9,-1.5)--(1.9,5.3); 
  \draw[red!50] (1.9,5.6) node {$\mathcal{V}_1$} ;; \draw[gray] (1.9,-1.9) node {$(0)$} ;
    \draw[red!50,thick] (-1.9,-1.5)--(-1.9,5.3);
  \draw[red!50] (-1.9,5.6) node {$\mathcal{V}_x$} ;\draw[gray] (-1.9,-1.9) node {$(0)$} ;
      \draw[red!50,thick] (-5.5,-1.5)--(-5.5,5.3);
  \draw[red!50] (-5.5,5.6) node {$\mathcal{V}_0$} ; \draw[gray] (-5.5,-1.9) node {$(0)$} ;
  
  \draw[blue!50, thick, ->] (-5.5,-0.3)--(-5.5,1);
            \draw[blue!50, thick, ->] (-5.5,-.3)--(-4.2,-.3);
               \draw[blue!50] (-3.8 , -.3)  node  {$u_0$} ;
         \draw[blue!50] (-5.9 , 1.3)  node  {$v_0$} ;
         
          \draw[blue!50, thick, ->] (-5.5,4.3)--(-5.5,3);
            \draw[blue!50, thick, ->] (-5.5,4.3)--(-4.2,4.3);
               \draw[blue!50] (-3.8 , 4.7)  node  {$u_1$} ;
         \draw[blue!50] (-5.9 , 2.7)  node  {$v_1$} ;
          
          \draw[blue!50, thick, ->] (5.5,4.3)--(5.5,3);
            \draw[blue!50, thick, ->] (5.5,4.3)--(4.2,4.3);
               \draw[blue!50] (3.8 , 4.7)  node  {$u_2$} ;
         \draw[blue!50] (5.9 , 2.7)  node  {$v_2$} ;
          
          \draw[blue!50, thick, ->] (5.5,-.3)--(5.5,1);
            \draw[blue!50, thick, ->] (5.5,-.3)--(4.2,-.3);
               \draw[blue!50] (3.8 , -.3)  node  {$u_3$} ;
         \draw[blue!50] (5.9 , 1.3)  node  {$v_3$} ;
         
           \draw (-5.5,4.3) node {$\bullet$} ;   \draw (-5.9,3.9) node {$\beta_0$} ;
     \draw (-1.9,4.3) node {$\bullet$} ;    \draw (-2.3,3.9) node {$\beta_x$} ;
    \draw (1.9,4.3) node {$\bullet$} ; \draw (1.5,3.9) node {$\beta_1$} ;
     \draw (5.5,2.3) node {$\bullet$} ;    \draw (5.1,1.9) node {$\beta_\infty$} ;   
       \draw (5.5,-.7) node {$\bullet$} ;    \draw (5.1,-1.1) node {$\beta_\infty^-$} ;   

         \end{tikzpicture}
         \caption[The first Hirzebruch surface]{The surface $\mathbb{F}_1$ with its coordinates and the base point configuration. The numbers in parentheses indicate self-intersection numbers.}\label{fig:F2Init}
\end{center}
  \end{figure}
 Okamoto's procedure consists in resolving the vector field by successively blowing up the base points until the vector field becomes determined. Since later on we need a well-defined compact surface if we set $x=0$, we may not blow up $\beta_0$ and $\beta_x$ simultaneously. As detailed in Appendix \ref{s:DetailsA},
 the blow-up of $\beta_0,\beta_x,\beta_1$ with  $\beta_x$ after $\beta_0$ consists of
replacing the charts $\C^2_{u_1,v_1}$ and $\C^2_{u_2,v_2}$ by the following five $\C^2$-charts, endowed with the obvious rational transition maps,
$$\begin{array}{rclcrcl}(\tilde u_{1},\tilde v_{1})&:=&\left(u_1, \frac{v_1}{u_1(u_1-1)(u_1-x)} \right) &\quad 
&(\tilde u_{2},\tilde v_{2})&:=&\left(u_2, \frac{v_2}{(1-u_2)(1-xu_2)} \right) 
\vspace{.2cm}\\
(u_{02},v_{02})&:=&\left(\frac{u_1}{v_1}, {v_1}\right)&\quad &(u_{12},v_{12})&:=&\left(\frac{u_1-1}{v_1}, {v_1}\right)\vspace{.2cm}\\
 (u_{x2},v_{x2})&:=&\left(\frac{u_1(u_1-x)}{v_1}, \frac{v_1}{u_1}\right) 
\end{array}$$
 For each   $i\in \{0,1,x\}$, what formerly was the point $\beta_i$ is now replaced by an exceptional line 
$$\mathcal{D}_i:\{\tilde{u}_{1}=i\} \cup \{v_{i2}=0\}$$ of self-intersection $-1$.  The strict transform of  $\mathcal{H}$, i.e. the closure of $\mathcal{H}\setminus \{\beta_0,\beta_x,\beta_1\}$ after blow-up is given by 
$$\mathcal{H}^*:=  \{\tilde{v}_{1}=0\}\cup  \{\tilde{v}_{2}=0\} \, .$$ As a general fact, each time we blow up a point on a curve, the self-intersection number of the strict transformation of the curve is the former self-intersection number decreased by unity. Since here we have blown up three points, $\mathcal{H}^*$ has self-intersection number $(-2)$. 
The blow-up of $\beta_\infty$ consists of removing the point $(0,1/\theta)$  (corresponding to $\beta_\infty$) from the chart $\C^2_{\tilde{u}_2,\tilde{v}_2}$
and replacing the chart $\C^2_{{u}_3, {v}_3}$ by the following pair of $\C^2$-charts.
$$\begin{array}{rclcrcl}
(\tilde u_3,  \tilde v_3)&:=&\left(u_3,- \frac{v_3-\theta}{u_3}\right)  &\quad 
&(  u_{\infty 2},  v_{\infty 2})&:=&\left(\frac{u_3}{v_3-\theta}, v_3-\theta\right)  
 \end{array}$$
 Again we obtain an exceptional line $\mathcal{E}_\infty$ and a strict transform $ \mathcal{D}_\infty^*$ such that $\mathcal{D}_\infty = \mathcal{E}_\infty\cup \mathcal{D}_\infty^*$, where  $$\mathcal{E}_\infty:=\{ \tilde u_3 =0\}\cup \{ v_{\infty 2}=0\}\, , \quad \mathcal{D}_\infty^*=\{\tilde{u}_2=0\}\cup \{ u_{\infty 2}=0\}\, .$$
 In each of the seven new charts that we have to add to $\C^2_{u,v}$ in order to fully describe the surface resulting of $\mathbb{F}_1$ after this first sequence of blow-ups, we again look at the resulting vector field (see section \ref{s:DetailsB}) and find the following base points, including the still unresolved $\beta_\infty^-$.
 The situation is summarized in Table \ref{t:add-dp}.
 
 \begin{table}
 \begin{tabular}{|c|c|c|c|c|}
   \hline
\diagbox{Charts}{Points}&$\gamma_0$&$\gamma_x$&$\gamma_1$&$\beta_\infty^-$\\
\hline
$( \tilde{u}_1, \tilde{v}_1)$
&$\left(0 , \frac{1}{x\theta_0} \right)$
&$\left(x , \frac{1}{x(x-1)\theta_x} \right)$
&$\left(1 , \frac{1}{(1-x)\theta_1} \right)$
&\\
\hline
$( \tilde{u}_2, \tilde{v}_2)$
&&$\left(\frac{1}{x} , \frac{x}{(x-1)\theta_x} \right)$
&$\left(1 , \frac{1}{(1-x)\theta_1} \right)$
&$\left(0 , \frac{1}{\overline{\theta}} \right)$\\
\hline
$( u_{02}, v_{02})$
&$\left(\theta_0 ,0 \right)$&&&\\
\hline
$( u_{x2}, v_{x2})$
&$\left(-x\theta_0 , \frac{1}{\theta_0} \right)$
&$\left(x\theta_x,0\right)$
&&\\
\hline
$( u_{12}, v_{12})$
&&&$ \left(\theta_1 ,0 \right)$&\\
\hline
$( u_{\infty 2}, v_{\infty 2}) $
&&&&$\left(0 , -\theta_\infty \right)$\\
\hline
\end{tabular}
   \caption{Base points remaining after blowing up $\beta_0,\beta_x,\beta_1$ and $\beta_\infty$. The chart $( \tilde{u}_3, \tilde{v}_3)$ is ommitted as there is no base point remaining in this chart.}\label{t:add-dp}
\end{table}

In Figure \ref{fig:F2BlowUp}, the notation ``$(n)$'' again indicates ``self-intersection number equal to $n$''. Moreover, as a visual guideline, we again included the 
strict transforms $\mathcal{V}_i^*:= \{u_{i2}=0\} \cup \{u_0=i\} $ of the former vertical lines $\mathcal{V}_i.$ Those  have self-intersection $(-1)$.
\begin{figure}[H] \begin{center}
\begin{tikzpicture}[scale=0.7]
\draw[thick] (-6.5,4.3)--(6.5,4.3);
  \draw (6.9,4.3) node {$ {\mathcal H}^*$} ;    \draw[gray] (-7.1,4.3) node {$(-2)$} ;

            \draw[red!50, thick] (-8,-3.3)--(-3.2,-3.3);            \draw[red!50] (-8.4,-3.3)  node  {$\mathcal{V}_0^*$} ; 
            
                 \draw[red!50, thick] (-4,-1)--(0.8,-1);            \draw[red!50] (-4.4,-1)  node  {$\mathcal{V}_x^*$} ;
                 
                        \draw[red!50, thick] (-0,1.5)--(4.8,1.5);            \draw[red!50] (-0.4,1.5)  node  {$\mathcal{V}_1^*$} ;

 \draw[red!50, thick] (3.5,-1.5)--(8.3,-1.5);            \draw[red!50] (8.9,-1.5)  node  {$\mathcal{E}_{\infty}$} ;

\draw[thick] (5.5,-4.5)--(5.5,5.3);
  \draw (5.5,5.6) node {$\mathcal{D}_\infty^*$} ;\draw[gray] (5.5,-4.9) node {$(-1)$} ;
  \draw[thick] (1.9,-4.5)--(1.9,5.3);
  \draw  (1.9,5.6) node {${\mathcal{D}}_1$} ;\draw[gray] (1.9,-4.9) node {$(-1)$} ;
    \draw[thick] (-1.9,-4.5)--(-1.9,5.3);
  \draw  (-1.9,5.6) node {${\mathcal{D}}_x$} ;\draw[gray] (-1.9,-4.9) node {$(-1)$} ;
      \draw[thick] (-5.5,-4.5)--(-5.5,5.3);
  \draw  (-5.5,5.6) node {${\mathcal{D}}_0$} ;\draw[gray] (-5.5,-4.9) node {$(-1)$} ;

       \draw[blue!50, thick, ->] (-5.5,-3.3)--(-5.5,-2);
            \draw[blue!50, thick, ->] (-5.5,-3.3)--(-4.2,-3.3);
               \draw[blue!50] (-3.8 , -3.6)  node  {$v_{02}$} ;
         \draw[blue!50] (-5.9 , -1.7)  node  {$u_{02}$} ;
         
          \draw[blue!50, thick, ->] (-5.5,4.3)--(-5.5,3);
            \draw[blue!50, thick, ->] (-5.5,4.3)--(-4.2,4.3);
               \draw[blue!50] (-3.8 , 4.7)  node  {$\tilde u_1$} ;
         \draw[blue!50] (-5.9 , 2.7)  node  {$\tilde v_1$} ;

          \draw[blue!50, thick, ->] (5.5,4.3)--(5.5,3);
            \draw[blue!50, thick, ->] (5.5,4.3)--(4.2,4.3);
               \draw[blue!50] (3.8 , 4.7)  node  {$\tilde u_2$} ;
         \draw[blue!50] (5.9 , 2.7)  node  {$\tilde v_2$} ;

          \draw[blue!50, thick, ->] (5.5,-1.5)--(5.5,-0.2);
            \draw[blue!50, thick, ->] (5.5,-1.5)--(6.8,-1.5);
               \draw[blue!50] (7.4 , -1.8)  node  {${u}_{\infty 2} $} ;
         \draw[blue!50] (6.1 , -0.1)  node  {${v}_{\infty 2}$} ;
         
            \draw[blue!50, thick, ->] (-1.9,-1 )--(-1.9,0.3);
            \draw[blue!50, thick, ->] (-1.9,-1 )--(-0.6,-1);
               \draw[blue!50] (-0.1 , -1.2)  node  {$v_{x 2}$} ;
         \draw[blue!50] (-1.5 , 0.6)  node  {$u_{x 2}$} ;
         
               \draw[blue!50, thick, ->] (1.9,1.5 )--(1.9,2.8);
            \draw[blue!50, thick, ->] (1.9,1.5 )--(3.2,1.5);
               \draw[blue!50] (3.5 , 1.8)  node  {$v_{1 2}$} ;
         \draw[blue!50] (2.4 , 3.1)  node  {$u_{1 2}$} ;
         
      \draw (-5.5,1.3) node {$\bullet$} ;   \draw (-5.9,0.9) node {$\gamma_0$} ;
          \draw (-1.9,3.3) node {$\bullet$} ;    \draw (-2.3,2.9) node {$\gamma_x$} ;
           \draw (1.9,-2) node {$\bullet$} ; \draw (1.5,-2.4) node {$\gamma_1$} ;
      \draw (5.5,-3.5) node {$\bullet$} ;  \draw (5.0,-3.5) node {$\beta_\infty^-$} ;

         \end{tikzpicture}
   \caption[The surface after a first sequence of blow-ups]{The surface $\mathbb{F}_1$ after the first sequence of blow-ups and the new base point configuration. }\label{fig:F2BlowUp}
\end{center}
\end{figure}
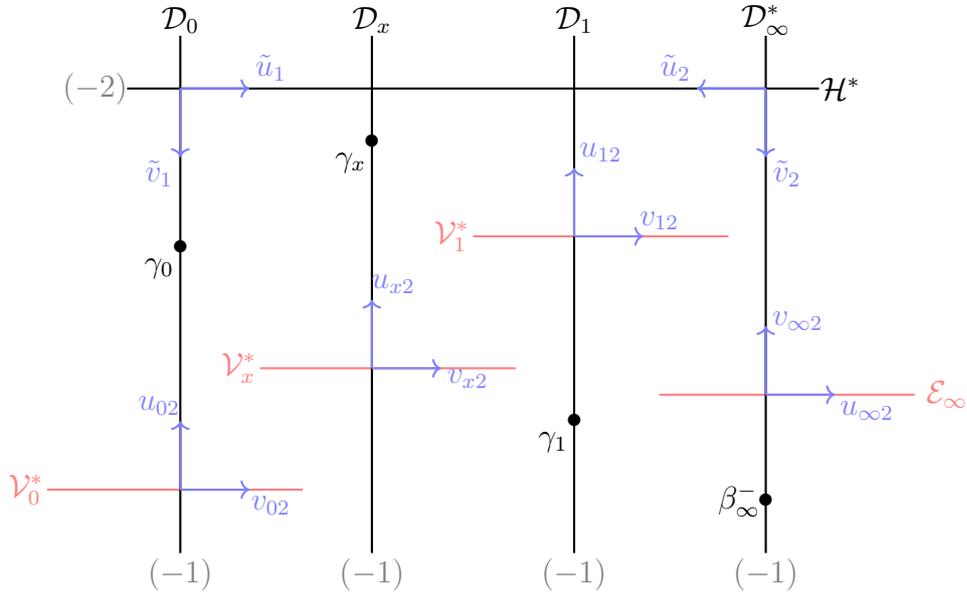

We blow-up the remaining base points by replacing each chart $\C^2_{u_{i2},v_{i2}}$ by a pair of $\C^2$-charts with corresponding index as follows, and then removing the already blown up base points that are still visible from other charts.
$$\begin{array}{rclcrcl}(u_{03},v_{03})&:=& \left(u_{02}-\theta_0,\frac{v_{02}}{u_{02}-\theta_0} \right) &\quad 
&(u_{04},v_{04})&:=& \left(\frac{u_{02}-\theta_0}{v_{02}}, v_{02}\right)
\vspace{.2cm}\\
(u_{x3},v_{x3})&:=& \left(u_{x2}-x(\theta_x-1),\frac{v_{x2}}{u_{x2}-x(\theta_x-1)} \right) &\quad 
&(u_{x4},v_{x4})&:=& \left(\frac{u_{x2}-x(\theta_x-1)}{v_{x2}}, v_{x2}\right)
\vspace{.2cm}\\
(u_{13},v_{13})&:=& \left(u_{12}-\theta_1,\frac{v_{12}}{u_{12}-\theta_1} \right) &\quad 
&(u_{14},v_{14})&:=& \left(\frac{u_{12}-\theta_1}{v_{12}}, v_{12}\right)
\vspace{.2cm}\\
(u_{\infty 3},v_{\infty 3})&:=& \left(u_{ \infty 2} ,\frac{v_{\infty 2}+\theta_\infty}{u_{\infty 2}} \right) &\quad 
&(u_{\infty 4},v_{\infty 4})&:=& \left(\frac{u_{\infty 2} }{v_{\infty 2}+\theta_\infty}, v_{\infty 2}\right)
\end{array}$$
We obtain the following new exceptional lines, for $i\in \{0,1,x\}$.
$$\mathcal{E}_i:=\{u_{i3}=0\}\cup \{v_{i4}=0\} \, , \quad \mathcal{E}_\infty^-:=\{u_{\infty 3}=0\}\cup \{v_{\infty 4}=0\}.$$
Moreover, we have the following new strict transforms, for $i\in \{0,1,x\}$.
$$\mathcal{D}_i^*:=\{v_{i3}=0\}\cup \{\tilde{u}_1=i\} \, , \quad \mathcal{D}_\infty^{**}:=\{\tilde u_2=0 \}\cup \{u_{\infty 4}=0\}.$$
The above charts of the Hirzebruch surface blown up in our eight base points are detailed in appendix section \ref{s:DetailsC}. As we can see from the equations there, the vector field  is now free of base points.  We say that the initial value space is resolved or regularized. Moreover, the function $E$ is well-defined there, i.e. when resolving the base points of the vector field, we also resolved the indeterminacy points of $E$. For each of the new coordinate charts $(u_{mn}, v_{mn})$, we also define the Jacobian 
\[ \omega_{mn} =\frac{\partial u_{mn}}{\partial u}\, \frac{\partial v_{mn}}{\partial v}\,-\,\frac{\partial u_{mn}}{\partial v}\,\frac{\partial v_{mn}}{\partial u}.\]

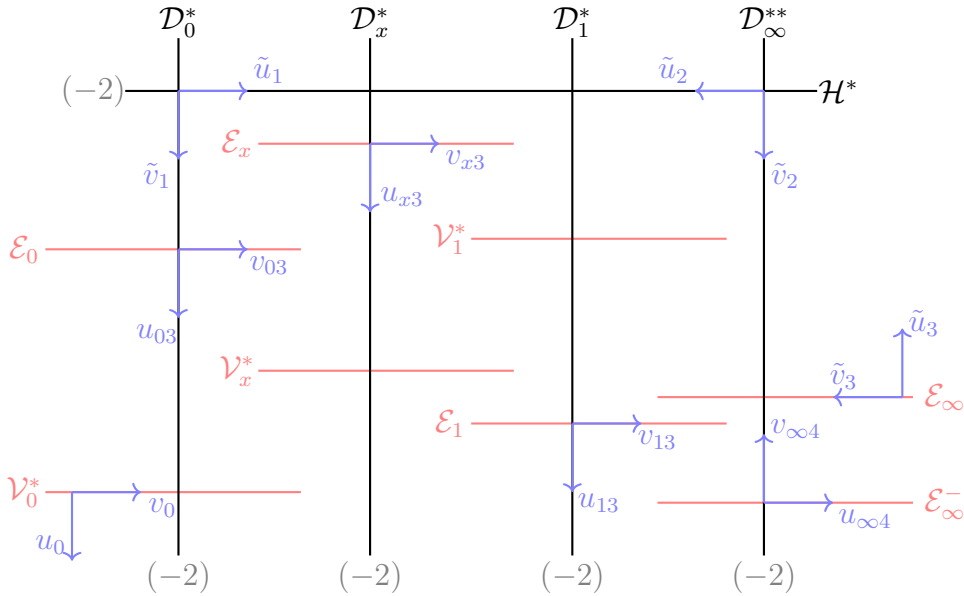
\begin{figure}[H]
\begin{center}
\begin{tikzpicture}[scale=0.7]
\draw[thick] (-6.5,4.3)--(6.5,4.3);
  \draw (6.9,4.3) node {$ {\mathcal H}^*$} ;    \draw[gray] (-7.1,4.3) node {$(-2)$} ;

            \draw[red!50, thick] (-8,-3.3)--(-3.2,-3.3);            \draw[red!50] (-8.4,-3.3)  node  {$\mathcal{V}_0^*$} ; 
            \draw[red!50, thick] (-8,1.3)--(-3.2,1.3);            \draw[red!50] (-8.4,1.3)  node  {$\mathcal{E}_0$} ;

  \draw[blue!50, thick, ->] (-7.5,-3.3)--(-7.5,-4.6);
            \draw[blue!50, thick, ->] (-7.5,-3.3)--(-6.2,-3.3);
               \draw[blue!50] (-5.8 , -3.6)  node  {$v_{0}$} ;
         \draw[blue!50] (-7.9 , -4.3)  node  {$  u_{0}$} ;

                 \draw[red!50, thick] (-4,-1)--(0.8,-1);            \draw[red!50] (-4.4,-1)  node  {$\mathcal{V}_x^*$} ;
                  \draw[red!50, thick] (-4,3.3)--(0.8,3.3);            \draw[red!50] (-4.4,3.3)  node  {$\mathcal{E}_x$} ;

                        \draw[red!50, thick] (-0,1.5)--(4.8,1.5);            \draw[red!50] (-0.4,1.5)  node  {$\mathcal{V}_1^*$} ;
           \draw[red!50, thick] (-0,-2)--(4.8,-2);            \draw[red!50] (-0.4,-2)  node  {$\mathcal{E}_1$} ;

 \draw[red!50, thick] (3.5,-1.5)--(8.3,-1.5);            \draw[red!50] (8.9,-1.5)  node  {$\mathcal{E}_{\infty}$} ;
   \draw[red!50, thick] (3.5,-3.5)--(8.3,-3.5);            \draw[red!50] (8.9,-3.5)  node  {$\mathcal{E}_{\infty}^-$} ;

\draw[thick] (5.5,-4.5)--(5.5,5.3);
  \draw (5.5,5.6) node {$\mathcal{D}_\infty^{**}$} ;\draw[gray] (5.5,-4.9) node {$(-2)$} ;
  \draw[thick] (1.9,-4.5)--(1.9,5.3);
  \draw  (1.9,5.6) node {${\mathcal{D}}_1^*$} ;\draw[gray] (1.9,-4.9) node {$(-2)$} ;
    \draw[thick] (-1.9,-4.5)--(-1.9,5.3);
  \draw  (-1.9,5.6) node {${\mathcal{D}}_x^*$} ;\draw[gray] (-1.9,-4.9) node {$(-2)$} ;
      \draw[thick] (-5.5,-4.5)--(-5.5,5.3);
  \draw  (-5.5,5.6) node {${\mathcal{D}}_0^*$} ;\draw[gray] (-5.5,-4.9) node {$(-2)$} ;

                    \draw[blue!50, thick, ->] (-1.9,3.3 )--(-1.9,2);
            \draw[blue!50, thick, ->] (-1.9,3.3 )--(-0.6,3.3);
               \draw[blue!50] (-0.1 , 3)  node  {$v_{x 3}$} ;
         \draw[blue!50] (-1.3 , 2.3)  node  {$u_{x 3}$} ;
              
          \draw[blue!50, thick, ->] (-5.5,4.3)--(-5.5,3);
            \draw[blue!50, thick, ->] (-5.5,4.3)--(-4.2,4.3);
               \draw[blue!50] (-3.8 , 4.7)  node  {$\tilde u_1$} ;
         \draw[blue!50] (-5.9 , 2.7)  node  {$\tilde v_1$} ;

          \draw[blue!50, thick, ->] (5.5,4.3)--(5.5,3);
            \draw[blue!50, thick, ->] (5.5,4.3)--(4.2,4.3);
               \draw[blue!50] (3.8 , 4.7)  node  {$\tilde u_2$} ;
         \draw[blue!50] (5.9 , 2.7)  node  {$\tilde v_2$} ;

          \draw[blue!50, thick, ->] (8.1,-1.5)--(8.1,-0.2);
            \draw[blue!50, thick, ->] (8.1,-1.5)--(6.8,-1.5);
               \draw[blue!50] (7 , -1.1)  node  {$\tilde{v}_{3} $} ;
         \draw[blue!50] (8.5 , -0.1)  node  {$\tilde{u}_{3}$} ;
         
    \draw[blue!50, thick, ->] (5.5,-3.5)--(5.5,-2.2);
            \draw[blue!50, thick, ->] (5.5,-3.5)--(6.8,-3.5);
               \draw[blue!50] (7.4 , -3.8)  node  {${u}_{\infty 4} $} ;
         \draw[blue!50] (6.1 , -2.1)  node  {${v}_{\infty 4}$} ;
         
               \draw[blue!50, thick, ->] (1.9,-2 )--(1.9,-3.3);
            \draw[blue!50, thick, ->] (1.9,-2 )--(3.2,-2);
               \draw[blue!50] (3.5 , -2.3)  node  {$v_{1 3}$} ;
         \draw[blue!50] (2.4 , -3.5)  node  {$u_{1 3}$} ;

            \draw[blue!50, thick, ->] (-5.5,1.3)--(-5.5,0);
            \draw[blue!50, thick, ->] (-5.5,1.3)--(-4.2,1.3);
               \draw[blue!50] (-3.8 , 1)  node  {$v_{03}$} ;
         \draw[blue!50] (-5.9 , -.3)  node  {$u_{03}$} ;

  \end{tikzpicture}

\caption[The space of initial values of $\Psix$]{The space of initial values of the resolved Painlev\'e VI vector field for $x\neq 0$.}\label{fig:inter}
\end{center}
\end{figure}

Figure \ref{fig:inter} illustrates a schematic drawing of the resultant collection of exceptional lines, $\mathcal H^*$ and $D_\infty^{**}$ and their intersections in the resolved space, as well as the coordinates that will be most important in the following. For each $x=x_0\not= 0, 1$, this regularized space will be denoted as $\mathcal{S}(x_0)$. Moreover, we define  $\mathcal{S}(0)$ to be the result of the blow-up procedure for $x=0$. Its relation to the vector field is studied in the next section.
The union of $S(x_0)$ forms a fibre bundle
\[
\mathcal{S}:=\bigcup_{x_0\in \mathbb{C}\setminus\{1\}} \mathcal{S}(x_0).
\]
From the detailed charts in appendix section \ref{s:DetailsC}, one sees that for $x_0\neq 0$, the Painlev\'e vector field is ``vertical'' or tangent to the lines $\mathcal H^*$, $\mathcal D_\infty^{**}$, $\mathcal D_0^*$, $\mathcal D_x^*$, $\mathcal D_1^*$, which each have self-intersection $-2$. For this reason, such curves are often referred to as \lq\lq vertical leaves\rq\rq\ in Okamoto's construction.  For each $x=x_0\not=0, 1$, we define 
$$\mathcal{I}(x_0):=\mathcal H^*\cup \mathcal D_\infty^{**}\cup \mathcal D_0^*\cup \mathcal D_x^*\cup \mathcal D_1^*$$
the \emph{infinity set}, corresponding to the black part of the diagram shown in Figure \ref{fig:inter}. Okamoto's space of initial values for $x_0\not=0, 1$ is $\mathrm{Oka}(x_0):=\mathcal{S}(x_0)\setminus \mathcal{I}(x_0)$.

Note that the strict transforms $\mathcal H^*$, $\mathcal D_\infty^{**}$, $\mathcal D_0^*$, $\mathcal D_x^*$, $\mathcal D_1^*$ each have  self-intersection $-2$.
The corresponding Dynkin diagram reflecting their intersections, given in Figure \ref{fig:dynkin}, is equivalent to that for $D_4^{(1)}$.
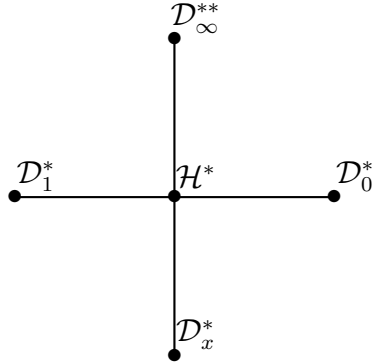
\begin{figure}[H]
\begin{center}
\begin{tikzpicture}[scale=0.7]
 \draw (0,0) node{$\bullet$}; \draw (0.4,0.4) node{$\mathcal{H}^*$};
\draw[thick] (0,0)--(3,0);
 \draw (3,0) node{$\bullet$}; \draw (3.4,0.4) node{$\mathcal D_0^*$};
 \draw[thick] (-3,0)--(0,0);
 \draw (-3,0) node{$\bullet$}; \draw (-2.6,0.4) node{$\mathcal D_1^*$};
 \draw[thick] (0,0)--(0,3);
 \draw (0,3) node{$\bullet$}; \draw (0.4,3.4) node{$\mathcal D_\infty^{**}$};
 \draw[thick] (0,0)--(0,-3);
 \draw (0,-3) node{$\bullet$}; \draw (0.4,-2.6) node{$\mathcal D_x^*$};
     
         \end{tikzpicture}\caption[The Dynkin diagram]{The Dynkin diagram with nodes representing $(-2)$-lines in Okamoto's space, for $x\neq 0,1$, is equivalent to that for $D_4^{(1)}$.}\label{fig:dynkin}
\end{center}
\end{figure}

\section{The vector field in the limit space}\label{s:S0}

 When $x\to0$ (or more precisely $\Re (t)\to-\infty$), we get the autonomous limiting system
 \begin{align}
\label{eq:autouv}&\begin{cases}
   \dot u &= -u \{ (u-1)\left(2uv-2\theta+\theta_\infty \right)-\theta_1\},\\
   \dot v &=  uv\left((3u-2)v-4\theta+2\theta_\infty\right)+(2\theta-\theta_\infty-\theta_1)v   + \theta (\theta-\theta_\infty),
   \end{cases}\\
           \nonumber \textrm{where}\ & \dot u=\partial E_0/\partial v, \dot v=-\,\partial E_0/\partial u, \textrm{with}\\
  \nonumber &E_0:=-u \{ (u-1)v\left(uv-2\theta+\theta_\infty\right)-\theta_1v  + \theta (\theta-\theta_\infty)  \}.
\end{align}
We can solve this Hamiltonian system completely: if the values of the $\theta_i$'s are generic, {i.e. if they belong to an open dense subset of the set of all possible values of those parameters}, we obtain a one-parameter family of solutions that lies on the line $\{u=0\}$. Again for generic $\theta_i$ values, no solutions lie on the line $\{u=1\}$.  
Let us assume $u\not \equiv 0,1$. Then the Hamiltonian system \eqref{eq:autouv} yields
\begin{align*}
  v =&- \frac{\dot u}{2u^2(u-1)}+\frac{\theta_1}{2u(u-1)}+\frac{\theta+\overline{\theta }}{2u},
\end{align*}
leading to
       \begin{align*}
  \ddot u =& \frac{3u-2}{2u(u-1)}\dot u^2+\frac{\theta_\infty^2}{2} (u-1)u^2-\frac{\theta_1^2u^2}{2(u-1)} .
\end{align*}
Note that if $(u(t),v(t))$ is a solution of the autonomous Hamiltonian system, then $\eta_0:=E_0(u(t),v(t))$ is constant. Setting $u_3:=1/u$, the autonomous differential equation for $u$ yields
 $(\dot u_3)^2=\alpha u_3^2+\beta u_3+\gamma$, where 
 $\alpha =  4\eta_0+( \theta+\overline{\theta}-\theta_1)^2\, , \beta=   \theta_1^2-\theta_\infty^2 -\alpha $ and $\gamma=\theta_\infty^2$.
 This integrates as $$u_3(t)=\left\{\begin{array}{lll}
 \frac{\sqrt{4\gamma \alpha-\beta^2}\mathrm{sinh}(\sqrt{\alpha}\, t+\eta_1)-\beta}{2\alpha} &\mathrm{if} & \alpha \neq 0\vspace{.3cm}\\
\left(\frac{\sqrt{ \beta}}{2}\, t+\eta_1\right)^2-\frac{\gamma }{ \beta}  &\mathrm{if} & \alpha = 0\, , \end{array} \right.$$ where $\eta_1$ is an arbitrary integration constant. 
In particular, we find the following list of equilibrium points (trajectories reduced to one point) of the autonomous Hamiltonian system for generic $\theta_i$'s:
 $$\left\{\begin{array}{rcllll}
(u,v)&=& \left(\frac{\theta_\infty-\theta_1}{ \theta_\infty}\, , \frac{\theta_\infty\overline{\theta}}{\theta_\infty-\theta_1}\right) &\mathrm{for} & \eta_0=\frac{(\theta_1-\theta_\infty)^2-(\theta_0+\theta_x-1)}{4}\vspace{.2cm}\\
(u,v)&=& \left(\frac{\theta_\infty+\theta_1}{ \theta_\infty}\, , \frac{\theta_\infty \theta}{\theta_\infty+\theta_1}\right) &\mathrm{for} & \eta_0=\frac{(\theta_1+\theta_\infty)^2-(\theta_0+\theta_x-1)}{4} \, . \end{array} \right.$$

We may now compactify the space of initial values $\C^2_{u,v}$ to $\mathcal{S}(0)$. Figure \ref{fig:inter0} contains a schematic drawing of how the limits of the components of the infinity set and the exceptional lines arrange in this space. Here as usual, red lines have self-intersection $(-1)$. The notable differences with the configuration in $\mathcal{S}(x)$ with $x=0$ are the following, where we use the superscript ``$0$'' when convenient to indicate particularities for the $x=0$ case:
\begin{itemize}
\item
 After blow-up of $\beta_0:(u_1,v_1)=(0,0)$, the point $\beta_x^0:(u_{01},v_{01})=(0,0)$ which has to be blown up lies on the intersection of (the strict transform) of $\mathcal{H}$ and the exceptional line $\mathcal{D}_0=\mathcal{D}_0^0$.
\item As a result, we still have $\mathcal{H}^*=\{\tilde v_1=0\}\cup \{\tilde v_2=0\}$, but $\mathcal{D}_0^0=\mathcal{D}_0^{0*}\cup \mathcal{D}_x^0$. 
\item Moreover, the point $\gamma_x^0:(u_{x2},v_{x2})=(0,0)$ now corresponds to the intersection $\mathcal{D}_0^{0*}\cap \mathcal{D}_x^0$. 
\item As a result, we still have $\mathcal{D}_x^{0*}=\{v_{x3}=0\}\cup \{\tilde u_1=0\}$, but $\mathcal{D}_0^{0*}=\mathcal{D}_0^{0**}\cup \mathcal{E}_x^0$, where $\mathcal{E}_x^0:\{u_{x3}=0\}\cup \{v_{x4}=0\} . $
\item Finally, the blow-up of $\gamma_0^0:(u_{02},v_{02})=(\theta_0,0)$ yields the strict transform $\mathcal{D}_0^{0***}:\{v_{03}=0\}\cup \{u_{x4}=0\}$ of self-intersection $(-4)$.
\end{itemize}

\begin{figure}[H]
\begin{center}
\begin{tikzpicture}[scale=0.5]
\draw[thick] (-6.5,4.3)--(6.5,4.3);
  \draw (6.9,4.3) node {$ {\mathcal H}^*$} ;    \draw[gray] (-7.1,4.3) node {$(-2)$} ;

              \draw[red!50, thick] (-14,2.3)--(-7.1,2.3);              \draw[red!50] (-14.4,2.3)  node  {$\mathcal{V}_0^*$} ; 
            \draw[red!50, thick] (-12,1.3)--(-7.1,1.3);            \draw[red!50] (-12.4,1.3)  node  {$\mathcal{E}_0$} ; 
               \draw[red!50,thick]  (-12,-4)--(-2.2,-4);          \draw[red!50] (-12.4,-4)  node  {$\mathcal{E}_x^0$} ; 
                 
                         \draw[thick] (-9,-4.5)--(-9,3.3);
     \draw  (-9,3.6) node {${\mathcal{D}}_0^{0***}$} ;\draw[gray] (-9.2,-4.9) node {$(-4)$} ;

  \draw[blue!50, thick, ->] (-13.5,2.3)--(-13.5,1);
            \draw[blue!50, thick, ->] (-13.5,2.3)--(-12.2,2.3);
               \draw[blue!50] (-12.2 , 2.6)  node  {$v_{0}$} ;
         \draw[blue!50] (-13.5 , 0.7)  node  {$  u_{0}$} ;

                        \draw[red!50, thick] (-0,1.5)--(4.8,1.5);            \draw[red!50] (-0.4,1.5)  node  {$\mathcal{V}_1^*$} ;
           \draw[red!50, thick] (-0,-2)--(4.8,-2);            \draw[red!50] (-0.4,-2)  node  {$\mathcal{E}_1$} ;

 \draw[red!50, thick] (3.5,-1.5)--(8.3,-1.5);            \draw[red!50] (8.9,-1.5)  node  {$\mathcal{E}_{\infty}$} ;
   \draw[red!50, thick] (3.5,-3.5)--(8.3,-3.5);            \draw[red!50] (8.9,-3.5)  node  {$\mathcal{E}_{\infty}^-$} ;

\draw[thick] (5.5,-4.5)--(5.5,5.3);
  \draw (5.5,5.6) node {$\mathcal{D}_\infty^{**}$} ;\draw[gray] (5.5,-4.9) node {$(-2)$} ;
  \draw[thick] (1.9,-4.5)--(1.9,5.3);
  \draw  (1.9,5.6) node {${\mathcal{D}}_1^*$} ;\draw[gray] (1.9,-4.9) node {$(-2)$} ;
         \draw[thick] (-5.5,-4.5)--(-5.5,5.3);
  \draw  (-5.5,5.6) node {${\mathcal{D}}_x^{0*}$} ;\draw[gray] (-5.5,-4.9) node {$(-2)$} ;
      
   \draw[blue!50, thick, ->] (-5.5,-4)--(-5.5,-2.7);
            \draw[blue!50, thick, ->] (-5.5,-4)--(-4.2,-4);
               \draw[blue!50] (-3.7 , -3.7)  node  {${v}_{x3} $} ;
         \draw[blue!50] (-5 , -2.4)  node  {${u}_{x3}$} ;

\draw[blue!50, thick, ->] (-9,-4)--(-9,-2.7);
            \draw[blue!50, thick, ->] (-9,-4)--(-7.7,-4);
               \draw[blue!50] (-7.3 , -3.7)  node  {${u}_{x4} $} ;
         \draw[blue!50] (-8.5 , -2.4)  node  {${v}_{x4}$} ;

          \draw[blue!50, thick, ->] (-5.5,4.3)--(-5.5,3);
            \draw[blue!50, thick, ->] (-5.5,4.3)--(-4.2,4.3);
               \draw[blue!50] (-3.8 , 4.7)  node  {$\tilde u_1$} ;
         \draw[blue!50] (-5.9 , 2.7)  node  {$\tilde v_1$} ;

          \draw[blue!50, thick, ->] (5.5,4.3)--(5.5,3);
            \draw[blue!50, thick, ->] (5.5,4.3)--(4.2,4.3);
               \draw[blue!50] (3.8 , 4.7)  node  {$\tilde u_2$} ;
         \draw[blue!50] (5.9 , 2.7)  node  {$\tilde v_2$} ;

          \draw[blue!50, thick, ->] (8.1,-1.5)--(8.1,-0.2);
            \draw[blue!50, thick, ->] (8.1,-1.5)--(6.8,-1.5);
               \draw[blue!50] (7 , -1.1)  node  {$\tilde{v}_{3} $} ;
         \draw[blue!50] (8.5 , -0.1)  node  {$\tilde{u}_{3}$} ;
         
    \draw[blue!50, thick, ->] (5.5,-3.5)--(5.5,-2.2);
            \draw[blue!50, thick, ->] (5.5,-3.5)--(6.8,-3.5);
               \draw[blue!50] (7.4 , -3.8)  node  {${u}_{\infty 4} $} ;
         \draw[blue!50] (6.1 , -2.1)  node  {${v}_{\infty 4}$} ;
         
               \draw[blue!50, thick, ->] (1.9,-2 )--(1.9,-3.3);
            \draw[blue!50, thick, ->] (1.9,-2 )--(3.2,-2);
               \draw[blue!50] (3.5 , -2.3)  node  {$v_{1 3}$} ;
         \draw[blue!50] (2.4 , -3.5)  node  {$u_{1 3}$} ;

            \draw[blue!50, thick, ->] (-9,1.3)--(-9,0);
            \draw[blue!50, thick, ->] (-9,1.3)--(-7.7,1.3);
               \draw[blue!50] (-7.3 , 0.9)  node  {$v_{03}$} ;
         \draw[blue!50] (-8.5 , -.3)  node  {$u_{03}$} ;

  \end{tikzpicture}
\caption[The limit space]{The limit space for $x=0$ of the space of initial values for $x\neq 0$.}\label{fig:inter0}
\end{center}
\end{figure}
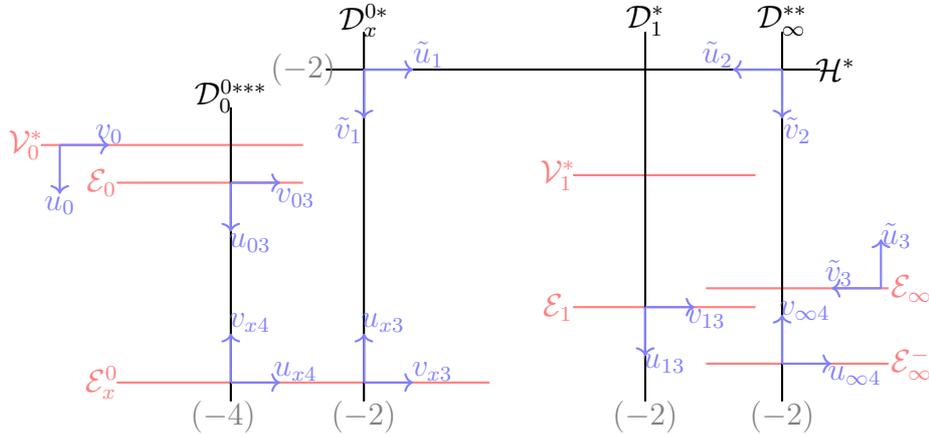

The resulting autonomous vector field in $\mathcal{S}(0)$ is obtained from the one in appendix section \ref{s:DetailsC} by systematically setting $x=0$. 
For convenience of the reader, the formulae are given in appendix section \ref{s:Details0B}.

We use the term \emph{elliptic base points} for a point where the induced autonomous vector field in $\mathcal{S}(0)$ is undetermined. There is one such elliptic base point, given by 
$$\mathfrak{u}: (u_{x4},v_{x4})=(0,0) \in \mathcal{D}_0^{***}\cap \mathcal{E}_x^0\, .$$
This elliptic base point cannot be resolved by blow-ups!\footnote{Moreover, when following through the process of Okamoto desingularization, one realizes that $\gamma_0^0:(u_{02},v_{02})=(\theta_0,0)$ was in fact not an elliptic base point. } Note however that the autonomous  energy function $E_0$ is well-defined (and infinite) at $\mathfrak{u}$.

Let us denote $\mathcal{I}^0$ the subset of $\mathcal{S}(0)$ where the autonomous vector field is infinite or undefined. We find
$$\mathcal{I}^0=\mathcal{E}_x^{0}\cup \mathcal{D}_x^{0*}\cup\mathcal{H}^*\cup \mathcal{D}_1^*\cup \mathcal{D}_\infty^{**}.$$
This set corresponds precisely to the points where the autonomous  energy function $E_0$ is infinite.  As explained above, we have $$\lim_{x\to 0} \mathcal{I}(x)=\mathcal{D}_0^{0***}\cup \mathcal{I}^0\, .$$

In order to complete the description of the autonomous vector field in $\mathcal{S}(0)\setminus \{\mathfrak{u} \}$, it remains to investigate  trajectories that might be contained in $\mathcal{S}(0)\setminus \left(\mathcal{I}^0\cup \C^2_{u_0,v_0}  \right)$. We find the following, where as usual we assume the values of the $\theta_i$'s to be generic:
\begin{itemize}
\item There is no trajectory contained in any of the following: 
\begin{itemize}
\item $\mathcal{E}_\infty\setminus \mathcal{D}_\infty^{**}:\{\tilde u_3=0\}$, \item $\mathcal{E}_\infty^-\setminus \mathcal{D}_\infty^{**}:\{u_{\infty 3}=0\}$, \item $\mathcal{E}_1\setminus \mathcal{D}_1^*:\{v_{14}=0\}$.
\end{itemize}
\item The line  $ \mathcal{E}_0^{0}\setminus \mathcal{D}_0^{0***}:\{v_{04}=0\}$ is the union of one trajectory and one equilibrium point, given by 
$$u_{04}=-\frac{ ( \theta_0-\theta)( \theta_0-\overline{\theta}) \theta_0}{2\theta_0+\theta_1-(\theta+\overline{\theta})}$$
with energy $\eta_0=- (\theta_x-1)\theta_0$.
\item Every point of $\mathcal{D}_0^{0***}\setminus \mathcal{E}_x^{0}:\{v_{03}=0\}$ is an equilibrium point of the autonomous vector field, with energy
$\eta_0 =(u_{03}-(\theta_x-1))(u_{03}+\theta_0)$.
\end{itemize}

\section{Movable singularities in the Okamoto's space}\label{s:movable}
In this section, we will consider neighbourhoods of exceptional lines where the Painlev\'e vector field becomes unbounded.
The construction given in Appendix \ref{app:resolution} shows that these are given by the lines $\mathcal{E}_0$, $\mathcal{E}_x$, $\mathcal{E}_1$, $\mathcal{E}_{\infty}$, $\mathcal{E}_{\infty}^-$.

\subsection{Points where $u$ has a zero and $v$ a pole.}
The set $\mathcal{E}_0\setminus\mathcal{I}$ is given by $\{v_{04}=0\}$, in the $(u_{04},v_{04})$ chart, see Section \ref{app:0404}.
Suppose $u_{04}(\tau)=B$, $v_{04}(\tau)=0$, for some complex numbers $\tau$, $B$.
From the system of differential equations in Section \ref{app:0404}, we get:
$$
\begin{aligned}
v_{04}(t)
=\ &
\frac{e^\tau}{e^\tau-1}(t-\tau)-e^{\tau}\frac{\theta_0(1+e^{\tau}) -\theta_x -\theta_1e^{\tau}+2}
{2(e^{\tau}-1)^2}
(t-\tau)^2
\\
&
+\left(
\frac{2B e^{2\tau}(1+e^{\tau})}{3(1-e^{\tau})^3}
+
F_1(\tau)
\right)
(t-\tau)^3
+
O((t-\tau)^4),
\end{aligned}
$$
with
$$
\begin{aligned}
F_1(\tau)=\ &
-
\frac{e^{\tau}}{6(1-e^{\tau})^3}
\Bigl[
3( \theta_0-\theta_x+1) + ( \theta_0-\theta_x)^2   + 3 e^{\tau} 
- 8 \theta_0 (\theta_x+\theta_1) e^{\tau} \\
&\qquad\qquad  \qquad +13\theta_0e^{\tau} - 2 \theta_x e^{\tau} 
+\theta_1 ( 2 \theta_x-5  ) e^{\tau} \\ &\qquad\qquad  \qquad +2(\theta+\theta_0)(\bar \theta+\theta_0) e^{\tau} + 
( \theta_0-\theta_1)^2 e^{2\tau}  
\Bigr]\, .
\end{aligned}
$$
Since (see Section \ref{app:0404})
$$
u=u_{04}v_{04}^2+\theta_0v_{04},
\quad
v=\frac{1}{v_{04}},
$$
we obtain the series expansions for $(u,v)$:
$$
\left\{\begin{array}{rcl}
u(t)&=&\ 
\frac{\theta_0 e^\tau}{e^\tau-1}(t-\tau)
+
\left(
B\frac{e^{2\tau}}{(e^{\tau}-1)^2}-
\theta_0e^{\tau}\frac{\theta_0(1+e^{\tau})-\theta_x -\theta_1e^{\tau}+2}{2(e^{\tau}-1)^2}
\right)
(t-\tau)^2
+
O((t-\tau)^3)\, ,
\vspace{.2cm}\\
v(t)&=&\  \frac{e^{\tau}-1}{e^{\tau}}\frac{1}{t-\tau} 
+
\frac{ \theta_0(1+e^{\tau})-\theta_x -\theta_1e^{\tau}+2} {2e^{\tau}}
+\left(\frac{2B(e^{\tau}+1)}{3(e^{\tau}-1)}+F_2(t)\right)(t-\tau)
+O((t-\tau)^2)\, ,
\end{array}\right.
$$
with
$$
\begin{aligned}
F_2(\tau)=\ &\frac{1}{12e^{\tau}(e^{\tau}-1)}
\Bigl[
(\theta_0-\theta_x +3)^2 -3         -4( \theta-\theta_0)(\bar  \theta-\theta_0) e^{\tau}  +4\theta_0^2 e^{\tau} 
\\
&  \qquad \qquad\qquad
- 2(2\theta_0+1)^2 e^{\tau}   + 2(\theta_0+\theta_x-1 )(\theta_0+ \theta_1+2) e^{\tau} \\
&  \qquad \qquad\qquad +  (\theta_0-\theta_1)^2 e^{2\tau} 
\Bigr].
\end{aligned}
$$
Note that $u$ has a simple zero at $t=\tau$ and $v$ a simple pole with residue $1-e^{-\tau}$.

\subsection{Points where $u=1$ and $v$ has a pole.}
The set $\mathcal{E}_1\setminus\mathcal{I}$ is given by $\{v_{14}=0\}$, in the $(u_{14},v_{14})$ chart, see Section \ref{app:1414}.
Suppose $u_{14}(\tau)=B$, $v_{14}(\tau)=0$, for some complex numbers $\tau$, $B$.
From the system of differential equations in Section \ref{app:1414}, we get:
$$
\begin{aligned}
v_{14}(t)
=\ &
-(t-\tau)+\frac{\theta+\bar\theta+\theta_1(e^{\tau}-3)
	-
	\theta_0e^{\tau}}{2(e^{\tau}-1)}(t-\tau)^2
\\&
+
\left(
\frac{2B(e^{\tau}-2)}{3(1-e^{\tau})}+F_3(\tau)
\right)
(t-\tau)^3
+O((t-\tau)^4),
\end{aligned}
$$
with
$$
\begin{aligned}
F_3(\tau)=-\frac{1}{6(e^{\tau}-1)^2}&\Bigl[
( \theta+\bar\theta-5\theta_1)^2 + 2( \theta \bar\theta -5 \theta_1^2)+ ( \theta_0 -\theta_1)^2e^{2\tau}
\\&\  
-2( \theta_0( \theta+ \bar\theta+2) + \theta \bar\theta) e^{\tau} \\
& \ + 2(3\theta_1+1)( \theta  + \bar\theta +   \theta_0-2\theta_1  ) e^{\tau}   
\ \Bigr] .
\end{aligned}
$$

Since (see Section \ref{app:1414})
$$
u=u_{14}v_{14}^2+\theta_1v_{14}+1,
\quad
v=\frac{1}{v_{14}},
$$
we obtain the series expansions for $(u,v)$:
$$
\left\{\begin{array}{rcl}
u(t)&=&\ 1- \theta_1(t-\tau) +
\left(B+
\theta_1
\frac{\theta+\bar\theta+\theta_1(e^{\tau}-3)
	-
	\theta_0e^{\tau}}{2(e^{\tau}-1)}
\right)(t-\tau)^2
+O((t-\tau)^3),
\vspace{.2cm}\\
v(t)&=&\ -\frac{1}{t-\tau}
-
\frac{\theta+\bar\theta+\theta_1(e^{\tau}-3)-\theta_0e^{\tau}}{2(e^{\tau}-1)}
-
\left(
\frac{2B(e^{\tau}-2)}{3(1-e^{\tau})}+F_4(\tau)
\right)(t-\tau)
+O((t-\tau)^2),
\end{array}\right.
$$
with
$$
F_4(\tau)=F_3(\tau)+\left(\frac{\theta+\bar\theta+\theta_1(e^{\tau}-3)
	-
	\theta_0e^{\tau}}{2(e^{\tau}-1)}\right)^2.
$$
At $t=\tau$, $u-1$ has a simple zero, while $v$ has a simple pole with residue $-1$.

\subsection{Points where $u(\tau)=e^{\tau}$ and $v$ has a pole.}

The set $\mathcal{E}_x\setminus\mathcal{I}$ is given by $\{v_{x4}=0\}$, in the $(u_{x4},v_{x4})$ chart, see Section \ref{app:x4x4}.
Suppose $u_{x4}(\tau)=B$, $v_{x4}(\tau)=0$, for some complex number $B$.
From the system of differential equations in Section \ref{app:x4x4}, we get:
$$
\begin{aligned}
v_{x4}=&\ (t-\tau)
+
\frac{\theta_0 +\theta_x   - (\theta +\bar \theta -\theta_x) e^{\tau} }{2(e^{\tau}-1)}(t-\tau)^2
\\
&+\left(\frac{ B (e^{2\tau}-1 )}{3e^{\tau}(e^{\tau}-1)^2} + F_5(\tau) \right)(t-\tau)^3
+O((t-\tau)^4),
\end{aligned}
$$
with
$$
F_5(\tau)=\frac{
\Bigl[( \theta+\bar \theta-2\theta_x)^2 
	-3  \theta_{x}^2
	+2\theta\bar \theta\Bigr] e^{2\tau}    +( \theta_{0}+2\theta_x)^2- \theta_{x}^2 
	-2 \Bigl[1+\theta_{0}(\theta+\bar \theta)+\theta_{x}  -\theta_1+\theta \bar \theta\Bigr]e^{\tau}
}
{6 (e^{\tau}-1)^2}.
$$
Since, as calculated in Section \ref{app:x4x4}:
$$
u=(u_{x4}v_{x4}+e^{t}\theta_x)v_{x4}+e^{t},
\quad
v=\frac{1}{((u_{x4}v_{x4}+e^{t}\theta_x)v_{x4}+e^{t})v_{x4}},
$$
we obtain:
$$\left\{
\begin{array}{rcl}
u(t)&=&\ e^{\tau}+e^{\tau}(\theta_x+1)(t-\tau)
+\left(B+\frac{e^{\tau}}{2}+ e^{\tau}\theta_x\frac{\theta_0 +\theta_x-2  - (\theta_0+\theta_1-3) e^{\tau}}{2(e^{\tau}-1)}\right)(t-\tau)^2+O((t-\tau)^3)\, ,
\vspace{.2cm}\\
v(t)&=&\ \frac{e^{-\tau}}{t-\tau}
-
e^{-\tau} \frac{\theta_0 -\theta_x-2  - (\theta +\bar \theta -3\theta_x-2) e^{\tau}}{2(e^{\tau}-1)} 
+e^{-\tau}\left(B\cdot\frac{2-4e^{\tau}}{3e^{\tau}(e^{\tau}-1)}+F_6(\tau)\right)(t-\tau) +O((t-\tau)^2)\, ,
\end{array}\right.
$$
with
$$\begin{aligned}
F_6(\tau)=-\frac{1}{12(e^\tau -1)^2}\Bigl\{&\left[\theta_\infty^2+2(\theta+\bar \theta+2\theta_x)(\theta_x-3)+(3\theta_x+5)^2-19 \right]e^{2\tau}\\
& -\left[\theta_\infty^2-(\theta_1-2\theta_x)^2+2(\theta_x-12)+(5\theta_x+1)^2 +(\theta_0-6)^2\right]e^\tau
\\&+\left[8\theta_x^2+(\theta_0-\theta_x-3)^2-2\right]\Bigr\}
\end{aligned}
$$
At $t=\tau$, obviously $v$ has a simple pole with residue $e^{-\tau}$, while $u(t)-e^{\tau}$ has a simple zero.

\subsection{Points where $u$ has a pole and $v$ a zero.}

Such points belong to $\mathcal{E}_{\infty}$ and $\mathcal{E}_{\infty}^-$, which are obtained by blowing up the points $\beta_{\infty}$ and $\beta_{\infty}^-$ on $\mathcal{D}_{\infty}$. 
We notice that the initial vector field (see Section \ref{app:00}) does not depend on the sign of $\theta_{\infty}$. Moreover, if we replace $\theta_\infty$ by $-\theta_\infty$, the roles of  $\beta_{\infty}$ and $\beta_{\infty}^-$  are interchanged. Because of that symmetry, we may consider only the case when the solution intersects $\mathcal{E}_{\infty}$.

The set $\mathcal{E}_{\infty}\setminus\mathcal{I}$ is given by $\{\tilde u_3=0\}$ in the $(\tilde u_3,\tilde v_3)$ chart, see Section \ref{app:3tilde}.
Suppose $\tilde u_3(\tau)=0$, $\tilde v_3(\tau)=B$.
From the differential equations in Section \ref{app:3tilde}, we get:
$$
\begin{aligned}
\tilde u_3(t)=&\ \frac{\theta_{\infty}}{1-e^{\tau}}(t-\tau)
-\theta_{\infty}
\frac{ (\theta_{\infty}+\theta_{x}-2 )e^{\tau}+\theta_{\infty}+\theta_1 +2 B }{2(e^{\tau}-1)^2}(t-\tau)^2
+
O((t-\tau)^3).
\end{aligned}
$$
Then, using the relations:
$$
u=\frac{1}{\tilde u_3},
\quad
v=(\theta -\tilde u_3\tilde v_3)\tilde u_3,
$$
we get:
$$\left\{
\begin{array}{rcl}
u(t)&=&\frac{1-e^{\tau}}{\theta_{\infty}(t-\tau)} 
+
\frac{ (\theta_{\infty}+\theta_{x}-2 )e^{\tau}+\theta_{\infty}+\theta_1 +2 B}{2\theta_\infty}
+
O(t-\tau),
\vspace{.2cm}\\
v(t)&= &
\frac{\theta\theta_{\infty}}{(1-e^{\tau})}(t-\tau)
-
\theta_{\infty}\frac{  \theta(\theta_{\infty}+\theta_{x}-2)e^{\tau}+\theta (\theta_{\infty}  +\theta_1 )+2(\theta+\theta_{\infty})B}{ 2(e^{\tau}-1)^2}
(t-\tau)^2
+O((t-\tau)^3).
\end{array}\right.
$$
Note that $u$ has a simple pole with residue $(1-e^{\tau})/\theta_{\infty}$, while $v$ has a simple zero.
In the intersection points with $\mathcal{E}_{\infty}^-$, $u$ has a simple pole with residue $-(1-e^{\tau})/\theta_{\infty}$ and $v$ a simple zero.

\section{Estimates and the main result}\label{s:estimates}
In this section, we estimate the distance of the vector field from each vertical leaf, for sufficiently small $x$. These estimates allow us to describe the domain of each solution in $\mathcal S\backslash \mathcal I$, which is Okamoto's space of initial values. The results will be used to prove properties of the limit set of each solution. 

 Given  $0<\epsilon<1$, $\epsilon\in\mathbb R$, define a disk $R=R_{\epsilon}=\{x\in\mathbb C \mid |x|<\epsilon\}$. Letting $\xi\in R$, $r<|\xi|<\epsilon$, define a disk $D=D_r(\xi)=\{x\in R \bigm| |x-\xi|<r\}$ that lies in the interior of $R$. Defining a new time coordinate $t=\ln x$, we have corresponding domains $R_t$ and $D_t$ in the $t$-plane.
 Note that $D_t$ is no longer a circular disk, but lies inside a rectangular region in the left half of the $t$-plane, see Figure \ref{fig:Dt}.
 \begin{figure}[h]
\begin{tikzpicture}
 \draw[fill=gray!30](0,0) circle (2);
 \draw[fill=black](0,0) circle (0.05);
 \draw[red,thick](0,0.618)--(0,1.618);
 \draw[red,thick](0.5,0.36)--(1.31,0.95);
 \draw [red,thick,domain=36:90] plot ({1.618*cos(\x)}, {1.618*sin(\x)});
 \draw [red,thick,domain=36:90] plot ({0.618*cos(\x)}, {0.618*sin(\x)});
 \node[left] at (0,0){$0$};
 \node at (-1.6,-1.6) {$R_{\epsilon}$};
 \draw[fill=white](0.5,1) circle (0.5) node {$D_r$};
 \draw[-{Latex[length=3mm, width=2mm]}] (2.5,0) .. controls (3,0.3) .. (4.5, 0) node[pos=0.5,above]{$t=\ln x$};
 \filldraw[gray!30](5,-2) rectangle (9,2);
 \draw(9,-2)--(9,2) node[pos=0.5,right]{$\Re t=\ln\epsilon<0$};
 \filldraw[fill=white,draw=red,thick](8.5,0.5) rectangle (7,1.5); 
 \draw[pattern=north east lines, pattern color=gray!30] (8.5,0.5) rectangle (7,1.5);
 \draw[draw=red,thick](8.5,0.5) rectangle (7,1.5); 
 \end{tikzpicture}
 	 \caption{Domains $R_{\epsilon}$ and $D_r$. $R_{\epsilon}$ is the disk centred at the origin with radius {$0<\epsilon<1$}. $D_r$ is a disk within $R_{\epsilon}$ and does not contain the origin.  The image of $R_{\epsilon}$ by the logarithmic function is the half-plane placed on the left to the boundary $\Re t=\ln\epsilon$.
In the left side of the figure, notice a curvilinear ``quadrangle'' consisting of two circular arcs centred at the origin and two segments placed on radii of $R_{\epsilon}$, such that it is circumscribed about $D_r$.
That ``quadrangle'' is mapped to the red rectangle in the right side, which thus will contain the image of $D_r$.}\label{fig:Dt}
\end{figure}
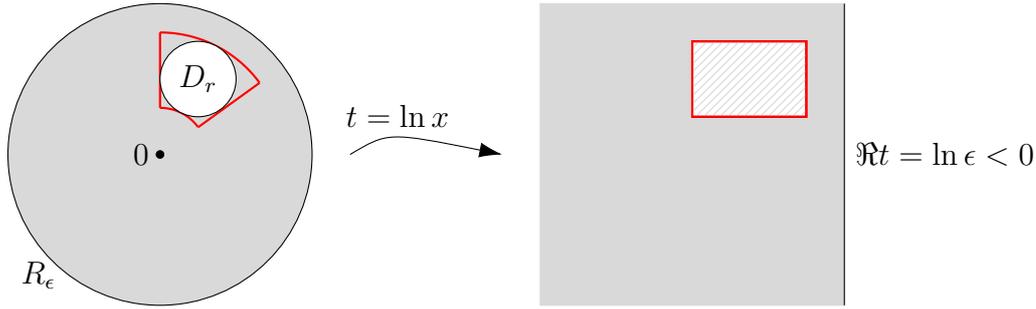

 The reader may find it useful to consult Figure \ref{fig:inter} in the proofs of the following results.
 
\begin{lemma}\label{lem:distancefn}
Given $x\in\mathbb C\backslash \{0\}$, there exists a continuous complex valued function $d$ in a neighbourhood of the infinity set $\mathcal I$ in Okamoto's space, such that 
\[
d=\begin{cases} 
	\frac{1}{E}& \ \textrm{in a neighbourhood of}\  \mathcal{H}^*\cup \mathcal{D}_1^*\cup \mathcal{D}_\infty^{**}  \setminus ( \mathcal{D}_0^*\cup\mathcal{D}_x^*) ,\\	
	-\frac{x-1}{x}\omega_{03}& \ \textrm{in a neighbourhood of}\ \mathcal{D}_0^*\backslash \mathcal{H}^*,\\
 	-\omega_{x3}& \ \textrm{in a neighbourhood of}\ \mathcal{D}_x^*\backslash \mathcal{H}^*.\\
	\end{cases}	
\]
Note that $d$ vanishes on $\mathcal I$ and that $d$ is not defined at $x=0$.
\end{lemma}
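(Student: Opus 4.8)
The plan is to obtain $d$ by combining the three displayed local expressions, so the proof has three ingredients: (a) each expression is holomorphic, and vanishes along the components of $\mathcal{I}$ it is attached to, on a neighbourhood of those components; (b) the three can be glued into a single continuous function on a full neighbourhood of $\mathcal{I}$; and (c) the resulting object is singular at $x=0$. All computations are carried out in the affine charts of the resolved surface of Appendix~\ref{app:resolution}, in which $E$ and the Jacobians $\omega_{mn}$ are explicit rational functions of $(u,v)$.

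For (a): reading off the formulas of Appendix section~\ref{s:DetailsC}, on a neighbourhood of $\mathcal{H}^*=\{\tilde v_1=0\}\cup\{\tilde v_2=0\}$, of $\mathcal{D}_1^*=\{v_{13}=0\}\cup\{\tilde u_1=1\}$ and of $\mathcal{D}_\infty^{**}=\{\tilde u_2=0\}\cup\{u_{\infty 4}=0\}$, with neighbourhoods of $\mathcal{D}_0^*$ and $\mathcal{D}_x^*$ deleted, $E$ has a simple pole along those three curves and is otherwise finite and non-zero, so $1/E$ is holomorphic there and vanishes on $\mathcal{H}^*\cup\mathcal{D}_1^*\cup\mathcal{D}_\infty^{**}$. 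The other two expressions are treated by computing the Jacobians from the transition maps of \S~\ref{s:okamoto}: since $u_{03}=uv-\theta_0$ and $v_{03}=\bigl(v(uv-\theta_0)\bigr)^{-1}$, one gets $\omega_{03}=-\bigl(v\,u_{03}\bigr)^{-1}=-v_{03}$, whence $-\frac{x-1}{x}\,\omega_{03}=\frac{x-1}{x}\,v_{03}$ is holomorphic on a neighbourhood of $\mathcal{D}_0^*=\{v_{03}=0\}$ (in the $(\tilde u_1,\tilde v_1)$ chart one checks that $v_{03}$ remains holomorphic up to the intersection points of $\mathcal{D}_0^*$ with $\mathcal{H}^*$ and with $\mathcal{E}_0$) and vanishes on $\mathcal{D}_0^*$, the prefactor being finite for $x\neq0,1$. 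Similarly $u_{x3}=uv(u-x)-x(\theta_x-1)$ and $v_{x3}=\bigl(uv\,u_{x3}\bigr)^{-1}$ give $\omega_{x3}=-\bigl(v\,u_{x3}\bigr)^{-1}=-u\,v_{x3}$, where in this chart $u=x+v_{x3}u_{x3}\bigl(u_{x3}+x(\theta_x-1)\bigr)$ is holomorphic and equals $x$ along $\mathcal{D}_x^*=\{v_{x3}=0\}$; since $x\neq0$, $-\omega_{x3}=u\,v_{x3}$ is holomorphic near $\mathcal{D}_x^*$ and vanishes there. In particular all three expressions vanish on $\mathcal{I}$.

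For (b): the three prescribed regions cover a neighbourhood of $\mathcal{I}$, and by the $D_4^{(1)}$ configuration of Figure~\ref{fig:dynkin} the only competition between prescriptions occurs in thin strips near the two points $\mathcal{H}^*\cap\mathcal{D}_0^*$ and $\mathcal{H}^*\cap\mathcal{D}_x^*$. One may now either verify directly, in the $(\tilde u_1,\tilde v_1)$ chart where $\mathcal{H}^*=\{\tilde v_1=0\}$, $\mathcal{D}_0^*=\{\tilde u_1=0\}$, $\mathcal{D}_x^*=\{\tilde u_1=x\}$, that on these strips the competing expressions agree (equivalently, that the polar part of $E$ along $\mathcal{D}_0^*$ equals $\frac{x}{x-1}\,v_{03}^{-1}$ and along $\mathcal{D}_x^*$ equals $(u\,v_{x3})^{-1}$), or, more economically, blend the three expressions by a partition of unity subordinate to the three regions; either way one gets a continuous function $d$ that agrees with each of the three prescriptions on the part of $\mathcal{I}$ where the corresponding region alone is relevant, hence with the displayed piecewise description, and that vanishes on $\mathcal{I}$ because each of the three expressions does. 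Finally, for (c), each prescription involves $1/x$, $1/(x-1)$ or the $t=\ln x$ dependence of $E$, all singular as $x\to0$ (and the blow-up tower over $x=0$ is combinatorially different, cf.\ Figure~\ref{fig:inter0}), so $d$ does not extend to $x=0$.

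The step I expect to be the real work is the control of the polar behaviour of $E$ near $\mathcal{I}$: identifying that $E$ has a simple pole along $\mathcal{H}^*$, $\mathcal{D}_1^*$, $\mathcal{D}_\infty^{**}$, and, if one takes the direct route in (b) rather than a partition of unity, that its polar parts along $\mathcal{D}_0^*$ and $\mathcal{D}_x^*$ are exactly $\frac{x}{x-1}\,v_{03}^{-1}$ and $(u\,v_{x3})^{-1}$; this forces one to push the lengthy chart formulas of Appendix~\ref{app:resolution} through the birational transition maps near the two corner points, constants included. The Jacobian identities $\omega_{03}=-v_{03}$ and $\omega_{x3}=-u\,v_{x3}$, by contrast, are short one-line computations.
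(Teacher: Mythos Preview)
Your proposal is correct and follows essentially the same route as the paper. The paper's own proof is much terser: it simply reads off from the chart formulas of Appendix~\ref{s:DetailsC} that $E\omega_{03}\sim -x/(x-1)$ as $v_{03}\to 0$ and $E\omega_{x3}\sim -1-x/u_{x3}$ as $v_{x3}\to 0$, so that near the corner $\mathcal{H}^*\cap\mathcal{D}_x^*$ (where $u_{x3}\to\infty$) one has $-\omega_{x3}\sim 1/E$; this is exactly your ``direct route'' in (b), and the paper leaves the analogous matching at $\mathcal{H}^*\cap\mathcal{D}_0^*$ and your items (a), (c) implicit. One small remark: neither your partition-of-unity route nor the paper's asymptotic computation produces a $d$ that is \emph{literally} equal to each displayed expression on an open set (the ratios are only $1+o(1)$ as one approaches $\mathcal{I}$), but this is all that is used downstream---every subsequent appearance of $d$ in \S\ref{s:estimates} is up to a bounded multiplicative error---so the imprecision is harmless.
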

\begin{proof}
From Section \ref{app:0303}, the set $\mathcal{D}_0^*\backslash \mathcal{H}^*$ is given by $v_{03}=0$ in the $(u_{03},v_{03})$ chart.
As we approach $\mathcal{D}_0^*$, we have:
$$
E\omega_{03}\sim-\frac{x}{x-1},\quad \text{as}\quad v_{03}\to0.
$$
 
From Section \ref{app:x3x3}, the set $\mathcal{D}_x^*\backslash \mathcal{H}^*$ is given by $v_{x3}=0$ in the $(u_{x3},v_{x3})$ chart.
As we approach $\mathcal{D}_x^*$, we have:
$$
E\omega_{x3}\sim-1-\frac{x}{u_{x3}}\quad\text{as}\quad v_{x3}\to0.
$$ 

Thus, as we approach $\mathcal{H}^*$: $u_{x3}\to\infty$, we have that $-\omega_{x3}\sim1/E$.

 \end{proof}

\begin{lemma}\label{lemmaDinfty}
For every $\epsilon>0$, there exists a neighbourhood $U$ of $\mathcal{D}_{\infty}^{**}$ such that
$$
\left|
\frac{\dot{E}}{E}+\frac{e^{t}}{e^{t}-1}
\right|
<\epsilon.
$$
\end{lemma}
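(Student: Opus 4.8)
The plan is to compute $\dot E / E$ directly in a coordinate chart covering a neighbourhood of $\mathcal{D}_\infty^{**}$, using the explicit vector field from the appendix. The key observation is that $E$ restricted to a neighbourhood of $\mathcal{D}_\infty^{**}$ is essentially a rational function of the local coordinates whose leading behaviour is governed by a single monomial, so that $\dot E/E$ has a dominant term equal to the logarithmic derivative of that monomial, plus lower-order corrections that vanish as we approach the line.

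First I would identify the relevant chart. From \S\ref{s:S0} and Figure \ref{fig:inter}, $\mathcal{D}_\infty^{**}$ is given by $\{\tilde u_2 = 0\}\cup\{u_{\infty 4}=0\}$; it suffices to work in, say, the $(u_{\infty 4}, v_{\infty 4})$ chart (and symmetrically check the $\tilde u_2$ chart near the other end), where the line is $\{u_{\infty 4}=0\}$. Using the chart formulas and the expression for $E$ pulled back through the successive blow-ups, one checks that $E \sim c(t)/u_{\infty 4}$ (or an analogous single power of the defining coordinate) as $u_{\infty 4}\to 0$, where $c(t)$ is a nonvanishing holomorphic function of $t$ on $R_t$; the point is that the function $E$ has a simple pole along $\mathcal{D}_\infty^{**}$ with a $t$-dependent but otherwise controlled coefficient. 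Then, since $E$ is (a branch of) a first integral--like quantity only in the autonomous limit, one computes $\dot E$ along the flow using $\dot E = \partial_t E|_{\text{coords fixed}} + \partial_{u}E\,\dot u + \partial_v E\,\dot v$ in whichever chart is used. Because the vector field is tangent to $\mathcal{D}_\infty^{**}$ (this is exactly the ``vertical leaf'' statement noted after Figure \ref{fig:dynkin}), the terms $\partial E/\partial(\text{coord})\cdot\dot{(\text{coord})}$ that would blow up actually stay bounded, and the dominant contribution to $\dot E/E$ comes from $\partial_t \log E = \partial_t \log c(t) = \partial_t \log(1/(e^t-1))\cdot(\text{something})$. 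Tracking the explicit $x = e^t$ dependence of the Hamiltonian $E$ — which enters only through the overall factor $1/(e^t-1)$ and through $u - e^t$ terms that are subdominant near $\mathcal{D}_\infty^{**}$ — yields precisely $\dot E/E = -\,e^t/(e^t-1) + (\text{terms vanishing on } \mathcal{D}_\infty^{**})$.

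Concretely, the steps are: (i) write down $E$ and the vector field in the $(u_{\infty 4},v_{\infty 4})$ chart by specializing the appendix formulas; (ii) extract the leading-order factorization $E = \dfrac{g(u_{\infty 4}, v_{\infty 4}, t)}{u_{\infty 4}}$ with $g$ holomorphic and nonvanishing on $\{u_{\infty 4}=0\}$; (iii) differentiate: $\dfrac{\dot E}{E} = \dfrac{\dot g}{g} - \dfrac{\dot u_{\infty 4}}{u_{\infty 4}}$, and observe that $\dot u_{\infty 4}$ vanishes to first order in $u_{\infty 4}$ on the line (tangency), so $\dot u_{\infty 4}/u_{\infty 4}$ is bounded and in fact $\to 0$ as we also approach the part of the line away from other exceptional curves, or is absorbed into the error; (iv) compute $\dot g/g$, separating the explicit $\partial_t$ piece (which produces the $-e^t/(e^t-1)$, coming from the $1/(e^t-1)$ prefactor in the Hamiltonian) from the piece proportional to the vector field components (which vanishes on the line since $g$ is regular there); (v) conclude that on a suitable neighbourhood $U$ of $\mathcal{D}_\infty^{**}$ the remaining terms are smaller than any prescribed $\epsilon$, shrinking $U$ if necessary, and check the overlap with the $\tilde u_2$-chart end of the line gives the same estimate.

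The main obstacle I anticipate is bookkeeping: correctly pulling $E$ back through the chain of eight blow-ups into the $(u_{\infty 4},v_{\infty 4})$ and $(\tilde u_2,\tilde v_2)$ charts, and verifying that every potentially-singular term in $\dot E/E$ other than the claimed $-e^t/(e^t-1)$ genuinely vanishes on $\mathcal{D}_\infty^{**}$ rather than merely being bounded — in particular handling the neighbourhoods of the intersection points $\mathcal{D}_\infty^{**}\cap\mathcal{H}^*$, $\mathcal{D}_\infty^{**}\cap\mathcal{E}_\infty$, and $\mathcal{D}_\infty^{**}\cap\mathcal{E}_\infty^-$ where $g$ could degenerate, so that the neighbourhood $U$ may need to be taken as a union of chart-neighbourhoods with the estimate checked separately near each triple point. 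Once the factorization $E\sim c(t)/(\text{defining coordinate})$ with $c(t)$ a constant multiple of $1/(e^t-1)$ is established, the estimate itself is immediate.
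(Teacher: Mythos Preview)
Your overall strategy --- compute $\dot E/E$ directly in charts covering $\mathcal{D}_\infty^{**}$ --- is exactly what the paper does. The paper's execution is simpler: it works in the original Hirzebruch charts $(u_2,v_2)$ and $(u_3,v_3)$, where $\mathcal{D}_\infty$ is $\{u_2=0\}$ resp.\ $\{u_3=0\}$, and simply reads off from the appendix formulae that $\dot E/E \to x/(1-x)=-e^t/(e^t-1)$ as $u_2\to0$ or $u_3\to0$. No factorisation $E=g/u_{\infty 4}$ is needed; the explicit expressions for $E$ and $\dot E$ in those charts both have a simple pole along the line with proportional residues, and the ratio is read off directly.

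Your proposed decomposition in steps (iii)--(iv), however, contains a genuine error. You assert that the ``vector field component'' of $\dot g/g$ vanishes on $\{u_{\infty 4}=0\}$ because $g$ is regular there, and that $\dot u_{\infty 4}/u_{\infty 4}\to0$. Neither is true. From the appendix one has $\dot v_{\infty 4}\sim \dfrac{1}{(1-x)u_{\infty 4}}$, which blows up on the line; although $\partial_{v_{\infty 4}} g$ vanishes to first order in $u_{\infty 4}$, the product $\partial_{v_{\infty 4}} g\cdot \dot v_{\infty 4}$ is $O(1)$, not $o(1)$. Likewise $\dot u_{\infty 4}/u_{\infty 4}\to -(2v_{\infty 4}+\theta_1-\theta_\infty)+\tfrac{x}{1-x}(\cdots)$, which is bounded but nonzero. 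These two $O(1)$ contributions cancel against each other to leave $-e^t/(e^t-1)$, but your argument does not see that cancellation --- it wrongly discards both. The remedy is to abandon the decomposition and simply form the ratio of the explicit $\dot E$ and $E$ formulae in either the $(u_{\infty 4},v_{\infty 4})$ chart or, more efficiently, the pre-blow-up $(u_2,v_2)$ and $(u_3,v_3)$ charts as the paper does.
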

\begin{proof}
The proof follows from the expressions for $\dot{E}/E$ in $(u_2,v_2)$ and $(u_3,v_3)$ charts (see Sections \ref{app:22} and \ref{app:33}), where $\mathcal{D}_{\infty}$ is given by $u_2=0$ and $u_3=0$ respectively.
\end{proof}

\begin{lemma}\label{lemmaH} 
For every compact subset $K$ of $\mathcal D_\infty^{**}\cup \mathcal D_1^{*}\cup \mathcal H^*\backslash \mathcal{D}_x^*$, there exists a neighbourhood $V$ of $K$ and a constant $C>0$, such that
\[
\left|e^{-t}\,\frac{\dot E}{E}\right|<C 
\]
in $V$ for all $t$ such that $e^t$ is bounded away from $1$.
\end{lemma}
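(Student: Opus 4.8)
\textbf{Proof plan for Lemma \ref{lemmaH}.}
The plan is to reduce the statement to a finite calculation by a compactness argument, exactly as in the proof of Lemma \ref{lemmaDinfty}, but now tracking the extra factor $e^{-t}$ and being careful about which charts are needed to cover $K$. First I would note that $K$, being a compact subset of $\mathcal D_\infty^{**}\cup \mathcal D_1^{*}\cup \mathcal H^*\setminus \mathcal D_x^*$, is covered by finitely many of the coordinate charts from Appendix \ref{s:DetailsC} in which these three $(-2)$-lines appear: namely the $(\tilde u_1,\tilde v_1)$ and $(\tilde u_2,\tilde v_2)$ charts (where $\mathcal H^*$ is $\{\tilde v_1=0\}\cup\{\tilde v_2=0\}$), the $(u_{13},v_{13})$ and $(u_{14},v_{14})$ charts (where $\mathcal D_1^*$ lives), and the $(\tilde u_3,\tilde v_3)$, $(u_{\infty 4},v_{\infty 4})$ charts (where $\mathcal D_\infty^{**}$ lives). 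Excluding a neighbourhood of $\mathcal D_x^*$ guarantees we stay away from $\{\tilde u_1 = x\}$ and from the locus where $u - e^t$ vanishes, so the quantities below have no spurious denominators on $K$.

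The heart of the matter is then a direct inspection of $\dot E/E$ in each of these charts. Since $E$ is the Hamiltonian of the flow and the vector field is regular (base-point free) near $\mathcal I$ for $x\neq 0$, we have $\dot E = \partial E/\partial t$ along trajectories, and in each chart $\dot E/E$ is a rational function of the chart coordinates and of $e^{t}$. The claim is that on $K$ this rational function, after multiplication by $e^{-t}$, extends to a bounded holomorphic function provided $e^t$ stays bounded away from $1$ — the only genuine pole in the $t$-direction being the fixed singularity at $x=1$, which is excluded by hypothesis. Concretely, in the $(u_2,v_2)$ and $(u_3,v_3)$-type charts near $\mathcal D_\infty^{**}$ one reads off from Sections \ref{app:22}, \ref{app:33} (and their blown-up refinements) that $\dot E/E = \dfrac{e^t}{e^t-1} + (\text{chart-regular terms of order }e^t)$; the leading $e^t/(e^t-1)$ term is the one isolated in Lemma \ref{lemmaDinfty}, and after multiplying by $e^{-t}$ it becomes $1/(e^t-1)$, bounded when $e^t$ is bounded away from $1$, while the remaining terms carry an explicit factor $e^t$ and so are uniformly bounded after multiplication by $e^{-t}$ on the compact set $K$. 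The same bookkeeping in the $\mathcal D_1^*$ and $\mathcal H^*$ charts yields the analogous conclusion. Taking $V$ to be a sufficiently small neighbourhood of $K$ contained in the union of these finitely many charts, and $C$ the maximum of the finitely many sup-norms obtained, gives the lemma.

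The main obstacle I anticipate is purely organizational rather than conceptual: one must verify that the factor $e^{-t}$ is exactly what is needed to kill the $O(e^t)$ growth that appears in $\dot E/E$ in these charts (as opposed to, say, $\mathcal D_0^*$ or $\mathcal D_x^*$, where a different normalization — the Jacobians $\omega_{03}$, $\omega_{x3}$ of Lemma \ref{lem:distancefn} — is the right object), and that no chart contributes a term that is unbounded as one approaches the intersection points $\mathcal H^*\cap \mathcal D_\infty^{**}$ or $\mathcal H^*\cap \mathcal D_1^*$ lying in $K$. This is why the hypothesis excludes a neighbourhood of $\mathcal D_x^*$: near $\mathcal D_x^*$ the coefficient $1/(e^t-1)$ competes with a factor involving $u - e^t\to 0$, which would spoil the uniform bound. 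So the careful step is to check, chart by chart, that every denominator appearing in $e^{-t}\dot E/E$ is either a nonvanishing chart coordinate on $K$ or a power of $e^t-1$, and then invoke compactness of $K$ together with the hypothesis that $e^t$ is bounded away from $1$.
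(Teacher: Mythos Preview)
Your plan is essentially the paper's own proof: cover $K$ by finitely many of the charts from Appendix \ref{s:DetailsC}, expand $e^{-t}\dot E/E$ to leading order in the coordinate that cuts out the relevant $(-2)$-curve, and observe that the result is bounded provided $e^t$ stays away from $1$ and the chart coordinate stays away from $\mathcal D_x^*$. Two small corrections are worth making before you carry it out. First, your chart list is slightly off: $\mathcal D_\infty^{**}$ is \emph{not} visible in the $(\tilde u_3,\tilde v_3)$ chart (that chart sees only the exceptional line $\mathcal E_\infty$), and $\mathcal D_1^*$ is not visible in $(u_{14},v_{14})$; the charts you actually need are $(\tilde u_1,\tilde v_1)$, $(\tilde u_2,\tilde v_2)$ for $\mathcal H^*$, $(u_{\infty 4},v_{\infty 4})$ for $\mathcal D_\infty^{**}\setminus\mathcal H^*$, and $(u_{13},v_{13})$ for $\mathcal D_1^*\setminus\mathcal H^*$. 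Second, the reason the higher-order terms are bounded is not that they ``carry an explicit factor $e^t$'' --- they do not in general --- but that they carry a positive power of the defining coordinate of the $(-2)$-curve ($\tilde v_1$, $\tilde v_2$, $u_{\infty 4}$, or $v_{13}$), which can be made small by shrinking $V$; the leading term is $\pm 1/(e^t-1)$ or a multiple of $1/((e^t-1)(e^t-\tilde u_1))$, and it is here that the hypotheses ``$e^t$ bounded away from $1$'' and ``$K$ disjoint from $\mathcal D_x^*$'' are used.
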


\begin{proof}
Note that $\mathcal H^*=\{\tilde v_1=0\}\cup \{\tilde v_2=0\}$ is parametrized by $\tilde u_1$ and $\tilde u_2=1/\tilde u_1$ respectively.  Moreover, $ \mathcal{D}_x^*$ is given in these charts by $\{\tilde u_1=x\}$ and $ \{x\tilde u_2=1\}$. In the respective coordinate charts $(\tilde u_1,\tilde  v_1)$, $(\tilde u_2, \tilde v_2)$ (see Sections \ref{app:1tilde} and \ref{app:2tilde}), we have
\[
e^{-t}\,\frac{\dot E}{E}=\begin{cases} 
	\frac{\tilde u_1-1}{\left(e^t-1\right) \left(e^t-\tilde u_1\right)}+\frac{ \left(\theta _{x}-1\right)\tilde u_1(\tilde u_1-1)}{e^t-\tilde u_1}\tilde v_1+{O}\left(\tilde v_1^2\right),\vspace{.2cm}\\
	-\frac{ \tilde u_2-1}{\left(e^t-1\right) \left(e^t \tilde u_2-1\right)}-\frac{ \left(\theta _{x}-1\right)\tilde u_2(\tilde u_2-1)}{ e^t \tilde u_2-1 }\tilde v_2 +{O}\left(\tilde v_2^2\right).\vspace{.2cm}\\
	\end{cases}
\]

So as long as we consider compact subsets of  $ \mathcal H^*\backslash \mathcal{D}_x^*$, the values of $\frac{1}{\tilde u_1-x}$ and $\frac{1}{x\tilde u_2-1}$ are bounded. We have now proven the desired result in a neighbourhood of any compact subset of $\mathcal{H}^*\setminus  \mathcal{D}_x^*$. Since $\mathcal D_\infty^{**}$ intersects with $\mathcal{H}^*\setminus  \mathcal{D}_x^*$, the result holds in a neighbourhood of $\mathcal D_\infty^{**}\cap \mathcal H^*$.

On the other hand, near $\mathcal D_\infty^{**}\setminus \mathcal H^*$, given by  $\{u_{\infty 4}=0\}$, we may consider only bounded values of $v_{\infty 4}$, and so we have  (see Section \ref{app:inf3inf3})
\[
e^{-t}\,\frac{\dot E}{E}=-\frac{1}{e^t-1}+\left[\theta v_{\infty 4}-\left(v_{\infty 4}+\overline{\theta}\right)\left(v_{\infty 4}+\theta_x-1\right) \right]u_{\infty 4}+O(u_{\infty 4}^2) \, .\]
Hence the result holds in a neighbourhood of the compact set $\mathcal D_\infty^{**}=\{u_{\infty 4}=0\}\cup \{\tilde u_2=0\}$.  Similarly, near $\mathcal D_1^{**}\setminus \mathcal H^*$, given by $\{v_{13}=0\}$, where we may consider only bounded values of $u_{13}$, we have (see Section \ref{app:1313}) \[
e^{-t}\,\frac{\dot E}{E}=\frac{(u_{13}+\theta_1)(u_{13}-\theta_x+1)}{(e^t-1)^2}v_{13}+O(v_{13}^2)  \, .\]
Hence the result holds for any compact subset $K$ of $\mathcal D_\infty^{**}\cup \mathcal D_1^{*}\cup \mathcal H^*\backslash \mathcal{D}_x^*$ and any $t$ as long as $\frac{1}{e^t-1}$ is bounded.
\end{proof}

\begin{remark} The estimate in the above Lemma \ref{lemmaH} applies to all compact subsets of $\mathcal D_\infty^{**}\cup \mathcal D_1^{*}\cup \mathcal H^*\backslash \mathcal{D}_x^*$ and, therefore, in particular to $\mathcal D_\infty^{**}\cup \mathcal D_1^{*}$. 
\end{remark}
\begin{lemma}[Behaviour near $\mathcal{D}_x^*\setminus\mathcal{H}^*$]\label{lemma:nearDx}
If a solution at a complex time $t$ is sufficiently close to $\mathcal{D}_x^*\setminus\mathcal{H}^*$, then there exists a unique $\tau\in\mathbf{C}$ such that $(u(\tau),v(\tau))$ belongs to the exceptional line $\mathcal{E}_x$.
In other words,  $u(\tau)=e^\tau$ and $v(t)$ has a pole at $t=\tau$.
Moreover, for sufficiently small $d(t)$ and bounded $u_{x3}$, we have $|t-\tau|=O(|e^{-t}d(t)||u_{x3}(t)|)$.

For large $R_x>0$, consider the set $\{t\in\mathbb{C}\mid |u_{x3}(t)|\le R_x\}$.
Its connected component containing $\tau$ is an approximate disk $\Delta_x$ with centre $\tau$ and radius $|d(\tau)e^{-\tau}|R_x$, and
$t\mapsto u_{x3}(t)$ is a complex analytic diffeomorphism from $\Delta_x$ onto $\{u\in\mathbb{C}\mid|u|\le R_x\}$.
\end{lemma}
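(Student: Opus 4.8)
The plan is to reduce the statement to the study of the single scalar holomorphic function $t\mapsto u_{x3}(t)$ along the given solution, in the coordinate chart $(u_{x3},v_{x3})$ of Section~\ref{app:x3x3}, in which $\mathcal{D}_x^*\setminus\mathcal{H}^*$ is the line $\{v_{x3}=0\}$, the set $\mathcal{E}_x\setminus\mathcal{I}$ is the line $\{u_{x3}=0\}$, and the two cross at the origin. First I would record the algebraic relations among the coordinates occurring in the statement. From the blow-up formulas of Section~\ref{s:okamoto} one obtains $u_{x3}=u_{x4}v_{x4}$, $v_{x3}=1/u_{x4}$, and $u=e^{t}+\bigl(u_{x3}+e^{t}(\theta_x-1)\bigr)u_{x3}v_{x3}$ (so $u\to e^{t}$ as $v_{x3}\to0$); a short computation of the Jacobian gives $\omega_{x3}=-1/(v\,u_{x3})=-u\,v_{x3}$, so that, by Lemma~\ref{lem:distancefn}, near $\mathcal{D}_x^*\setminus\mathcal{H}^*$ the distance function is
\[
d=-\omega_{x3}=\frac{1}{v\,u_{x3}}=u\,v_{x3},\qquad\text{whence}\qquad d\,u_{x3}=\frac1v\quad\text{and}\quad v_{x3}=\frac{d}{u}.
\]
The hypothesis that the solution at time $t$ lies close to $\mathcal{D}_x^*\setminus\mathcal{H}^*$ is used in the quantitative form: $|v_{x3}(t)|$, equivalently (by the last identity, since $u\to e^{t}$) $|e^{-t}d(t)|$, is small, while $u_{x3}(t)$ stays in a fixed bounded set and $e^{t}$ stays bounded away from $0$ and $1$. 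By $d\,u_{x3}=1/v$ this says exactly that $|v(t)|$ is large, i.e.\ the solution is near a pole of $v$.

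Next I would read off from the explicit vector field of Section~\ref{app:x3x3} that $v_{x3}\dot u_{x3}$ extends holomorphically across $\{v_{x3}=0\}$, has no zero there, and equals $1$ at the origin (the last value follows from $v_{x3}\dot u_{x3}=\dot v_{x4}+u_{x3}v_{x3}^{2}\dot u_{x4}$ together with the expansion $\dot v_{x4}=1+O(t-\tau)$ of Section~\ref{s:movable}), while $\dot v_{x3}=O(v_{x3}^{2})$. Hence on a region $\{|u_{x3}|\le R,\ |v_{x3}|\le\delta\}$ with $\delta$ small relative to $R$ and $e^{t}$ away from $0,1$, the function $\dot u_{x3}$ is large and nonvanishing, comparable to $1/v_{x3}$, and $v_{x3}$ drifts by a factor $1+o(1)$ along any trajectory arc on which $|u_{x3}|\le R$. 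Existence and uniqueness of $\tau$ then follow by a standard argument: $u_{x3}(\cdot)$ is holomorphic with nonvanishing, nearly constant derivative $\dot u_{x3}\approx1/v_{x3}(t)$, so comparing $u_{x3}$ to its linearisation (a Rouch\'e or inverse-function-theorem estimate) produces a unique nearby $\tau$ with $u_{x3}(\tau)=0$, and $|t-\tau|=O\bigl(|u_{x3}(t)\,v_{x3}(t)|\bigr)$. Since $v_{x3}(\tau)=1/u_{x4}(\tau)\neq0$, we get $v_{x4}(\tau)=u_{x3}(\tau)v_{x3}(\tau)=0$ and $u(\tau)=e^{\tau}$, so $(u(\tau),v(\tau))\in\mathcal{E}_x$; by Section~\ref{s:movable} this is exactly the assertion that $u(\tau)=e^{\tau}$ and $v$ has a simple pole at $\tau$. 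Substituting $v_{x3}(t)=d(t)/u(t)$ and $u(t)=e^{t}(1+o(1))$ gives $|t-\tau|=O\bigl(|e^{-t}d(t)|\,|u_{x3}(t)|\bigr)$; equivalently, from $d\,u_{x3}=1/v$ and $v(t)=e^{-\tau}/(t-\tau)+O(1)$ near the pole, $e^{-t}d(t)\,u_{x3}(t)=(t-\tau)\bigl(1+O(t-\tau)\bigr)$.

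For the final assertion, fix a large $R_x$ and take the solution close enough to $\mathcal{D}_x^*\setminus\mathcal{H}^*$ that $R_x|v_{x3}(\tau)|$ lies below the threshold $\delta$ of the previous step (a polynomial smallness condition on $|d|$ in terms of $R_x$). A continuity (bootstrap) argument along the trajectory then shows that the connected component $\Delta_x$ of $\{t:|u_{x3}(t)|\le R_x\}$ containing $\tau$ stays in the good region, so that $\dot u_{x3}=\bigl(1/v_{x3}(\tau)\bigr)(1+o(1))$ throughout $\Delta_x$ and $v_{x3}$ remains within a factor $1+o(1)$ of $v_{x3}(\tau)=e^{-\tau}d(\tau)(1+o(1))$; in particular $u_{x3}(t)=\bigl(1/v_{x3}(\tau)\bigr)(t-\tau)(1+o(1))$ on $\Delta_x$. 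Consequently $t\mapsto u_{x3}(t)$ has nonvanishing derivative on $\Delta_x$, maps $\partial\Delta_x$ to $\{|u|=R_x\}$ and is proper onto $\{|u|\le R_x\}$, and vanishes only at $\tau$; a standard degree argument then shows it is a complex analytic diffeomorphism of $\Delta_x$ onto $\{u\in\mathbb{C}\mid|u|\le R_x\}$, and the derivative $\approx1/v_{x3}(\tau)$ identifies $\Delta_x$ as the approximate disk of centre $\tau$ and radius $|v_{x3}(\tau)|R_x=|d(\tau)e^{-\tau}|R_x$.

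The step I expect to be the genuine obstacle is the bootstrap above: upgrading the infinitesimal behaviour near $\tau$ to uniform control of the trajectory over the whole approximate disk of radius $\sim R_x|d(\tau)e^{-\tau}|$, that is, verifying that $|u_{x3}|$ reaches $R_x$ before the trajectory can leave the $(u_{x3},v_{x3})$ chart and before $v_{x3}$ (equivalently $d$) drifts appreciably, with all estimates uniform as $B:=1/v_{x3}(\tau)\to\infty$. Once the vector field of Section~\ref{app:x3x3} is written out and the Jacobian identity $\omega_{x3}=-1/(v\,u_{x3})$ is in hand, the remaining steps are routine.
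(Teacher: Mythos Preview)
Your approach is essentially that of the paper: work in the $(u_{x3},v_{x3})$ chart, show $v_{x3}$ is approximately constant on the relevant time scale, deduce $u_{x3}(t)\approx u_{x3}(\tau)+(t-\tau)/v_{x3}(\tau)$ from $\dot u_{x3}\sim 1/v_{x3}$, and read off the diffeomorphism and radius $|v_{x3}(\tau)|R_x=|e^{-\tau}d(\tau)|R_x$. Your Jacobian identity $\omega_{x3}=-u\,v_{x3}=-1/(v\,u_{x3})$ and the relation $d\,u_{x3}=1/v$ are correct and make the translation to the statement's variables cleaner than in the paper.

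One factual slip: the claim $\dot v_{x3}=O(v_{x3}^{2})$ is not correct. From the vector field in Section~\ref{app:x3x3} the leading behaviour as $v_{x3}\to0$ is
\[
\dot v_{x3}\sim\Bigl(\tfrac{2u_{x3}}{x(x-1)}-(\theta_0+\theta_x)-\tfrac{x\theta_1-\theta_x-1}{x-1}\Bigr)v_{x3},
\]
which is $O(v_{x3})$ with a coefficient that grows linearly in $u_{x3}$. This does not break your argument: integrating $\dot v_{x3}/v_{x3}$ over a $t$-interval of length $\sim|v_{x3}(\tau)|R_x$, with $|u_{x3}|\le R_x$, gives a change in $\log v_{x3}$ of order $R_x^{2}|v_{x3}(\tau)|$, so the drift is still $1+o(1)$ once $|d|$ is small compared to $1/R_x^{2}$ rather than $1/R_x$. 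Your later ``polynomial smallness condition on $|d|$ in terms of $R_x$'' already allows for this, so the bootstrap goes through unchanged.
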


{
\begin{remark}\label{rem:approximate-disk}
An \emph{approximate disk} with centre $\tau$ and radius $R$ is an open simply connected set which, for some $\varepsilon>0$, contains the disk centred at $\tau$ with radius $R-\varepsilon$ and is contained in the disk centred at $\tau$ with radius $R+\varepsilon$.
\end{remark}
}

\begin{proof}
For the study of the solutions near $\mathcal{D}_x^*\setminus\mathcal{H}^*$, we use coordinates $(u_{x3},v_{x3})$; see Section \ref{app:x3x3}.
In this chart, the set $\mathcal{D}_x^*\setminus\mathcal{H}^*$  is given by $v_{x3}=0$ and parametrized by $u_{x3}\in\mathbb{C}$.
Moreover, $\mathcal{E}_x$ is given by $\{u_{x3}=0\}$ and parametrized by the variable $v_{x3}$. From Lemma \ref{lem:distancefn}, we recall that $d=-\omega_{x3}$ in this chart.

Asymptotically, for $v_{x3}\to0$, bounded $u_{x3}$, and $x=e^t$ bounded away from $0$ and $1$, we have:
\begin{subequations}
\begin{align}
\label{ux3dot}&\dot u_{x3}\sim \frac{1}{v_{x3}},\\
\label{vx3dot}&\dot v_{x3}\sim \left(\frac{2u_{x3}}{x(x-1)}-(\theta_0+\theta_x)-\frac{x\theta_1-\theta_x-1}{x-1}\right)v_{x3},\\
\label{omegax3}&\omega_{x3}\sim -x v_{x3},\\
\label{omegax3dot}&\frac{\dot \omega_{x3}}{\omega_{x3}}\sim(1-\theta_0-\theta_1)\frac{x}{x-1}+\frac{2u_{x3}+\theta_0+\theta_x }{x-1}, \\
\label{ex3omegax3}&E\omega_{x3}\sim -1-\frac{x}{u_{x3}}.
\end{align}
\end{subequations}
{Note that integrating Equation \eqref{omegax3dot} from $t_0$ to $t_1$, where $t_0, t_1\in D_t$ (see Figure \ref{fig:Dt}) leads to
\[
\log\left(\frac{\omega_{x3}(t_1)}{\omega_{x3}(t_0)}\right)\sim  (1-\theta_0-\theta_1)\log\left(\frac{1-e^{t_1}}{1-e^{t_0}}\right)+\int_{t_0}^{t_1}\frac{2u_{x3}+\theta_0+\theta_x }{e^t-1}\,dt.
\]
Therefore, if for all $t$ on the line segment from $t_0$ to $t_1$, we have $|e^{t}-e^{t_0}|\ll |e^{t_0}|$ and $|u_{x3}(t)|$ is bounded, then $\omega_{x3}(t)/\omega_{x3}(t_0)\sim\left((1-e^{t})/(1-e^{t_0})\right)^{1-\theta_0-\theta_1}$, where the right side is upper-bounded by $e^{t_0}$. In view of this situation, Equation \eqref{omegax3} shows that $v_{x3}$ is approximately given by a small constant. We take $t_0=\tau$ in the following analysis.}
From \eqref{ux3dot}, it follows that: 
$$
u_{x3}\sim u_{x3}(\tau)+\frac{t-\tau}{v_{x3}(\tau)}.
$$
Thus, if $t$ runs over an approximate disk $\Delta$ centred at $\tau$ with radius $|v_{x3}|R$, then $u_{x3}$ fills an approximate disk centred at $u_{x3}(\tau)$ with radius $R$.
Therefore, if $|v_{x3}|\ll1/|\tau|$, the solution has the following properties for $t\in\Delta$:
$$
\frac{v_{x3}(t)}{v_{x3}(\tau)}\sim1,
$$
and $u_{x3}$ is a complex analytic diffeomorphism from $\Delta$ onto an approximate disk with centre $u_{x3}(\tau)$ and radius $R$.
If $R$ is sufficiently large, we will have $0\in u_{x3}(\Delta)$, i.e.~the solution of the Painlev\'e equation will have a pole at a unique point in $\Delta$.

Now, it is possible to take $\tau$ to be the pole point.
We have:
$$
u_{x3}(t)\sim  \frac{t-\tau}{v_{x3}(\tau)}\sim -\frac{(t-\tau)e^{\tau}}{d(\tau)}.
$$
Let $R_x$ be a large positive real number.
Then the equation $|u_{x3}(t)|=R_x$ corresponds to $|t-\tau|\sim|e^{-\tau}d(\tau)|R_x$, which is still small compared to $|\tau|$ if $|d(\tau)|$ is sufficiently small.
It follows that the connected component $\Delta_x$  of the set of all $t\in\mathbb{C}$ such that $\{t\mid |u_{x3}(t)|\le R_x\}$ is an approximate disk with centre $\tau$ and radius $|d(\tau)e^{-\tau}|R_x$.
More precisely, $u_{x3}$ is a complex analytic diffeomorphism from $\Delta_x$ onto $\{u\in\mathbb{C}\mid |u|\le R_x\}$, and
$$
\frac{d(t)}{d(\tau)}\sim1
\quad\text{for all}\quad
t\in\Delta_x.
$$
 \end{proof}

\begin{remark} Similar arguments show that if a solution comes sufficiently close to $\mathcal{D}_1^*$ or $\mathcal{D}_\infty^{**}$, then it will cross the corresponding exceptional lines $\mathcal{E}_1$, respectively $\mathcal{E}_\infty$ and $\mathcal{E}_\infty^-$ transversally at a unique nearby value of time. We prove this in Appendix  \ref{app:EstD1Dinfty}. This is, however, not needed for our main  result.
\end{remark}

\begin{lemma}[Behaviour near $\mathcal{D}_0^*\setminus\mathcal{H}^*$]\label{lemma:nearD0}
If a solution at a complex time $t$ is sufficiently close to $\mathcal{D}_0^*\setminus\mathcal{H}^*$, then there exists unique $\tau\in\mathbf{C}$ such that $(u(\tau),v(\tau))$ belongs to the line $\mathcal{E}_0$.
In other words, $u$ vanishes and $v$ has a pole at $t=\tau$.
Moreover, $|t-\tau|=O(|d(t)||u_{03}(t)|)$ for sufficiently small $d(t)$ and bounded $u_{03}$.

For large $R_0>0$, consider the set $\{t\in\mathbb{C}\mid |u_{03}(t)|\le R_0\}$.
Its connected component containing $\tau$ is an approximate disk $\Delta_0$ with centre $\tau$ and radius $|d(\tau)(e^{\tau}+e^{-\tau})|R_0$, and
$t\mapsto u_{03}(t)$ is a complex analytic diffeomorphism from that $\Delta_0$ onto $\{u\in\mathbb{C}\mid|u|\le R_0\}$.
\end{lemma}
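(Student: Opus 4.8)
The plan is to follow the same strategy as in the proof of Lemma \ref{lemma:nearDx}, now carried out in the chart $(u_{03},v_{03})$, in which (see Section \ref{app:0303}) the line $\mathcal{D}_0^*\setminus\mathcal{H}^*$ is the locus $\{v_{03}=0\}$ parametrised by $u_{03}\in\mathbb{C}$, the exceptional line $\mathcal{E}_0$ is $\{u_{03}=0\}$ parametrised by $v_{03}$, and, by Lemma \ref{lem:distancefn}, $d=-\tfrac{x-1}{x}\omega_{03}$. First I would read off from the explicit vector field in Section \ref{app:0303} the leading asymptotics as $v_{03}\to0$, with $u_{03}$ bounded and $x=e^t$ bounded away from $0$ and $1$: one obtains $\dot u_{03}\sim g(x)/v_{03}$ for an explicit $g$ that does not vanish for small $x$, a bounded logarithmic derivative $\dot v_{03}/v_{03}$, and $\omega_{03}\sim c(x)v_{03}$, together with the relation $E\omega_{03}\sim-\tfrac{x}{x-1}$ already recorded in the proof of Lemma \ref{lem:distancefn}.

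Next, exactly as in Lemma \ref{lemma:nearDx}, I would integrate $\dot v_{03}/v_{03}$ (equivalently $\dot\omega_{03}/\omega_{03}$) along a line segment in the $t$-plane from a base time $t_0$: provided $|e^t-e^{t_0}|\ll|e^{t_0}|$ and $u_{03}$ stays bounded on the segment, this shows $v_{03}(t)/v_{03}(t_0)\sim1$, so $v_{03}$ is effectively a small constant. Substituting back into $\dot u_{03}\sim g(x)/v_{03}$ gives $u_{03}(t)\sim u_{03}(t_0)+\tfrac{g(x)}{v_{03}(t_0)}(t-t_0)$, so that when $t$ ranges over an approximate disk (in the sense of Remark \ref{rem:approximate-disk}) centred at $t_0$ of radius $|v_{03}/g|\,R$, the function $u_{03}$ sweeps out an approximate disk of radius $R$ about $u_{03}(t_0)$ and is a complex-analytic diffeomorphism there. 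Taking $R$ large enough forces $0\in u_{03}(\Delta)$; its unique preimage is the point $\tau$ at which $u(\tau)=0$ and $v$ has a pole. Re-centring at this $\tau$ (so $u_{03}(\tau)=0$) and choosing $R=R_0$, the connected component through $\tau$ of $\{t\mid|u_{03}(t)|\le R_0\}$ is an approximate disk $\Delta_0$ on which $t\mapsto u_{03}(t)$ is a diffeomorphism onto $\{u\in\mathbb{C}\mid|u|\le R_0\}$.

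It then remains to convert the radius from $v_{03}$-units into $d$-units: combining $d=-\tfrac{x-1}{x}\omega_{03}$, $\omega_{03}\sim c(x)v_{03}$ and $E\omega_{03}\sim-\tfrac{x}{x-1}$ expresses $v_{03}(\tau)/g(x)$ as a multiple of $d(\tau)$, and a short computation identifies the multiplier (up to asymptotic equivalence) with $e^{\tau}+e^{-\tau}$. This gives the claimed radius $|d(\tau)(e^{\tau}+e^{-\tau})|R_0$ for $\Delta_0$, and, using $u_{03}\sim\tfrac{g(x)}{v_{03}(\tau)}(t-\tau)$, the estimate $|t-\tau|=O(|d(t)||u_{03}(t)|)$ for small $d(t)$ and bounded $u_{03}$. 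The same integration argument as above yields $d(t)/d(\tau)\sim1$ throughout $\Delta_0$.

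The main obstacle is precisely the self-consistency of these estimates near $x=0$, just as in Lemma \ref{lemma:nearDx}: the radius $|d(\tau)(e^{\tau}+e^{-\tau})|R_0$ of $\Delta_0$ carries a factor $e^{-\tau}$ that is large as $x\to0$, so one must verify that $|d(\tau)|$ is small enough (and $R_0$ not too large relative to it) for the line segments used in the integrations to keep $e^t$ within a small multiple of $e^{\tau}$ and $u_{03}$ bounded, so that the ``$\sim$'' relations genuinely propagate over all of $\Delta_0$ and the bootstrap closes. Checking this closure condition, together with the non-vanishing of $g(x)$ for small $x$ (so that $u_{03}$ is an honest local affine coordinate along the flow), is where the real work lies; the rest is a transcription of the $\mathcal{D}_x^*$ argument.
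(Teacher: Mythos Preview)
Your proposal is correct and follows essentially the same approach as the paper, which works in the $(u_{03},v_{03})$ chart, shows $v_{03}$ is approximately constant near $\mathcal{D}_0^*$, integrates $\dot u_{03}\sim -\tfrac{x}{(1-x)v_{03}}$ to obtain $u_{03}\sim u_{03}(\tau)-(e^t-e^\tau)/v_{03}(\tau)$, and then converts to $d$-units for the radius of $\Delta_0$. Two small corrections: $\omega_{03}=-v_{03}$ holds exactly (so your $c(x)\equiv-1$), and $g(x)=-x/(1-x)$ \emph{does} vanish at $x=0$---this vanishing is precisely what produces the large factor you correctly flag as the main obstacle, so the issue is not non-vanishing of $g$ but rather that $g(x)\neq0$ for the fixed $x$ bounded away from $0$ and $1$ assumed in the lemma.
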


\begin{proof}
For the study of the solutions near $\mathcal{D}_0^*\setminus\mathcal{H}^*$, we use coordinates $(u_{03},v_{03})$; see Section \ref{app:0303}.
In this chart, the set $\mathcal{D}_0^*\setminus\mathcal{H}^*$  is given by $v_{03}=0$ and parametrized by $u_{03}\in\mathbb{C}$.
Moreover, $\mathcal{E}_0$ is given by $u_{03}=0$ and parametrized by $v_{03}$.

Asymptotically, for $v_{03}\to0$, bounded $u_{03}$, and $x=e^t$ bounded away from $0$ and $1$, we have:
\begin{subequations}
\begin{align}
\label{u03dot}&\dot u_{03}\sim -\frac{x}{(1-x)v_{03}},\\
\label{v03dot}&\dot v_{03}\sim -\frac{(x+1)(2u_{03}+\theta_0)-\theta_x+1-x\theta_1}{x-1}v_{03},\\
\label{omega03}&\omega_{03}=-v_{03},\\
\label{omega03dot}&\frac{\dot\omega_{03}}{\omega_{03}}
\sim
2u_{03}+\theta_0-\theta_1+\frac{4u_{03}+\theta_0-\theta_1+1}{x-1},
\\
\label{e03omega03}&E\omega_{03}\sim -\frac{x}{x-1}.
\end{align}
\end{subequations}

Arguments similar to those in the proof of Lemma \ref{lemma:nearDx} show that $v_{03}$ is approximately equal to a small constant, and from \eqref{u03dot} it follows that: 
$$
u_{03}\sim u_{03}(\tau)-\frac{e^t-e^{\tau}}{v_{03}(\tau)}.
$$
Thus, if $t$ runs over an approximate disk $\Delta$ centred at $\tau$ with radius $|v_{03}|\log R$, then $u_{03}$ fills an approximate disk centred at $u_{03}(\tau)$ with radius $R$.
Therefore, if $|v_{03}|\ll e^{-|\tau|}$, the solution has the following properties for $t\in\Delta$:
$$
\frac{v_{03}(t)}{v_{03}(\tau)}\sim1,
$$
and $u_{03}$ is a complex analytic diffeomorphism from $\Delta$ onto an approximate disk with centre $u_{03}(\tau)$ and radius $R$.
If $R$ is sufficiently large, we will have $0\in u_{03}(\Delta)$, i.e.~the solution of the Painlev\'e equation will vanish at a unique point in $\Delta$.

Now, it is possible to take $\tau$ to be that point.
We have:
$$
u_{03}(t)\sim - \frac{e^t-e^{\tau}}{v_{03}(\tau)}\sim -\frac{(e^t-e^\tau)e^{\tau}}{(e^{\tau}-1)d(\tau)}.
$$
Let $R_0$ be a large positive real number.
Then the equation $|u_{03}(t)|=R_0$ corresponds to $|1-e^{t-\tau}|\sim|e^{-2\tau}(e^{\tau}-1)d(\tau)|R_0$, which is still small compared to $|e^{\tau}|$ if $|d(\tau)|$ is sufficiently small.
It follows that the connected component $\Delta_0$  of the set of all $t\in\mathbb{C}$ such that $\{t\mid |u_{03}(t)|\le R_0\}$ is an approximate disk with centre $\tau$ and radius $|d(\tau)(e^{-\tau}+e^{\tau})|R_0$.
More precisely, $u_{03}$ is a complex analytic diffeomorphism from $\Delta_0$ onto $\{u\in\mathbb{C}\mid |u|\le R_0\}$, and
$$
\frac{d(t)}{d(\tau)}\sim1
\quad\text{for all}\quad
t\in\Delta_0.
$$

\end{proof}

\begin{theorem}\label{th:estimates}
Let $\epsilon_1$, $\epsilon_2$, $\epsilon_3$ be given such that $0<\epsilon_1<1$, $0<\epsilon_2<1$, $0<\epsilon_3<1$.
Then there exists $\delta>0$ such that if $|e^{t_0}|<\epsilon_1$ and $|d(t_0)|<\delta$, it follows that
$$
\rho=\inf\{r<|e^{t_0}|\ \text{such that}\ |d(t)|<\delta\ \text{whenever}\ |e^{t_0}|\ge|e^t|\ge r\}
$$
satisfies:
\begin{itemize}
 \item[(i)] $\rho>0$ and is bounded  below by the relation:
  $$\delta\ge|d(t_0)|\left((1-\rho)/|1-e^{t_0}|\right)^{1-\epsilon_2}(1-\epsilon_3);$$
 \item[(ii)] if $|e^{t_0}|\ge|e^t|\ge \rho$ then 
 $$
 d(t)=d(t_0)\left(\frac{1-e^{t}}{1-e^{t_0}}\right)^{1+\varepsilon_2(t)}(1+\varepsilon_3(t)),
 $$
 where $|\varepsilon_2(t)|\le\epsilon_2$ and $|\varepsilon_3(t)|\le\epsilon_3$; and,
 \item[(iii)] if $|e^t|$ is less than $\rho$, but still sufficiently close to $\rho$, then $|d(t)|\ge\delta(1-\epsilon_3)$.
\end{itemize}
\end{theorem}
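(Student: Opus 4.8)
The plan is to reduce the whole statement to tracking the single complex quantity $d(t)$ along the solution as $|e^t|=e^{\Re t}$ decreases from $|e^{t_0}|$, to prove that it satisfies $d(t)\sim d(t_0)(1-e^t)/(1-e^{t_0})$ up to controlled multiplicative errors, and then to read off (i)--(iii) from this formula together with the definition of $\rho$. Fix a path $\gamma$ in the $t$-plane issuing from $t_0$ along which $\Re t$ decreases monotonically, and work on the portion of $\gamma$ on which $|d(t)|<\delta$. For $\delta$ small the set $\{|d|<\delta\}$ is a neighbourhood $\mathcal N$ of $\mathcal I$ on which all estimates of Lemmas \ref{lem:distancefn}--\ref{lemma:nearD0} hold; moreover on $\gamma$ one has $|e^t|\le|e^{t_0}|<\epsilon_1<1$, so $|1-e^t|\in(1-\epsilon_1,2)$ is bounded away from $0$ and $\infty$, and $\int_\gamma|e^s|\,|ds|\le|e^{t_0}|<\epsilon_1$ because the path is monotone in $\Re t$. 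This finite ``weight'' is what ultimately controls the error.

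Next I would compute the logarithmic derivative of $d$. Away from $\mathcal D_0^*$ and $\mathcal D_x^*$ we have $d=1/E$ by Lemma \ref{lem:distancefn}, hence $\dot d/d=-\dot E/E=\tfrac{e^t}{e^t-1}+g(t)$, where $|g(t)|=O(|e^t|)$ with a constant that is either absolute or as small as desired: near $\mathcal D_\infty^{**}$ this follows from Lemma \ref{lemmaDinfty} (the explicit expansions used in the proof of Lemma \ref{lemmaH} exhibit the discrepancy as $e^t$ times a function vanishing on $\mathcal D_\infty^{**}$), and near $\mathcal H^*\setminus\mathcal D_x^*$ and $\mathcal D_1^*$ it follows from the bound $|e^{-t}\dot E/E|<C$ of Lemma \ref{lemmaH} together with $|e^t/(e^t-1)|\le|e^t|/(1-\epsilon_1)$. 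When instead the solution is near $\mathcal D_x^*$ or $\mathcal D_0^*$, Lemmas \ref{lemma:nearDx} and \ref{lemma:nearD0} show it is crossing $\mathcal E_x$, respectively $\mathcal E_0$, transversally at a unique nearby time $\tau$, with $d(t)/d(\tau)\sim 1$ throughout an approximate disk $\Delta_x$, $\Delta_0$ of $t$-diameter $O(|d(\tau)|R)$; the analogous crossings of $\mathcal E_1$, $\mathcal E_\infty$, $\mathcal E_\infty^-$ near $\mathcal D_1^*$, $\mathcal D_\infty^{**}$ (remark after Lemma \ref{lemma:nearDx}) are of the same type, with $d=1/E$ still valid. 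Thus each crossing is a short excursion across which $d$ is essentially unchanged and over which $\int|e^s/(e^s-1)|\,|ds|=O(|d(\tau)|R)$ is negligible.

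Integrating $\dot d/d$ along $\gamma$, removing the crossing disks, and using $\int\tfrac{e^s}{e^s-1}\,ds=\log(1-e^s)+\mathrm{const}$ (legitimate since $1-e^s\neq0$ on $\gamma$), I obtain
\[
\log\frac{d(t)}{d(t_0)}=\log\frac{1-e^t}{1-e^{t_0}}+\mathcal R(t),\qquad |\mathcal R(t)|\le\int_\gamma|g(s)|\,|ds|+\textstyle\sum_{\text{crossings}}O(|d(\tau)|R),
\]
where the right-hand side is small because $\int_\gamma|g|\,|ds|\le(\text{const})\int_\gamma|e^s|\,|ds|\le(\text{const})\,\epsilon_1$ with the constant made small by shrinking $\mathcal N$, and each crossing contributes only $O(\delta)$. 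Writing $\mathcal R(t)=\varepsilon_2(t)\log\frac{1-e^t}{1-e^{t_0}}+\log(1+\varepsilon_3(t))$ — for instance by putting all of $\mathcal R$ into $\varepsilon_3$, which is permissible once $|\mathcal R|$ is small enough that $|e^{\mathcal R}-1|\le\epsilon_3$, the choice of $\delta$ being made accordingly — gives exactly the identity of (ii) with $|\varepsilon_2|\le\epsilon_2$, $|\varepsilon_3|\le\epsilon_3$. Part (i) then follows by evaluating (ii) at $|e^t|=\rho$, where $|d(t)|$ has just attained the value $\delta$: $\delta=|d(t)|\ge|d(t_0)|\big(|1-e^t|/|1-e^{t_0}|\big)^{1-\epsilon_2}(1-\epsilon_3)\ge|d(t_0)|\big((1-\rho)/|1-e^{t_0}|\big)^{1-\epsilon_2}(1-\epsilon_3)$, using $|1-e^t|\ge1-|e^t|=1-\rho$, and $\rho>0$ because this forces $|d|$ up to the exit value $\delta$ (consistent with $\mathcal I$ being a repeller, Corollary \ref{cor:repeller}). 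Part (iii) follows because the same formula persists for $|e^t|$ slightly below $\rho$, where $|d(t)|$ remains within a factor $1+\epsilon_3$ of $\delta$.

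The main obstacle is precisely the bookkeeping in the third step: the trajectory may be long (possibly $\Re t\to-\infty$) and may perform many pole, zero and $u=1$ crossings, so one must guarantee that the accumulated error $\mathcal R(t)$ never leaves the prescribed $(\epsilon_2,\epsilon_3)$ budget. This works because every error density is $O(|e^t|)$ — the $O(e^{2t})$ tail of $e^t/(e^t-1)$, the $\mathcal D_\infty^{**}$-error of Lemma \ref{lemmaDinfty}, the $\mathcal H^*$- and $\mathcal D_1^*$-errors of Lemma \ref{lemmaH}, and the contributions of the (individually short) crossing excursions — so it is integrated against the finite, small weight $\int_\gamma|e^s|\,|ds|\le|e^{t_0}|<\epsilon_1$ rather than against the possibly infinite length of $\gamma$; combined with the fact that a crossing multiplies $d$ only by $1+O(\delta)$, this keeps $\mathcal R$ uniformly small throughout.
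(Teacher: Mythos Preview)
Your argument is correct and its core is the same as the paper's: integrate $\dot d/d=-\dot E/E$ along a path in the strip $\rho\le|e^t|\le|e^{t_0}|$, identify the leading term $e^t/(e^t-1)$ (whose primitive is $\log(1-e^t)$), and show that the remaining error is uniformly small because it is $O(|e^t|)$ and hence has finite total weight as $\Re t\to-\infty$. Parts (i)--(iii) are then read off exactly as you do.

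There are two organisational differences worth noting. First, the paper does not pass \emph{through} the $\mathcal D_0^*$- and $\mathcal D_x^*$-crossings as you do; instead it uses Lemmas~\ref{lemma:nearDx} and~\ref{lemma:nearD0} only to conclude that the set of times where the solution leaves $\mathcal I\setminus(\mathcal D_0^*\cup\mathcal D_x^*)$ is a union of approximate disks of radius $\sim|d|$, and then \emph{deforms} the path slightly (staying $C^1$-close to the straight segment from $t_0$ to $t_1$) so as to avoid those disks entirely. On the deformed path $d=1/E$ holds throughout, and the crossing bookkeeping you carry out is unnecessary. Your alternative---going through the disks and invoking $d(t)/d(\tau)\sim1$ across each---is equally valid and arguably more transparent, but the paper's detour trick is shorter. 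Second, the paper does not combine Lemmas~\ref{lemmaDinfty} and~\ref{lemmaH} piece by piece along $\mathcal I$ as you do; it integrates Lemma~\ref{lemmaDinfty} only near $\mathcal D_\infty^{**}$ and then appeals to the B\"acklund symmetries of $\Psix$ to transport the conclusion to the remaining components of $\mathcal I$. Your direct route, using Lemma~\ref{lemmaH} explicitly on $\mathcal H^*\cup\mathcal D_1^*$, avoids that appeal to symmetry and is more self-contained.

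One small inaccuracy: the inequality $\int_\gamma|e^s|\,|ds|\le|e^{t_0}|$ is not a consequence of monotonicity of $\Re t$ alone (monotonicity gives the reverse inequality for the horizontal part); what you actually need---and what the paper's ``$C^1$-close to a straight segment'' makes explicit---is that $\gamma$ has bounded slope, so that $\int_\gamma|e^s|\,|ds|\le C|e^{t_0}|$ for a fixed constant $C$. This does not affect the argument.
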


\begin{proof}
Suppose a solution of the system \eqref{eq:vectorfield} is close to the infinity set at times $t_0$ and $t_1$.
If follows from Lemmas \ref{lemma:nearDx} and \ref{lemma:nearD0} that for every solution close to $\mathcal{I}$, the set of complex times $t$ such that the solution is not close to $\mathcal{I}\setminus (\mathcal{D}_0^*\cup \mathcal{D}_x^*)$ is the union of approximate disks of radius $\sim|d|$.
Hence if the solution is near $\mathcal{I}$ for all complex times $t$ such that
$|e^{t_0}|\ge|e^t|\ge|e^{t_1}|$, 
then there exists a path $\mathcal{P}$ from $t_0$ to $t_1$, such that the solution is close to $\mathcal{I}\setminus (\mathcal{D}_0^*\cup \mathcal{D}_x^*)$ for all $t\in\mathcal{P}$ and $\mathcal{P}$ is $C^1$-close to the path: $s\mapsto t_1^st_0^{1-s}$, $s\in[0,1]$.

Then Lemma \ref{lemmaDinfty} implies that near $\mathcal{D}_{\infty}^{**}$:
$$
\frac{E(t)}{E(t_0)}\sim \frac{1-e^{t_0}}{1-e^t},
$$
and by using Lemma \ref{lem:distancefn} we find:
\begin{equation}\label{eq:d-estimate}
d(t)\sim d(t_0)\frac{1-e^{t}}{1-e^{t_0}}.
\end{equation}

For the first statement of the theorem, we have:
$$
\delta>|d(t)|
\ge
|d(t_0)|\left(\frac{1-|e^{t}|}{|1-e^{t_0}|}\right)^{1-\epsilon_2}(1-\epsilon_3),
$$
and the desired result follows from $\rho\le e^t$.
For $|e^t|\le |e^{t_0}|$, the second statement follows from \eqref{eq:d-estimate} and the third one from the  definition of $\rho$.

The symmetries of the sixth Painlev\'e equation show that the same statments follow near other lines of the infinity set $\mathcal I$.
\end{proof}

As a consequence of Theorem \ref{th:estimates}, we can prove the repelling property of the set $\mathcal{I}$.

\begin{corollary}\label{cor:repeller}
No solution with the initial conditions in the space of the initial values intersects $\mathcal{I}$.
A solution that is close to $\mathcal{I}$ for a certain value of the independent variable $t$ will stay in the vicinity of $\mathcal{I}$ only for a limited range of $t$.
Moreover, if a solution is sufficiently close to $\mathcal{I}$ at a point $t$, then it will have a pole in a neighbourhood of $t$.
\end{corollary}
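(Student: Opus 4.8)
The plan is to derive all three assertions of the corollary as direct consequences of Theorem~\ref{th:estimates} together with Lemmas~\ref{lemma:nearDx} and~\ref{lemma:nearD0}, exploiting the behaviour of the distance function $d$ introduced in Lemma~\ref{lem:distancefn}. First I would establish that no solution meets $\mathcal I$. Since $\mathcal I\subset\mathcal S(x_0)\setminus\mathrm{Oka}(x_0)$ is not in the space of initial values, a solution could only reach $\mathcal I$ by limiting onto it as $t$ varies. Suppose a solution came arbitrarily close to $\mathcal I$ at a sequence of times, or actually touched it at some $t^*$. On the part $\mathcal I\setminus(\mathcal D_0^*\cup\mathcal D_x^*)$, the estimate \eqref{eq:d-estimate} of Theorem~\ref{th:estimates} shows that $d(t)\sim d(t_0)\,(1-e^{t})/(1-e^{t_0})$ along an approximate straight path, so $d$ cannot vanish for $e^t$ bounded away from $0$ and $1$; hence the solution stays a definite distance away. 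On the remaining pieces $\mathcal D_0^*$ and $\mathcal D_x^*$, Lemmas~\ref{lemma:nearD0} and~\ref{lemma:nearDx} show that the set of times at which the solution is near these lines consists of approximate disks $\Delta_0$, $\Delta_x$ on which $t\mapsto u_{03}(t)$, respectively $t\mapsto u_{x3}(t)$, is a diffeomorphism onto a large disk containing $0$; but $u_{03}=0$ (resp.\ $u_{x3}=0$) is the point where the solution crosses the \emph{exceptional} line $\mathcal E_0$ (resp.\ $\mathcal E_x$), which lies in $\mathrm{Oka}(x_0)$, not in $\mathcal I$. Thus approaching $\mathcal D_0^*$ or $\mathcal D_x^*$ forces the solution through a point of the space of initial values, and the solution never actually reaches $\mathcal I$.

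Next I would prove the second statement, that a solution close to $\mathcal I$ stays close only for a limited range of $t$. This is essentially the content of part~(iii) of Theorem~\ref{th:estimates}: if $|d(t_0)|<\delta$ with $|e^{t_0}|<\epsilon_1$, then the infimum $\rho$ in that theorem is strictly positive (part~(i)), and for $|e^t|$ just below $\rho$ one has $|d(t)|\ge\delta(1-\epsilon_3)$, i.e.\ the solution has left the $\delta$-neighbourhood of $\mathcal I$. Since $\rho>0$ is bounded below by the explicit inequality in~(i), the range of $t$ on which the solution remains in the vicinity of $\mathcal I$ (measured by $|e^t|\ge\rho$) is genuinely bounded away from $t\to0$ (i.e.\ $\Re t\to-\infty$).

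Finally, for the pole statement: being sufficiently close to $\mathcal I$ at time $t$ means being close to one of its components. If the solution is near $\mathcal D_x^*\setminus\mathcal H^*$, Lemma~\ref{lemma:nearDx} directly provides a unique $\tau$ near $t$ with $u(\tau)=e^\tau$ and $v$ having a pole at $\tau$; if it is near $\mathcal D_0^*\setminus\mathcal H^*$, Lemma~\ref{lemma:nearD0} provides a unique $\tau$ near $t$ where $u(\tau)=0$ and $v$ has a pole. In both cases the corresponding exceptional line crossing (at $\mathcal E_x$ or $\mathcal E_0$) is a pole of $v$, as recorded in Section~\ref{s:movable}. If instead the solution is near $\mathcal D_\infty^{**}$, $\mathcal D_1^*$, or $\mathcal H^*\setminus(\mathcal D_0^*\cup\mathcal D_x^*)$, then by the remark following Lemma~\ref{lemma:nearDx} (proved in Appendix~\ref{app:EstD1Dinfty}) the solution transversally crosses $\mathcal E_1$, respectively $\mathcal E_\infty$ or $\mathcal E_\infty^-$, at a nearby time; by the expansions in Section~\ref{s:movable} these crossings are exactly where $u$ has a pole (for $\mathcal E_\infty$, $\mathcal E_\infty^-$) or $u-1$ vanishes while $v$ has a pole (for $\mathcal E_1$). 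In every case the solution has a pole (of $u$ or of $v$) at a point in a neighbourhood of $t$. The main obstacle in this argument is the bookkeeping near $\mathcal H^*$ and the intersections of the $(-2)$-curves: one must be careful that ``sufficiently close to $\mathcal I$'' forces closeness to a \emph{single} well-understood component (or its intersection with another), so that the relevant lemma applies; the coordinate analysis of Lemma~\ref{lemmaH} handles the $\mathcal H^*$-neighbourhoods and shows the $\mathcal D_x^*$ piece is the only delicate one, which is precisely why Lemma~\ref{lemma:nearDx} was isolated.
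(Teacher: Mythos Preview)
Your proposal is correct and follows essentially the same approach as the paper, which simply states that the corollary follows from Theorem~\ref{th:estimates} together with Lemmas~\ref{lemma:nearDx} and~\ref{lemma:nearD0}. You have unpacked this one-line proof into an explicit case analysis over the components of $\mathcal I$; your additional appeal to the Appendix~\ref{app:EstD1Dinfty} lemmas for the $\mathcal D_1^*$ and $\mathcal D_\infty^{**}$ pieces is not strictly required by the paper's logic (the remark after Lemma~\ref{lemma:nearDx} notes they are not needed), but it makes the pole assertion more transparent.
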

\begin{proof}
The statement follows from Theorem \ref{th:estimates} and  Lemmas \ref{lemma:nearDx}, \ref{lemma:nearD0}.
\end{proof}
\begin{remark}
Parts (i) and (ii) of Theorem \ref{th:estimates} give estimates on the behaviour of the solutions near the infinity set.
Part (iii) implies that a solution does not stay indefinitely near the infinity set as $e^t\to0$.
\end{remark}

\section{The limit set}\label{s:limit}

Our definition of the limit set is the extension of the standard concept of limit sets in dynamical systems to complex-valued solutions.
\begin{definition}
	Let $(u(t),v(t))$ be a solution of (\ref{eq:vectorfield}). \emph{The limit set} $\Omega_{u,v}$ of $(u(t),v(t))$ is the set of all $s\in\mathcal{S}(0)\setminus\mathcal{I}(0)$ such that there exists a sequence $t_n\in\mathbf{C}$ satisfying:
	$$
	\lim_{n\to\infty}\Re(t_n)=-\infty
	\quad\text{and}\quad
	\lim_{n\to\infty}(u(t_n),z(v_n))=s.
	$$
\end{definition}

\begin{theorem}\label{th:limit-set}
There exists a compact set $K\subset\mathcal{S}(0)\setminus\mathcal{I}(0)$ such that the limit set $\Omega_{u,v}$ of any solution $(u,v)$ of (\ref{eq:vectorfield}) is contained in $K$.
Moreover, $\Omega_{u,v}$ is a nonempty, compact and connected set, which is invariant under the flow of the autonomous system given in Section \ref{s:S0}.
\end{theorem}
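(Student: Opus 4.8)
The plan is to first confine the limit set to a compact part of $\mathcal{S}(0)$ that avoids the infinity set, then realise $\Omega_{u,v}$ as a nested intersection of nonempty compact connected sets, and finally upgrade this description to flow--invariance by exploiting that the Painlev\'e vector field \eqref{eq:vectorfield} degenerates to the autonomous one \eqref{eq:autouv} as $\Re t\to-\infty$. For the confinement, recall from Corollary~\ref{cor:repeller} that no solution meets $\mathcal{I}$, and that Theorem~\ref{th:estimates} --- in particular part~(iii) --- together with Lemmas~\ref{lemma:nearDx} and~\ref{lemma:nearD0} shows that a solution cannot remain in a small neighbourhood of $\mathcal{I}$ as $|e^t|\to 0$: as soon as $|d(t)|<\delta$, the distance function is driven back up to size $\delta(1-\epsilon_3)$ over a controlled range of $t$. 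Hence the set of accumulation points of the trajectory as $\Re t\to-\infty$ stays away from a fixed open neighbourhood $N$ of $\mathcal{I}(0)=\lim_{x\to 0}\mathcal{I}(x)=\mathcal{D}_0^{0***}\cup\mathcal{I}^0$ in $\mathcal{S}(0)$ --- a neighbourhood which also contains the extra component $\mathcal{D}_0^{0***}$ that appears only at $x=0$, and therefore the elliptic base point $\mathfrak{u}$. Setting $K:=\mathcal{S}(0)\setminus N$ yields a compact set, independent of the solution, with $\Omega_{u,v}\subset K\subset\mathcal{S}(0)\setminus\mathcal{I}(0)$.

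For $T$ sufficiently negative, let $\Gamma_T\subset\mathcal{S}(0)$ be the set of limits $\lim_n \gamma(t_n)$ over sequences with $\Re t_n\le T$, i.e.\ the tail of the trajectory read in the fibre $\mathcal{S}(0)$ via the identification of $\mathcal{S}(e^t)$ with $\mathcal{S}(0)$, which is legitimate over $K$. Each $\Gamma_T$ is a nonempty closed, hence compact, subset of $K$; the family decreases as $T\to-\infty$; and $\Omega_{u,v}=\bigcap_T\Gamma_T$. A nested intersection of nonempty compact sets being nonempty and compact, $\Omega_{u,v}$ inherits both properties. Moreover, the half--plane $\{t\in\mathbb{C}:\Re t\le T\}$ is connected, and $\gamma$ extends holomorphically through its poles inside Okamoto's space while never meeting $\mathcal{I}$; hence $\gamma(\{\Re t\le T\})$, and so its closure $\Gamma_T$, is connected. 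Since a nested intersection of a decreasing family of nonempty compact connected sets is connected, $\Omega_{u,v}$ is connected.

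To prove invariance, write $\Phi^{\sigma}$ for the time--$\sigma$ flow of the autonomous system \eqref{eq:autouv} on $\mathcal{S}(0)\setminus(\mathcal{I}^0\cup\{\mathfrak{u}\})$. Given $s\in\Omega_{u,v}$ and $\sigma\in\mathbb{C}$, choose $t_n$ with $\Re t_n\to-\infty$ and $\gamma(t_n)\to s$. Along the segments from $t_n$ to $t_n+\sigma$ the coefficients of \eqref{eq:vectorfield} converge uniformly to those of \eqref{eq:autouv}, since $e^{t_n},e^{t_n+\sigma}\to 0$; by continuous dependence of the solution of an analytic ODE on initial data and parameters, applied inside the compact, regular region $K$, this gives $\gamma(t_n+\sigma)\to\Phi^{\sigma}(s)$, so that $\Phi^{\sigma}(s)\in\Omega_{u,v}$. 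As $\sigma$ is arbitrary, $\Omega_{u,v}$ is invariant under the autonomous flow; that $\Phi^{\sigma}(s)$ is defined and remains in $K$ follows a posteriori from $\Phi^{\sigma}(s)\in\Omega_{u,v}\subset K$ together with a short continuation argument.

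The main obstacle is the interface between the moving fibration $\mathcal{S}$ and the fixed limit fibre $\mathcal{S}(0)$: because $\beta_x$ is blown up on the exceptional line $\mathcal{D}_0$ when $x=0$ (rather than on $\mathcal{H}$, as for $x\neq 0$), the identification $\mathcal{S}(e^t)\cong\mathcal{S}(0)$ breaks down near $\mathcal{D}_0^{0***}$, $\mathcal{E}_x^0$ and the elliptic base point $\mathfrak{u}$. The whole argument therefore hinges on the confinement step excluding precisely these pieces, i.e.\ on a careful reading of the estimates of Section~\ref{s:estimates} in the degeneration $x\to 0$ --- in particular controlling the factor $e^{-\tau}$ that appears unboundedly in the radius estimate of Lemma~\ref{lemma:nearD0}.
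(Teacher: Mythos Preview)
Your argument follows the same architecture as the paper's: confinement away from $\mathcal{I}$ via Theorem~\ref{th:estimates}, realisation of $\Omega_{u,v}$ as a decreasing intersection of nonempty compact connected tails, and invariance by continuous dependence as the vector field converges to the autonomous one. The one genuine technical difference is that the paper works in the \emph{total space} $\mathcal{S}=\bigcup_{x_0}\mathcal{S}(x_0)$: it introduces the compacts $K_{\eta,r}=\{s\in\mathcal{S}:|x|\le r,\ |d(s)|\ge\eta\}$, takes $\Omega_{(u,v),r}$ to be the closure of the tail $\{(u(t),v(t)):|e^t|\le r\}$ inside $\mathcal{S}$, and lets $r\downarrow 0$ so that the whole family shrinks into the fibre $\mathcal{S}(0)$ automatically. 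This sidesteps the fibrewise identification $\mathcal{S}(e^t)\cong\mathcal{S}(0)$ you rely on and, with it, the delicate point you flag in your last paragraph about the degeneration near $\mathcal{D}_0^{0***}$, $\mathcal{E}_x^0$ and $\mathfrak{u}$: in the total-space picture no identification is ever made, and the limit fibre is reached only through the nested-intersection step. Your route is not wrong, but it does require justifying that the chartwise trivialisation of the family is smooth over the complement of a fixed neighbourhood of $\lim_{x\to0}\mathcal{I}(x)$, which is exactly the work the paper's formulation absorbs into the definition of $K_{\eta,r}$.
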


\begin{proof}
	For any positive numbers $\eta$, $r$ let $K_{\eta,r}$ denote the set of all $s\in\mathcal{S}(x)$ such that $|x|\le r$ and $|d(s)|\ge\eta$.
	Since $\mathcal{S}(x)$ is a complex analytic family over $\mathbb{P}^1\setminus\{0,1\}$ of compact surfaces $\mathcal{S}(x)$, $K_{\eta,r}$ is also compact.
	Furthermore, $K_{\eta,r}$ is a compact subset of the Okamoto's space $\mathcal{S}\setminus\mathcal{I}(0)$.
	When $r$ approaches $0$, the sets $K_{\eta,r}$ shrink to the compact set:
	$$
	K_{\eta,0}
	=
	\{s\in\mathcal{S}(0)\mid |d(s)\ge\eta \}
	\subset\mathcal{S}(0)\setminus\mathcal{I}(0).
	$$
If follows from Theorem \ref{th:estimates} that there is $\eta>0$ such that for every solution $(u,v)$ there exists $r_0>0$ with the following property:
$$
(u(t),v(t))\in K_{\eta,r_0}
\text{ for every }
t
\text{ such that }
|e^t|\le r_0.
$$
Hereafter, we take $r\le r_0$, when it follows that $(u(t),v(t))\in K_{\eta,r}$ whenever $|e^t|\le r$.

	Let $T_r=\{t\in\mathbb{C}\mid |e^t|\le r\}$, and let $\Omega_{(u,v),r}$ denote the closure of the image set $\bigl(u(T_r), v(T_r)\bigr)$ in $\mathcal{S}$.
Since $T_r$ is connected and $(u,v)$ is continuous, $\Omega_{(u,v),r}$ is also connected.
Since $\bigl(u(T_r), v(T_r)\bigr)$ is contained in the compact set $K_{\eta, r}$, its closure $\Omega_{(u,v),r}$ is also contained in $K_{\eta, r}$, 
and therefore $\Omega_{(u,v),r}$ is a nonempty compact and connected subset of $\mathcal{S}\setminus\mathcal{S}(0)$.

The intersection of a decreasing sequence of nonempty, compact, and connected sets is a nonempty, compact, and connected. Therefore, as $\Omega_{(u,v),r}$ decreases to $\Omega_{(u,v)}$ as $r$ approaches zero, it follows that $\Omega_{(u,v)}$ is a nonempty, compact and connected subset of $\mathcal{S}$.
Since $\Omega_{(u,v),r}\subset K_{\eta,r}$, for all $r\le r_0$, and the sets $K_{\eta,r}$ shrink to the compact subset $K_{\eta,0}$ of $\mathcal{S}(0)\setminus\mathcal{I}(0)$ as $r$ decreases to zero, it follows that $\Omega_{(u,v)}\subset K_{\eta,0}$.
This proves the first statement of the theorem with $K=K_{\eta,0}$.

Since $\Omega_{(u,v)}$ is the intersection of the decreasing family of compact sets $\Omega_{(u,v),r}$, there exists for every neighbourhood $A$ of $\Omega_{(u,v)}$ in $\mathcal{S}$, an $r>0$ such that $\Omega_{(u,v),r}\subset A$. Hence $(u(t),v(t))\in A$ for every $t\in\mathbb{C}$ such that $|e^t|\le r$.
If $\{t_j\}$ is any sequence in $\mathbb{C}\setminus\{0\}$ such that $|t_j|\to0$, then the compactness of $K_{\eta,r}$, in combination with 
$\bigl(u(T_r), v(T_r)\bigr)\subset K_{\eta,r}$, implies that there is a subsequence $j=j(k)\to\infty$ as $k\to\infty$ and an $s\in K_{\eta,r}$, such that:
$$
(u(t_{j(k)}),v(t_{j(k)}))\to s\ \text{as}\ k\to\infty.
$$
It follows, therefore, that $s\in\Omega_{(u,v)}$.

Next, we prove that $\Omega_{(u,v)}$ is invariant under the flow $\Phi^{\tau}$ of the autonomous Hamiltonian system.
Let $s\in\Omega_{(u,v)}$ and $t_j$ be a sequence in $\mathbb{C}\setminus\{0\}$ such that $e^{t_j}\to0$ and $(u(t_j),v(t_j))\to s$.
Since the $t$-dependent vector field of the Painlev\'e system converges in $C^1$ to the vector field of the autonomous Hamiltonian system as $e^t\to0$, it follows from the continuous dependence on initial data and parameters, that the distance between $(u(t_j+\tau),v(t_j+\tau))$ and $\Phi^{\tau}(u(t_j),v(t_j))$ converges to zero as $j\to\infty$.
Since $\Phi^{\tau}(u(t_j),v(t_j))\to\Phi^{\tau}(s)$ and $|e^t_j|\to0$ as $j\to\infty$, it follows that $(u(t_j+\tau),v(t_j+\tau))\to\Phi^{\tau}(s)$ and $e^{t_j+\tau}\to0$ as $j\to\infty$, hence $\Phi^{\tau}(s)\in\Omega_{(u,v)}$.
\end{proof}

\begin{proposition}\label{prop:pole-lines}
Every solution $(u(t),v(t))$ with the essential singularity at $x=0$ intersects each of the exceptional lines $\mathcal{E}_0$, $\mathcal{E}_x$, $\mathcal{E}_1$, $\mathcal{E}_{\infty}$, $\mathcal{E}_{\infty}^-$ infinitely many times in any neighbourhood of that singular point.
\end{proposition}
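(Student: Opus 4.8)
The plan is a contradiction argument built on the limit set of Theorem~\ref{th:limit-set}. Assume the conclusion fails for one of the five lines, call it $\mathcal{E}_\bullet$; then the solution meets $\mathcal{E}_\bullet$ only finitely many times near $x=0$, so after shrinking the neighbourhood we may assume it never does for $|e^t|<\epsilon$, i.e.\ the corresponding value of $u$ (namely $u=0$, $u=e^t$, $u=1$ or $u=\infty$, according to the case) is never attained there. By Theorem~\ref{th:limit-set} the limit set $\Omega:=\Omega_{u,v}$ is a nonempty, compact, connected subset of $\mathcal{S}(0)$, disjoint from the locus $\mathcal{I}^0$ on which the autonomous energy $E_0$ is infinite, and invariant under the autonomous flow $\Phi^\tau$ of Section~\ref{s:S0}.

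The first step is to show that $\Omega$ does not meet the curve $\mathcal{E}_\bullet^0\subset\mathcal{S}(0)$ obtained as the limit of $\mathcal{E}_\bullet$. The local expansions of Section~\ref{s:movable} show that on each $\mathcal{E}_j\setminus\mathcal{I}$ the Painlev\'e vector field is transversal to $\mathcal{E}_j$ (e.g.\ $\dot v_{04}\neq0$ on $\{v_{04}=0\}$, and similarly for the other charts); together with the crossing statements of Lemmas~\ref{lemma:nearDx}, \ref{lemma:nearD0} and the Remark following Lemma~\ref{lemma:nearDx}, this means that any solution passing sufficiently close to a point of $\mathcal{E}_\bullet^0\setminus\mathcal{I}^0$ crosses $\mathcal{E}_\bullet$ at a unique nearby complex time. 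Hence if some $s\in\Omega$ lay on $\mathcal{E}_\bullet^0$, then (since $\Omega\cap\mathcal{I}^0=\emptyset$ forces $s\in\mathcal{E}_\bullet^0\setminus\mathcal{I}^0$) any sequence $t_n$ with $\Re t_n\to-\infty$ and $(u(t_n),v(t_n))\to s$ would produce a crossing of $\mathcal{E}_\bullet$ near each $t_n$; these accumulate at $x=0$, contradicting the choice of $\epsilon$. Therefore $\Omega\cap\mathcal{E}_\bullet^0=\emptyset$.

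The second step produces the opposite conclusion for four of the five lines. Since $(u,v)$ has an essential singularity at $x=0$, its limit set is not a single equilibrium, nor is it contained in one of the equilibrium lines of Section~\ref{s:S0}: a solution whose limit set lies in the equilibrium locus converges as $\Re t\to-\infty$ and so, by the asymptotic analysis near that locus, does not have an essential singularity at $x=0$. Hence $\Omega$ contains a point $s$ on a non-equilibrium orbit $\Gamma$, and by invariance the closure $\overline{\Gamma}\subseteq\Omega$. Now $\overline{\Gamma}$ is a compact curve on which $E_0$ is finite, hence disjoint from $\mathcal{I}^0$; and from the explicit integration of the autonomous system in Section~\ref{s:S0} --- the relation $(\dot u_3)^2=\alpha u_3^2+\beta u_3+\gamma$ for $u_3=1/u$ and the resulting $\sinh$- or quadratic trajectories --- one reads off that along $\Gamma$ the value $u=1$ is attained, that $u_3$ vanishes with $\dot u_3=\pm\theta_\infty$ (both signs occurring), and that $u\to0$ at an end of $\Gamma$. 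Tracking these through the resolution of Section~\ref{s:okamoto} shows that $\overline{\Gamma}$ meets $\mathcal{E}_1^0$, both $\mathcal{E}_\infty^0$ and $(\mathcal{E}_\infty^-)^0$, and $\mathcal{E}_0^0$ (the value $u=0$ landing on $\mathcal{E}_0^0$, generically away from $\mathcal{D}_0^{0***}$, rather than on $\mathcal{E}_x^0\subset\mathcal{I}^0$). Thus $\Omega$ meets $\mathcal{E}_\bullet^0$ whenever $\mathcal{E}_\bullet\in\{\mathcal{E}_0,\mathcal{E}_1,\mathcal{E}_\infty,\mathcal{E}_\infty^-\}$, contradicting the first step, and these four cases are done.

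The remaining line $\mathcal{E}_x$ is the main obstacle, precisely because its limit $\mathcal{E}_x^0$ lies inside $\mathcal{I}^0$, so the compact curve $\overline{\Gamma}$ avoids it and the above contradiction is unavailable. For this case I would argue directly at the level of the solution: $\Omega$ contains the end of $\Gamma$ along which $u\to0$, so the genuine solution returns infinitely often, at times $t_n$ with $\Re t_n\to-\infty$, to a region where $u$ is small; there it shadows $\Gamma$ (using the $C^1$-convergence of the non-autonomous vector field to the autonomous one as $e^t\to0$, as in the proof of Theorem~\ref{th:limit-set}), so $u$ becomes comparable to $e^t$, and one must show $u$ then actually takes the value $e^t$ near each $t_n$. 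This requires a quantitative shadowing estimate over a time window whose length may grow with $n$, controlled by Theorem~\ref{th:estimates} and Corollary~\ref{cor:repeller}, to guarantee that the solution genuinely crosses $\mathcal{E}_x$ rather than slipping past it into $\mathcal{I}$. A secondary technical point, used in the second step, is to exclude the possibility that $\Omega$ is contained in one of the equilibrium lines of Section~\ref{s:S0}, which rests on the local behaviour of the autonomous flow near those lines.
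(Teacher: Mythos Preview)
Your first step---transversality of the vector field to the exceptional lines implies $\Omega\cap\mathcal{E}_\bullet^0=\emptyset$---is exactly the paper's opening move. After that the two arguments diverge.

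The paper's second step is far shorter than your orbit analysis: it observes that $\mathscr{E}=\mathcal{E}_0\cup\mathcal{E}_x\cup\mathcal{E}_1\cup\mathcal{E}_\infty\cup\mathcal{E}_\infty^-$ is precisely the preimage in $\mathcal{S}(0)\setminus\mathcal{I}(0)$ of $\mathcal{D}_\infty\cup\mathcal{H}$ under the blow-down, so $\mathcal{S}(0)\setminus(\mathcal{I}(0)\cup\mathscr{E})$ is exactly the affine $(u,v)$-chart. Thus if $\Omega$ avoids all five lines it is a compact subset of $\mathbb{C}^2_{u,v}$, whence $u,v$ remain bounded for small $|e^t|$ and $x=0$ cannot be essential. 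No case distinction, no explicit orbits. Note, however, that the paper's contradiction hypothesis is that the solution meets the \emph{union} $\mathscr{E}$ only finitely often; read literally it proves infinitely many crossings of $\mathscr{E}$, not of each line separately. Your line-by-line attempt aims at the stronger statement.

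That attempt has genuine gaps beyond the $\mathcal{E}_x$ case you already flag. Your claim that $\overline{\Gamma}$ meets $\mathcal{E}_0^0$ fails: from the $(u_{04},v_{04})$ formulae in Appendix~\ref{s:Details0B}, $E_0$ is \emph{constant} on $\mathcal{E}_0^0=\{v_{04}=0\}$ (equal to $-(\theta_x-1)\theta_0$, as also recorded in Section~\ref{s:S0}), so only one energy level touches $\mathcal{E}_0^0$; a generic orbit lies on a different level and never reaches it. Your alternative route---``$u\to0$ at an end of $\Gamma$ lands on $\mathcal{E}_0^0$''---is also wrong: along the $\sinh$ trajectory one has $uv\to\tfrac12\sqrt{\alpha}$ as $u\to0$, hence in the $(u_{03},v_{03})$ chart $v_{03}\to0$ with $u_{03}\to\tfrac12\sqrt{\alpha}-\theta_0\neq0$ generically, so the limit point sits on $\mathcal{D}_0^{0***}\subset\mathcal{I}(0)$, not on $\mathcal{E}_0^0$; this also breaks your inclusion $\overline{\Gamma}\subseteq\Omega$. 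A smaller point: ``$u=1$ is attained along $\Gamma$'' does not by itself give $\overline{\Gamma}\cap\mathcal{E}_1^0\neq\emptyset$, since that requires $v\to\infty$ as $u\to1$; one can repair this for $\mathcal{E}_1,\mathcal{E}_\infty,\mathcal{E}_\infty^-$ by noting $E_0$ is nonconstant (affine) on each, so every level set meets them, but you still owe an argument that the specific orbit component $\Gamma$ inside the level curve reaches those points. For $\mathcal{E}_0$ and $\mathcal{E}_x$ your scheme does not close.
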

\begin{proof}
  For conciseness, we refer to the solution $(u(t),v(t))$ of the system as the Painlev\'e vector field and denote the vector field near each of the five exceptional lines $\mathcal{E}_0$, $\mathcal{E}_x$, $\mathcal{E}_1$, $\mathcal{E}_{\infty}$, $\mathcal{E}_{\infty}^-$ by $(U(t), V(t))$. Furthermore, let
  \[
\mathscr E=\mathcal{E}_0\cup\mathcal{E}_x\cup\mathcal{E}_1\cup\mathcal{E}_{\infty}\cup\mathcal{E}_{\infty}^-.
    \]

    Now suppose that $(U(t),V(t))$ intersects $\mathscr E$ only finitely many times.
    According to Theorem \ref{th:limit-set}, the limit set $\Omega_{(u,v)}$ is a compact set in $\mathcal{S}(0)\setminus\mathcal{I}(0)$. 
    
If $\Omega_{(u,v)}$ intersects one the five exceptional lines $\mathcal{E}_0$, $\mathcal{E}_x$, $\mathcal{E}_1$, $\mathcal{E}_{\infty}$, $\mathcal{E}_{\infty}^-$ at a point $p$, then there exists a $t$ such that $e^t$ is arbitrarily close to zero and the Painlev\'e vector field is arbitrarily close to $p$,
when the transversality of the vector field to the exceptional line implies that $(U(\tau),V(\tau))\in \mathscr E$
for a unique $\tau$ ($\not= t$) near $t$. This is a contradiction to our assumption, as it follows that $(U(t),V(t))$ intersects $\mathscr E$ infinitely many times. Therefore,  we must have that $\Omega_{(u,v)}$ is a compact subset of 
$\mathcal{S}(0)
\setminus
(\mathcal{I}(0)\cup\mathscr E)
$.

However, $\mathscr E$ is equal to the set of all points in 
$\mathcal{S}(0)\setminus\mathcal{I}(0)$, which project (blow-down) to the line $\mathcal D_\infty\cup\mathcal{H}^*$, and therefore 
$\mathcal{S}(0)
\setminus
(\mathcal{I}(0)\cup\mathscr E)
$
is the affine $(u,v)$-coordinate chart, of which $\Omega_{(u,v)}$ is a compact subset, which implies that $u(t)$ and $v(t)$ remain bounded for small $|e^t|$.
$|e^t|\to0$. 
From there, $x=0$ is not an essential singularity.
\end{proof}
	 
\begin{theorem}\label{th:poles-zeroes-ones}
Every solution of the sixth Painlev\'e equation has infinitely many poles, infinitely many zeroes, and infinitely many times takes value $1$ in any neighbourhood of its essential singularity.
\end{theorem}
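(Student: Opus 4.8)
The plan is to derive Theorem~\ref{th:poles-zeroes-ones} as a consequence of Proposition~\ref{prop:pole-lines} together with the local normal forms computed in Section~\ref{s:movable}. By the Möbius symmetries of $\Psix$ recalled in \S\ref{s:background}, which permute the fixed singularities $0,1,\infty$, it suffices to treat a solution possessing an essential singularity at $x=0$ (the statement being vacuous for a solution with no essential singularity among $0,1,\infty$). Throughout I work in the coordinate $t=\ln x$, so that approach to the essential singularity means $\Re(t)\to-\infty$, and I recall that in this coordinate the transcendent is $y=u$, $z=v$.

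First I would invoke Proposition~\ref{prop:pole-lines}: the solution $(u(t),v(t))$ meets each of the exceptional lines $\mathcal{E}_0,\mathcal{E}_x,\mathcal{E}_1,\mathcal{E}_\infty,\mathcal{E}_\infty^-$ infinitely many times in every neighbourhood of the singular point, so for each of these lines there is a sequence of complex times $\tau_n$ with $\Re(\tau_n)\to-\infty$ at which the trajectory lies on it. Next I would read off from the series expansions of \S\ref{s:movable} the meaning of each such crossing for the transcendent $y=u$: a crossing of $\mathcal{E}_0$ is a simple zero of $u$ (subsection ``Points where $u$ has a zero and $v$ a pole''); a crossing of $\mathcal{E}_1$ is a point where $u-1$ has a simple zero (subsection ``Points where $u=1$''); and a crossing of $\mathcal{E}_\infty$ or of $\mathcal{E}_\infty^-$ is a simple pole of $u$, with residue $\pm(1-e^{\tau})/\theta_\infty$ respectively (subsection ``Points where $u$ has a pole and $v$ a zero''). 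The transversality of the Painlevé field to these lines---established in Lemma~\ref{lemma:nearDx} for $\mathcal{E}_x$ and in Appendix~\ref{app:EstD1Dinfty} for $\mathcal{E}_1,\mathcal{E}_\infty,\mathcal{E}_\infty^-$---shows that each crossing is isolated, so the $\tau_n$ genuinely index distinct poles (respectively zeroes, respectively $1$-points) rather than a single one revisited.

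Putting these observations together, the crossings with $\mathcal{E}_\infty\cup\mathcal{E}_\infty^-$ yield infinitely many poles of $u(t)$ with $\Re(\tau_n)\to-\infty$, those with $\mathcal{E}_0$ yield infinitely many zeroes, and those with $\mathcal{E}_1$ yield infinitely many points where $u=1$. Since $\Re(\tau_n)\to-\infty$ means $e^{\tau_n}\to0$, each of these three families accumulates at $x=0$ and therefore has infinitely many members in any neighbourhood of the essential singularity, which is the assertion. (One may remark in passing that the crossings with $\mathcal{E}_x$ likewise produce infinitely many points with $y(x)=x$, though this is not part of the statement.) Applying the Möbius symmetries then gives the corresponding conclusions at $x=1$ and $x=\infty$.

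I expect the main difficulty of this circle of ideas to lie not in the present theorem but upstream, in Proposition~\ref{prop:pole-lines}, whose proof rests on the limit-set Theorem~\ref{th:limit-set} and the repelling and transversality estimates of \S\ref{s:estimates}. Within the proof of Theorem~\ref{th:poles-zeroes-ones} itself the only point that needs care is the bookkeeping that distinct exceptional-line crossings correspond to distinct zeroes, poles and $1$-points, and that the labelling parameters $\tau_n$ can be taken with $\Re(\tau_n)\to-\infty$; both follow directly from the local expansions of \S\ref{s:movable} and the transversality lemmas, so no genuinely new estimate is required here.
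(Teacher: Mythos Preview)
Your proposal is correct and follows essentially the same approach as the paper: invoke Proposition~\ref{prop:pole-lines} to obtain infinitely many crossings of each exceptional line, read off from the local expansions of \S\ref{s:movable} that crossings of $\mathcal{E}_0$, $\mathcal{E}_1$, $\mathcal{E}_\infty\cup\mathcal{E}_\infty^-$ correspond respectively to zeroes, $1$-points, and poles of $y$, and then use the symmetries of $\Psix$ to transfer the conclusion to the other fixed singularities. The only cosmetic difference is that the paper writes out the explicit B\"acklund transformations $\mathcal{S}_1,\mathcal{S}_2$ mapping $x=\infty$ and $x=1$ to $0$, whereas you appeal to the M\"obius symmetries mentioned in \S\ref{s:background}; your added remarks on transversality and distinctness of the crossing times are correct and make explicit what the paper leaves implicit.
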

\begin{proof}
At the intersection points with $\mathcal{E}_0$, $\mathcal{E}_1$, $\mathcal{E}_{\infty}$, $\mathcal{E}_{\infty}^-$, the solution will have zeroes, 1s, and poles, as explained in detail in Section \ref{s:movable}.
Thus, the statement for an essential singularity at $x=0$ follows from Proposition \ref{prop:pole-lines}.

{
If $y(x)$ is a solution of \eqref{PVI},
observe that the following B\"acklund transformations:
\begin{gather*}
\mathcal{S}_1: y_1(x_1)=\frac{y}x,
\quad
x_1=\frac1x,
\quad
(\theta_{\infty,1},\theta_{0,1},\theta_{1,1},\theta_{x,1})
=
\left(\theta_{\infty},\theta_0,\sqrt{\theta_x^2+\frac12},\sqrt{\theta_1^2+\frac12}\right)
\\
\mathcal{S}_2: y_2(x_2)=1-y,
\quad
x_2=1-x,
\quad
(\theta_{\infty,2},\theta_{0,2},\theta_{1,2},\theta_{x,2})
=
\left(\theta_{\infty},i\theta_1,i\theta_0,\theta_x\right)
\end{gather*}
give the soltions $y_1(x_1)$ and $y_2(x_2)$ of the sixth Painlev\'e equation with respective parameters 
$(\theta_{\infty,1},\theta_{0,1},\theta_{1,1},\theta_{x,1})$ and $(\theta_{\infty,2},\theta_{0,2},\theta_{1,2},\theta_{x,2})$, \cite[\S32.7(vii)]{DLMF}.
Transformation $\mathcal{S}_1$ maps point $x=\infty$ to $x_1=0$, while $\mathcal{S}_2$ maps point $x=1$ to $x_2=0$, thus the statement will also hold for essential singularities at $x=1$ and $x=\infty$.
}
\end{proof}

\section{Conclusion}\label{s:con} 

The Painlev\'e equations have been playing an increasingly important role in mathematical physics, especially in the applications to classical and quantum integrable systems and random matrix theory.
The sixth Painlev\'e equation, which is the focus of this work, is very prominent in these areas, in particular, in conformal field theory in recent times \cite{GIL2012}.
For further relations with conformal block expansions and supersymmetric gauge theories, see the references in \cite{GIL2012}.

Although the initial values space for the Painlev\'e equations was described by Okamoto \cite{Okamoto}, our aim in this work was to describe the dynamics of the solutions by analysing that construction.

Many questions beyond the limit behaviour remain open about particular families of transcendental solutions, from the dynamical systems point of view.
For example, the existence of limit cycles of transcendental solutions with particular symmetry properties and whether there are periodic cycles in the combined space of parameters and initial values remain open.

\subsection*{Acknowledgments}
The authors are grateful to the referee for their careful reading of our paper and useful suggestions for its improvement.
\appendix
 \section{Charts of the initial surface $\mathbb{F}_1$}\label{app:initial}
\subsubsection{Initial chart $(u_0,v_0)=(u,v)$}\label{app:00}

\[\begin{array}{l}
\left\{\begin{array}{rcl}
(u,v)&=&(u_0,v_0)\vspace{.2cm}\\
\omega_0 &=& 1\vspace{.2cm}\\
E&=&\left.\frac{1}{x-1}\right[u_0(u_0-x)(u_0-1)v_0^2-(\theta+\overline{\theta})u_0^2v_0+\theta\overline{\theta}(u_0-x)-x\theta_0v_0+\\&&\quad \quad\quad\quad \quad \quad\quad \quad \Bigl. +\Bigl((x+1)\theta_0+x\theta_1 +(\theta_x-1)\Bigr)u_0v_0\Bigr]\vspace{.2cm}\\
\dot E&=&-\frac{x(u_0-1)}{(x-1)^2}\Bigl[(u_0v_0-\theta)(u_0v_0-\overline{\theta})-(u_0v_0-\theta_0)v_0\Bigr]\\

\end{array}\right.
\vspace{.5cm}\\
\left\{\begin{array}{rcl}
\dot u_0 &=&-2\frac{u_0(u_0-x)(u_0-1)v_0}{1-x}+\frac{\theta+\overline{\theta}}{1-x}u_0(u_0-1)+ \theta_1  u_0-\frac{x}{1-x}  \theta_0(u_0-1) 
\vspace{.2cm}\\
\dot v_0 &=&\frac{(3u_0 -2)u_0}{1-x}v_0^2-\left[\frac{\theta+\overline{\theta}}{1-x}(2u_0-1)+\theta_1 \right]v_0+\frac{\theta \overline{\theta}}{1-x}-\frac{x}{1-x}\left[(2u_0-1)v_0-\theta_0\right]v_0 \vspace{.2cm}\\
\frac{\dot \omega_0}{\omega_0}&=&0
\end{array}\right.\end{array}
\]
No base points. \\
No elliptic base points. \\
No visible components of the infinity set.

\subsubsection{First chart $(u_1,v_1)=\left(u,\frac{1}{v}\right)$}\label{app:11}

$$\begin{array}{l}\left\{\begin{array}{rcl}
(u,v)&=&\left(u_1,\frac{1}{v_1}\right)\vspace{.2cm}\\
\omega_1&=&-v_1^2\vspace{.2cm}\\
E&=&\left.\frac{1}{(x-1)v_1^2}\right[u_1(u_1-x)(u_1-1) -(\theta+\overline{\theta})u_1^2v_1+\theta\overline{\theta}(u_1-x)v_1^2-x\theta_0v_1+\\&&\quad \quad\quad\quad \quad \quad\quad \quad \Bigl. +\Bigl((x+1)\theta_0+x\theta_1 +(\theta_x-1)\Bigr)u_1v_1\Bigr]\vspace{.2cm}\\
\dot E&=&-\frac{x(u_1-1)}{(x-1)^2v_1^2}\Bigl[(u_1 -\theta v_1)(u_1 -\overline{\theta}v_1)-(u_1-\theta_0v_1) \Bigr]\\
\end{array}\right.
\vspace{.5cm}\\
\left\{\begin{array}{rcl}
\dot u_1 &=& -2\frac{u_1(u_1-x)(u_1-1)}{(1-x)v_1}+\frac{\theta+\overline{\theta}}{1-x}u_1(u_1-1)+\frac{\theta_1}{1-x} u_1-\frac{x}{1-x}\left[ \theta_0(u_1-1)+\theta_1u_1 \right]   
\vspace{.2cm}\\
\dot v_1 &=& -\frac{\theta \overline{\theta}}{1-x}v_1^2+\frac{1}{1-x}\left[ (\theta+\overline{\theta})(2u_1-1)+\theta_1\right]v_1-\frac{(3u_1-2)u_1}{1-x}-\frac{x}{1-x}\left[ (\theta_0+\theta_1)v_1-(2u_1-1)\right]
\vspace{.2cm}\\
\frac{\dot \omega_1}{\omega_1}&=& -2\frac{\theta \overline{\theta}}{1-x}v_1+2\frac{1}{1-x}\left[ (\theta+\overline{\theta})(2u_1-1)+\theta_1\right]-2\frac{(3u_1-2)u_1}{(1-x)v_1}-2\frac{x}{1-x}\left[ (\theta_0+\theta_1)-\frac{2u_1-1}{v_1})\right] .
\end{array}\right.
\end{array}$$
Base points of the vector field:
$$b_0 : (u_1,v_1)=(0,0)$$
$$b_x : (u_1,v_1)=(x,0)$$
$$b_1 : (u_1,v_1)=(1,0)$$
Elliptic base points are $b_0$ and $b_1$.\\
Visible components of the infinity set:  
{ $ H: \{v_{1}=0\} $}
\\
\textbf{Estimates near $H$, i.e. $ v_{1}\longrightarrow 0$}:\\
\begin{eqnarray}
\omega &= &  -v_{1}^2
\\
\omega E  & \sim &\frac{u_1(u_1-1)(u_1-x)}{1-x}\\
\frac{\dot E}{E} & \sim & \frac{x}{1-x}-\frac{x}{u_{1}-x}\\
\frac{\dot{\omega}}{\omega} & \sim &\left(\frac{1}{u_1}+\frac{1}{u_1-x}+\frac{1}{u_1-1}\right)\dot{u}_1\\
\dot{u}_{1}& \sim & 2\frac{u_1(u_1-1)(u_1-x)}{(x-1)v_1}\\
\dot{v}_{1}& \sim & \frac{3u_1^2-2(1+x)u_1+x}{x-1}\\
\omega E_0  & \sim & u_1^2(u_1-1)\\
\frac{\dot E_0}{E_0} & \sim & \left(\frac{1}{u}-\frac{1}{u-x} \right)\dot{u}_1
\end{eqnarray}

\subsubsection{Second chart $(u_2,v_2)=\left(\frac{1}{u},\frac{1}{uv}\right)$}\label{app:22}

$$
\begin{array}{l}\left\{\begin{array}{rcl}
		(u,v)&=&\left(\frac{1}{u_2},\frac{u_2}{v_2}\right)\vspace{.2cm}\\
		\omega_2&=&u_2v_2^2\vspace{.2cm}\\
		E&=&\left.\frac{1}{(x-1)v_2^2}\right[\frac{(xu_2-1)(u_2-1)-(\theta+\overline{\theta})v_2}{u_2}  -\theta\overline{\theta}\frac{xu_2-1}{u_2}v_2^2 -x\theta_0u_2v_2+\\&&\quad \quad\quad\quad \quad \quad\quad \quad \Bigl. +\Bigl((x+1)\theta_0+x\theta_1 +(\theta_x-1)\Bigr)v_2\Bigr]\vspace{.2cm}\\
		\dot E&=&\frac{x(u_2-1)}{(x-1)^2v_2^2}\Bigl[\frac{(\theta v_2-1)(\overline{\theta}v_2-1)}{u_2} +\theta_0v_2-1)\Bigr]\\ 
	\end{array}\right.
	\vspace{.5cm}\\
	\left\{\begin{array}{rcl}
		\dot u_2 &=& \frac{(u_2-1)((\theta+\overline{\theta})v_2-2)}{(1-x)v_2}- \theta_1 u_2-\frac{x}{(1-x)v_2}u_2(u_2-1)(\theta_0v_2-2)\vspace{.2cm}\\
		\dot v_2 &=&-\frac{1}{(1-x)u_2}(\theta v_2-1)(\overline{\theta}v_2-1)-\frac{x}{1-x}(\theta_0v_2-1)u_2 \vspace{.2cm}\\
		\frac{\dot \omega_2}{\omega_2}&=& \frac{(\theta+\overline{\theta})(u_2+1)-2\theta \overline{\theta}v_2}{(1-x)u_2}-\theta_1+2\frac{-1+x(2u_2-1)}{(1-x)v_2}-\frac{x\theta_0(3u_2-1)}{1-x}
	\end{array}\right.\end{array}$$
No new base points.\\ 
Other visible base points:
$$b_x : (u_2,v_2)=\left(\frac{1}{x}, 0\right)$$
$$ b_1 : (u_2,v_2)=\left(1,0\right)$$
$$b_\infty : (u_2,v_2)=\left(0, \frac{1}{\theta}\right)$$
$$b_\infty^- : (u_2,v_2)=\left(0, \frac{1}{\overline{\theta}}\right)$$
No new elliptic base points.\\
Visible components of the infinity set: 
{ $H: \{v_2=0\}\, , \quad  D_\infty: \{u_{2}=0\} $}
\\
\textbf{Estimates near $D_\infty$, i.e. $ u_{2}\longrightarrow 0$}:\\
\begin{eqnarray}
	\omega &= & u_2v_2^2
	\\
	\omega E  & \sim &\frac{(v_3\theta-1)(v_3\overline{\theta}-1)}{x-1}\\
	\frac{\dot E}{E} & \sim & \frac{x}{1-x} \\
	\frac{\dot{\omega}}{\omega} & \sim &\left(\frac{1}{v_2-\frac{1}{\theta}}+\frac{1}{v_2-\frac{1}{\overline{\theta}}} \right)\dot{v}_{2}\\
	\dot{u}_{2}& \sim & \frac{\theta+\overline{\theta}}{x-1}-\frac{2}{(x-1)v_2}\\
	\dot{v}_{2}& \sim &\frac{(v_2\theta-1)(v_2\overline{\theta}-1)}{(x-1)u_2}\\
	\omega E_0  & \sim &-(v_2\theta-1)(v_2\overline{\theta}-1)\\
	\frac{\dot E_0}{E_0} & \sim & x\frac{\theta_0+\theta_1 }{x-1}-\frac{2x}{(x-1)v_2}
\end{eqnarray}
\\
\\
\textbf{Estimates near $H$, i.e. $ v_{2}\longrightarrow 0$}:\\
\begin{eqnarray}
	\omega &= & u_2v_2^2
	\\
	\omega E  & \sim &\frac{(u_2-1)(xu_2-1)}{x-1}\\
	\frac{\dot E}{E} & \sim & \frac{1}{1-x}+\frac{1}{xu_2-1} \\
	\frac{\dot{\omega}}{\omega} & \sim &\left(\frac{1}{u_2-1}+\frac{x}{xu_2-1} \right)\dot{u}_{2}\\
	\dot{u}_{2}& \sim & \frac{2(u_2-1)(xu_2-1)}{(1-x)v_2}\\
	\dot{v}_{2}& \sim &-\frac{xu_2}{x-1}+\frac{1}{(x-1)u_2}\\
	\omega E_0  & \sim &u_2-1\\
	\frac{\dot E_0}{E_0} & \sim & -\frac{x\dot{u}_2}{xu_2-1} 
\end{eqnarray}

\subsubsection{Third chart $(u_3,v_3)=\left(\frac{1}{u},uv\right)$}\label{app:33}

$$
\begin{array}{l}\left\{\begin{array}{rcl}
(u,v)&=&\left(\frac{1}{u_3}, u_3{v_3} \right)\vspace{.2cm}\\
\omega_3&=&-u_3\vspace{.2cm}\\
E&=&\left.\frac{1}{x-1}\right[\frac{(xu_3-1)(u_3-1)v_3-(\theta+\overline{\theta})}{u_3}v_3  -\theta\overline{\theta}\frac{xu_3-1}{u_3} -x\theta_0u_3v_3+\\&&\quad \quad\quad\quad \quad \quad\quad \quad \Bigl. +\Bigl((x+1)\theta_0+x\theta_1 +(\theta_x-1)\Bigr)v_3\Bigr]\vspace{.2cm}\\
\dot E&=&\frac{x(u_3-1)}{(x-1)^2}\Bigl[\frac{(v_3-\theta)(v_3-\overline{\theta})}{u_3}-(v_3-\theta_0)v_3\Bigr]\\
\end{array}\right.
\vspace{.5cm}\\
\left\{\begin{array}{rcl}
\dot u_3 &=& -\frac{(u_3-1)(2v_3-(\theta+\overline{\theta}))}{1-x}- \theta_1 u_3+\frac{x}{1-x}u_3(u_3-1)(2v_3-\theta_0)\vspace{.2cm}\\
\dot v_3 &=&\frac{1}{(1-x)u_3}(v_3-\theta)(v_3-\overline{\theta})-x\frac{(v_3-\theta_0)u_3v_3}{1-x}
\vspace{.2cm}\\
\frac{\dot \omega_3}{\omega_3}&=&   -\frac{(u_3-1)(2v_3-(\theta+\overline{\theta}))}{(1-x)u_3}- \theta_1 +\frac{x}{1-x}(u_3-1)(2v_3-\theta_0)\end{array}\right.
\end{array}$$
New base points: 
$$b_\infty : (u_3,v_3)=\left(0,\theta \right)$$
$$b_\infty^- : (u_3,v_3)=\left(0,\overline{\theta}\right)$$
No other visible base points.\\
New elliptic base points are $b_\infty$ and $b_\infty^-$. \\
Visible components of the infinity set:  
{ $ D_\infty: \{u_{3}=0\} $}
\\
\\
\textbf{Estimates near $D_\infty$, i.e. $ u_{3}\longrightarrow 0$}:\\
\begin{eqnarray}
\omega &= &  -u_3
\\
\omega E  & \sim &\frac{(v_3-\theta)(v_3-\overline{\theta})}{1-x}\\
\frac{\dot E}{E} & \sim & \frac{x}{1-x} \\
\frac{\dot{\omega}}{\omega} & \sim &\frac{1}{u_3}\dot{u}_{3}\\
\dot{u}_{3}& \sim & \frac{2v_3-(\theta+\overline{\theta})}{1-x}\\
\dot{v}_{3}& \sim &\frac{(v_3-\theta)(v_3-\overline{\theta})}{(1-x)u_3}\\
\omega E_0  & \sim & (v_3-\theta)(v_3-\overline{\theta})\\
\frac{\dot E_0}{E_0} & \sim & -\frac{x(\theta_x-1)}{x-1}+x\dot{u}_3
\end{eqnarray}

 \section{Okamoto desingularization}\label{app:resolution}
 \subsection{Details for the blow-up procedure}

\subsubsection{Blow up of $\beta_0,\beta_1,\beta_x,\beta_\infty$}\label{s:DetailsA}$ $\\
Let us first blow up the points $\beta_0,\beta_1,\beta_x$. 
\begin{itemize}
 \item[$\star$] Recall that for $i\in \{0,1,x \}$ we have $ \mathcal{V}_i:\{u_0=i\}\cup\{u_1=i\}\cup\{u_2=1/i\}\cup \{u_3=1/i\}$ and that $\beta_i\in \mathcal{V}_i$. Note further that $\mathcal{V}_i\setminus \{\beta_i\} \subset \C^2_{u,v}$. So whenever we remove one of the points $\beta_i$ from a chart other than 
$\C^2_{u,v}$, we may just as well remove the visible part of the whole line $\mathcal{V}_i$, without changing the global picture.
 \item[$\star$] Replace the chart $\C^2_{u_1,v_1}$ by the following six $\C^2$-charts:
 $$\begin{array}{rclcrcl}(u_{01},v_{01})&:=&\left(u_1, \frac{v_1}{u_1} \right) 
&\quad 
&(u_{02},v_{02})&:=&\left(\frac{u_1}{v_1}, {v_1}\right)\vspace{.2cm}\\
 (u_{11},v_{11})&:=&\left(u_1-1, \frac{v_1}{u_1-1} \right) 
&\quad &(u_{12},v_{12})&:=&\left(\frac{u_1-1}{v_1}, {v_1}\right)\vspace{.2cm}\\
 (u_{x 1},v_{x 1})&:=&\left( {u_{01}-x} , \frac{v_{01}}{u_{01}-x}\right)
&\quad &(u_{x 2},v_{x 2})&:=&\left(\frac{u_{01}-x}{v_{01}}, {v_{01}}\right)\vspace{.2cm}\\&=&\left(u_1-x, \frac{v_{1}}{u_1(u_1-x)}\right)&\quad &&=&\left(\frac{u_1(u_1 -x)}{v_{1}}, \frac{v_{1}}{u_1}\right)
\end{array}$$
 \item[$\star$] In each pair of charts $\C^2_{u_{i1},v_{i1}}$,  $\C^2_{u_{i2},v_{i2}}$ (which effectively replaces $\beta_i$ by the exceptional line $\mathcal{D}_i:=\{u_{i1}=0\} \cup \{v_{i2}=0\}$), we have to remove the points $\beta_j$ for $j\in \{0,1,x\}\setminus \{i\}$ if visible. Yet these points are visible only in the $\C^2_{u_{i1},v_{i1}}$ charts.
 By the remark above, we may remove\par 
 the following visible parts of $\mathcal{V}_0:\{u_{11}=-1\}$ \par
  the following visible parts of $\mathcal{V}_1:\{u_{01}=1\}\cup \{u_{x1}=1-x\}$\par
  the following visible parts of $\mathcal{V}_x:\{u_{01}=x\}\cup \{u_{11}=x-1\}$.\par
   \item[$\star$] 
  The three charts $\C^2_{u_{i1},v_{i1}}$ with the removes lines are equivalent to a single $\C^2$-chart, namely 
  $$(\tilde{u}_1,\tilde{v}_1):= \left(u_{11}+1,\frac{v_{11}}{(u_{11}+1)(u_{11}+1-x)}\right)=\left(u_1 ,\frac{v_1}{u_1(u_1-1)(u_1-x) }\right)\, .$$
  For $x=0$, the lines $\mathcal{V}_0$ and $\mathcal{V}_x$ cannot be distinguished. Yet then the pair of charts $\C^2_{u_{x1},v_{x1}}$,$\C^2_{u_{x2},v_{x2}}$ replaces the chart $\C^2_{u_{01},v_{01}}$. Therefore, this coordinate change is still valid.
  \item[$\star$] Similarly, we need to remove the visible points $\beta_x,\beta_1$ from the chart $\C^{2}_{u_2,v_2}$, which can effectively be done by setting 
 $$(\tilde{u}_2,\tilde{v}_2):=  \left(u_2 ,\frac{v_2}{(1-u_2)(1-xu_2) }\right)=  \left(\frac{1}{u_1} ,\frac{v_1u_1}{(u_1-1)(u_1-x) }\right)= \left(\frac{1}{\tilde u_1} , \tilde v_1\tilde u_1^2 \right)\, .$$
 Indeed, removing $\{\tilde u_2\in \{1,1/t\}$ this chart is isomorphic to $\C^{2}_{u_2,v_2}\setminus \mathcal{V}_x\cup \mathcal{V}_1$, removing 
 $\{\tilde u_2=0\}$ this chart is isomorphic to  $\C^{2}_{\tilde{u}_1,\tilde{v}_1}\setminus \{\tilde{u}_1=0\}$. Note that this holds also for $x=0$.
\end{itemize}

For the blow-up of  $\beta_\infty$, we may stick to the standard procedure: 
\begin{itemize}
 \item[$\star$] Remove the point $\beta_\infty$ from the chart $\C^2_{\tilde{u}_2,\tilde{v}_2}$. Note that then it remains visible only in the chart $\C^2_{{u}_3, {v}_3}$, as $\beta_\infty:(0,\theta)$.
 \item[$\star$] Replace the chart $\C^2_{{u}_3, {v}_3}$ by the following pair of $\C^2$-charts.
$$\begin{array}{rclcrcl}(  u_{\infty 1},  v_{\infty 1})&:=&\left(u_3, \frac{v_3-\theta}{u_3}\right)  &\quad 
&(  u_{\infty 2},  v_{\infty 2})&:=&\left(\frac{u_3}{v_3-\theta}, v_3-\theta\right)  
 \end{array}$$
 \item[$\star$]  Note that
 $\C^2_{u_{\infty 1},  -v_{\infty 1}}$ 
 (with a minus sign) corresponds to the classical chart of a certain surface $\Sigma_\theta$ much used in publications concerning the Okamoto desingularisation of the sixth Painlev\'e equation. See for example \cite{Takano}. Hence, for traditional reasons, we denote this chart by $\C^2_{\tilde u_3,  \tilde v_3}:=\C^2_{u_{\infty 1},  -v_{\infty 1}}$ 
 \end{itemize}
 
 \subsubsection{The vector field in the resulting new charts}\label{s:DetailsB}$ $\\
In our seven new charts, that we have to add to the chart $\C^2_{u_0,v_0}$ to obtain the global picture after blow-up of $\beta_0,\beta_1,\beta_x,\beta_\infty$, the vector field respectively reads as follows.
 
 $$
\left\{
\begin{array}{ccl}
 \dot{ \tilde u}_1&=& 
 \frac{2}{(x-1)\tilde v_1}  - \frac{\tilde u_1 (\tilde u_1 -1)(\tilde u_1 -x)}{(x-1)}\left(\frac{\theta_0}{\tilde u_1 }+\frac{\theta_1}{\tilde u_1 -1}+\frac{\theta_x -1}{\tilde u_1 -x}\right)
 \vspace{.2cm}\\
\dot{\tilde v}_1&=&  \frac{x\theta_0 \tilde v_1 -1}{(x-1)\tilde u_1}-\frac{(x-1)\theta_1\tilde v_1+1}{(x-1)(\tilde u_1-1)}+\frac{x(x-1)\theta_x \tilde v_1-1}{(x-1)(\tilde u_1-x)}+\left(\frac{ \theta+\overline{\theta}}{x-1}(\tilde u_1+x-1) -2\frac{x\theta_0+(x-1)\theta_1}{x-1} \right)\tilde v_1+\vspace{.2cm}
\\
&&+\frac{ \theta\overline{\theta}}{x-1}\tilde u_1(\tilde u_1-1)(\tilde u_1-x)\tilde{v}_1^2\, ,
\end{array}
\right.
$$

$$
 \left\{\begin{array}{ccl}
 \dot{\tilde u}_2&=&- \frac{2 }{(x-1)\tilde v_2}  + \frac{  (1-{\tilde u}_2)(1 -x{\tilde u}_2)}{x-1}\left( \theta_0 +\frac{\theta_1}{1-{\tilde u}_2}+\frac{\theta_x -1}{1-x{\tilde u}_2}\right)\vspace{.2cm}\\
\dot{\tilde v}_2&=& \frac{(\theta \tilde v_2-1)(\overline{\theta}\tilde v_2-1)}{(x-1)\tilde u_2}+\frac{(x-1)\theta_x\tilde v_2-x}{(x-1)(1-x\tilde u_2)}-\frac{(x-1)\theta_1\tilde v_2+1}{(x-1)(1-\tilde u_2)}+\left(2\frac{x\theta_1+\theta_x-1}{x-1}-1\right) \tilde v_2+\frac{\theta \overline{\theta} \tilde v_2-\theta_0}{x-1}(x \tilde u_2-x-1) \tilde v_2\, ,
\end{array}\right.
$$

$$
\left\{\begin{array}{ccl} \dot{\tilde u}_3&=&  \frac{\tilde{u}_3(1-\tilde{u}_3)(1-x\tilde{u}_3)}{(x-1)}\left(2\tilde{v}_3-\frac{ {\theta}_\infty}{\tilde{u}_3}+\frac{x(\theta_x -1)}{1-x\tilde{u}_3}+\frac{\theta_1}{1-\tilde{u}_3} \right)
 \vspace{.2cm}\\ 
\dot{\tilde v}_3&=& -\frac{3x\tilde{u}_3^2-2(x+1)\tilde{u}_3+1}{x-1}\tilde{v}_3^2+\frac{x(2\theta-\theta_0)(2\tilde{u}_3-1)+(x-1)\theta_1-\theta_\infty}{x-1}\tilde{v}_3-x\frac{\theta(\theta-\theta_0)}{x-1}\, ,
 \vspace{.2cm}\\
\end{array}\right.
$$

$$
\left\{\begin{array}{ccl}
\dot{u}_{02}  &=&  x\frac{u_{02}-\theta_0}{(x-1)v_{02} } -\frac{u_{02}^3v_{02}  }{x-1}  +\frac{ \theta+\overline{\theta}}{x-1}u_{02}^2v_{02}  -\frac{ \theta\overline{\theta} }{x-1}u_{02}v_{02}
\vspace{.2cm}\\
\dot{v}_{02} &=& \frac{3(u_{02}v_{02})^2-2(x+1)u_{02}v_{02} +x}{x-1}-2\frac{ \theta+\overline{\theta} }{x-1}u_{02}v_{02}^2  +\left( \frac{x\theta_0}{ x-1}+{\theta_1}+\frac{\theta+\overline{\theta}}{x-1} \right)v_{02} +\frac{ \theta\overline{\theta}}{x-1}v_{02}^2\, , 
\end{array}\right.
$$

$$
\left\{\begin{array}{ccl}
\dot{u}_{x2}  &=&-\frac{(\theta_x-1)(u_{x2}v_{x2}+x)-u_{x2}+x}{v_{x2}}-\frac{u_{x2}(u_{x2}+x\theta_0)}{(x-1)(u_{x2}v_{x2}+x)}+\frac{u_{x2}(u_{x2}+\theta_0)}{x-1} -\frac{\theta \overline{\theta}u_{x2}v_{x2}(u_{x2}v_{x2}+x)}{x-1} \vspace{.2cm}\\
\dot{v}_{x2} &=& \frac{(u_{x2}v_{x2}+x)(\theta v_{x2}-1)(\overline{\theta}v_{x2}-1)}{x-1}+x\frac{\theta_0v_{x2}-1}{(x-1)(u_{x2}v_{x2}+x)}
\, , 
\end{array}\right.
$$

$$
\left\{\begin{array}{ccl}
\dot{u}_{12} &=& -\frac{  u_{12}-\theta_1}{v_{12}} -\frac{ u_{12}^3v_{12}}{x-1}+\frac{\theta+\overline{\theta} }{x-1}u_{12}^2v_{12}  -\frac{\theta\overline{\theta} }{x-1}u_{12}v_{12}
\vspace{.2cm}\\
\dot{v}_{12} &=&    \frac{3u_{12}^2v_{12}^2 -2(x-2)(u_{12}v_{12}) -x+1}{ x-1}-2\frac{\theta+\overline{\theta} }{x-1}u_{12}v_{12}^2  +\left( \frac{x\theta_0}{ x-1}+ \theta_1-\frac{\theta+\overline{\theta}}{x-1} \right)v_{12} +\frac{\theta\overline{\theta}}{x-1}v_{12}^2\, ,\end{array}\right. 
$$

$$
\left\{\begin{array}{ccl}
\dot{u}_{\infty 2}  &=&-\frac{3x(u_{\infty 2} v_{\infty 2} )^2-2(x+1)u_{\infty 2} v_{\infty 2} +1}{x-1}-2x\frac{2\theta-\theta_0}{x-1}u_{\infty 2}^2v_{\infty 2} +\left(\frac{\theta_\infty}{x-1}-\theta_1+\frac{x(2\theta-\theta_0)}{x-1}\right)u_{\infty 2} -\vspace{.2cm}\\&&-x\frac{\theta(\theta-\theta_0)}{x-1}u_{\infty 2} ^2 \vspace{.2cm}\\
\dot{v}_{\infty 2} &=& \frac{v_{\infty 2} +\theta_\infty}{(1-x)u_{\infty 2} }-x\frac{u_{\infty 2} v_{\infty 2} (v_{\infty 2} +\theta-\theta_0)(v_{\infty 2} +\theta)}{1-x} \, . \end{array}\right.
$$

 \subsection{Detailed charts of Okamoto's space}\label{s:DetailsC}
 
  \subsubsection{The chart $(u_{0} , v_{0})=\left(u,v \right)$}\label{app:00}
$ $\\  Domain of definition: $\C^2$.\\
 Visible components of the infinity set:   $\emptyset$\\
 Visible exceptional lines: $\emptyset$ \vspace{.2cm}
     $$\begin{array}{l}\left\{\begin{array}{rcl}
     (u_{0} , v_{0})&=&\left(u,v \right)\vspace{.2cm}\\
(u,v)&=&\left(u_0,v_0\right)\vspace{.2cm}\\
\omega_{0}&=&  1
\vspace{.2cm}\\
E&=&\frac{u_0(u_0-1)(u_0-x)}{x-1}\Bigl\{ v_0^2-v_0\left(\frac{\theta_0}{u_0} +\frac{\theta_1}{u_0-1}+\frac{\theta_x-1}{u_0-x}\right)  +{\frac{\theta \overline{\theta}}{u_0(u_0-1)}} \Bigr\}.
   \end{array}\right.
\vspace{.5cm}\\

\left\{\begin{array}{rcl}
\dot{u}_0 &=& \frac{u_0 (u_0 -1)(u_0 -x )}{x-1}\left(2v_0 -\frac{\theta_0}{u_0 }-\frac{\theta_1}{u_0 -1}-\frac{\theta_x-1}{u_0 -x }\right)\vspace{.2cm}\\
\dot{v}_0 &=& -\frac{3u_0^2-2(x+1)u_0 +x}{x-1}v_0^2+2\frac{\theta+\overline{\theta} }{x-1}u_0 v_0 -\left( \frac{x\theta_0}{ x-1}+  \theta_1 +\frac{\theta+\overline{\theta}}{x-1} \right)v_0 -\frac{\theta \overline{\theta} }{x-1}\, . 
  \end{array}\right.\end{array}$$
  
\subsubsection{The chart $(\tilde u_{1} , \tilde v_{1})=\left(u,\frac{1}{u(u-x)(u-1)v} \right)$}\label{app:1tilde}
$ $\\  Domain of definition: $\C^2\setminus \{\gamma_0,\gamma_x,\gamma_1\}$, where $$\gamma_0:\left(0,\frac{1}{x\theta_0} \right)\, , \gamma_x:\left(x,\frac{1}{x(x-1)\theta_x} \right)\, , \gamma_1:\left(1,\frac{1}{(1-x)\theta_1} \right)\, .$$
 Visible components of the infinity set:   $$\mathcal{H}^*:\{\tilde{v}_1=0\}\, ,  \mathcal{D}_0^*:\{\tilde{u}_1=0\}\, , \mathcal{D}_x^*:\{\tilde{u}_1=x\}\, , \mathcal{D}_1^*:\{\tilde{u}_1=1\}\, .$$
 Visible exceptional lines: $\emptyset$ \vspace{.2cm}
     $$\begin{array}{l}\left\{\begin{array}{rcl}
     (\tilde u_{1} , \tilde v_{1})&=&\left(u,\frac{1}{u(u-x)(u-1)v} \right)\vspace{.2cm}\\
(u,v)&=&\left(\tilde u_{1},\frac{1}{\tilde u_{1}(\tilde u_{1}-x)(\tilde u_{1}-1)\tilde v_{1}}\right)\vspace{.2cm}\\
\tilde \omega_{1}&=& - \tilde u_{1}(\tilde u_{1}-x)(\tilde u_{1}-1)\tilde v_{1}^2
\vspace{.2cm}\\
\dot E&=&-\frac{x}{(x-1)^2}\Bigl\{ \frac{1}{\tilde u_{1}(\tilde u_{1}-x)^2\tilde v_{1}^2} -\frac{(\theta+\overline{\theta})\tilde u_{1}-\theta_0} {\tilde v_{1}\tilde u_{1}(\tilde u_{1}-x)}   + \theta \overline{\theta}(\tilde{u}_1-1) \Bigr\}\vspace{.2cm}\\
E&=&\frac{1}{x-1}\Bigl\{ \frac{1}{\tilde u_{1}(\tilde u_{1}-x)(\tilde u_{1}-1)\tilde v_{1}^2} -\frac{1} {\tilde v_{1}}\left(\frac{\theta_0}{\tilde{u}_1} +\frac{\theta_1}{\tilde{u}_1-1}+\frac{\theta_x-1}{\tilde{u}_1-x}\right)  + \theta \overline{\theta}(\tilde{u}_1-x) \Bigr\}.
   \end{array}\right.
\vspace{.5cm}\\

\left\{\begin{array}{rcl}
\dot{\tilde{u}}_1 &=& \frac{ 2}{(x-1)\tilde v_{1}} - \frac{\tilde{u}_1 (\tilde{u}_1 -1)(\tilde{u}_1 -x )}{x-1}\left(  \frac{\theta_0}{\tilde{u}_1 }+\frac{\theta_1}{\tilde{u}_1 -1}+\frac{\theta_x-1}{\tilde{u}_1 -x }\right)\vspace{.2cm}\\
\dot{\tilde v}_1 &=& \frac{x\theta_0 \tilde v_1 -1}{(x-1)\tilde u_1}-\frac{(x-1)\theta_1\tilde v_1+1}{(x-1)(\tilde u_1-1)}+\frac{x(x-1)\theta_x \tilde v_1-1}{(x-1)(\tilde u_1-x)}+\left(\frac{ \theta+\overline{\theta}}{x-1}(\tilde u_1+x-1) -2\frac{x\theta_0+(x-1)\theta_1}{x-1} \right)\tilde v_1+\vspace{.2cm}\\
&&+\frac{ \theta\overline{\theta}}{x-1}\tilde u_1(\tilde u_1-1)(\tilde u_1-x)\tilde{v}_1^2\, . 
  \end{array}\right.\end{array}$$

  \subsubsection{The chart $(\tilde u_{2} , \tilde v_{2})=\left(\frac{1}{u},\frac{u}{(u-x)(u-1)v} \right)$}\label{app:2tilde}
$ $\\  Domain of definition: $\C^2\setminus \{ \gamma_x,\gamma_1,\beta_\infty,\beta_\infty^-\}$, where $$\gamma_x:\left(\frac{1}{x},\frac{x}{(x-1)\theta_x} \right)\, ,  \gamma_1:\left(1,\frac{1}{(1-x)\theta_1} \right)\, ,  \beta_\infty:\left(0,\frac{1}{ \theta } \right)\, ,  \beta_\infty^-:\left(0,\frac{1}{ \overline{\theta} } \right)\, .$$
 Visible components of the infinity set:   $$\mathcal{H}^*:\{\tilde{v}_2=0\}\, ,  \mathcal{D}_\infty^{**}:\{\tilde{u}_2=0\}\, , \mathcal{D}_x^*:\{\tilde{u}_2=1/x\}\, , \mathcal{D}_1^*:\{\tilde{u}_2=1\}\, .$$
 Visible exceptional lines: $\emptyset$ \vspace{.2cm}
     $$\begin{array}{l}\left\{\begin{array}{rcl}
     (\tilde u_{2} , \tilde v_{2})&=&\left(\frac{1}{u},\frac{u}{(u-x)(u-1)v} \right)\vspace{.2cm}\\
(u,v)&=&\left(\frac{1}{\tilde u_2},\frac{\tilde u_{2}}{ (1-x\tilde u_2)(1-\tilde u_2)\tilde v_{2}}\right)\vspace{.2cm}\\
\tilde \omega_{2}&=&  \tilde u_2(1-x\tilde u_2)(1-\tilde u_2)\tilde v_{2}^2
\vspace{.2cm}\\
\dot E&=&-\frac{x}{(x-1)^2}\Bigl\{ \frac{1}{\tilde u_{2}(1-x\tilde u_{2})^2\tilde v_{2}^2} -\frac{(\theta+\overline{\theta}) -\theta_0\tilde u_{2}} {\tilde v_{2}\tilde u_{2}(1-x\tilde u_{2})}   + \frac{\theta \overline{\theta}}{\tilde u_{2}}(1-\tilde{u}_2) \Bigr\}\vspace{.2cm}\\
E&=& \frac{(\theta \tilde v_{2}-1)\left(\overline{\theta}\tilde v_{2}-1\right)}{(x-1)\tilde u_{2}\tilde v_{2}^2}- \frac{(x-1)\theta_1 \tilde v_{2}+1 }{(x-1)^2(1-\tilde u_{2})\tilde v_{2}^2}-x\frac{(x-1)(\theta_x-1) \tilde v_{2}-x }{(x-1)^2(1-x\tilde u_{2})\tilde v_{2}^2}-\frac{x\theta\overline{\theta}}{x-1}
   \end{array}\right.
\vspace{.5cm}\\

\left\{\begin{array}{rcl}
 \dot{\tilde u}_2&=&- \frac{2 }{(x-1)\tilde v_2}  + \frac{  (1-{\tilde u}_2)(1 -x{\tilde u}_2)}{x-1}\left( \theta_0 +\frac{\theta_1}{1-{\tilde u}_2}+\frac{\theta_x -1}{1-x{\tilde u}_2}\right)\vspace{.2cm}\\
\dot{\tilde v}_2&=& \frac{(\theta \tilde v_2-1)(\overline{\theta}\tilde v_2-1)}{(x-1)\tilde u_2}+\frac{(x-1)\theta_x\tilde v_2-x}{(x-1)(1-x\tilde u_2)}-\frac{(x-1)\theta_1\tilde v_2+1}{(x-1)(1-\tilde u_2)}+\left(2\frac{x\theta_1+\theta_x-1}{x-1}-1\right) \tilde v_2+\frac{\theta \overline{\theta} \tilde v_2-\theta_0}{x-1}(x \tilde u_2-x-1) \tilde v_2\, .
  \end{array}\right.\end{array}$$

  \subsubsection{The chart $(\tilde u_{3} , \tilde v_{3})=\left(u_3,-\frac{v_3-\theta}{u_3} \right)$}\label{app:3tilde}
$ $\\  Domain of definition: $\C^2.$\\
 Visible components of the infinity set:  $\emptyset$.\\
 Visible exceptional lines: $\mathcal{E}_\infty:\{\tilde u_{3}=0\}$ \vspace{.2cm}
     $$\begin{array}{l}\left\{\begin{array}{rcl}
     (\tilde u_{3} , \tilde v_{3})&=&\left(\frac{1}{u}, -u(uv-\theta)  \right)\vspace{.2cm}\\
(u,v)&=&\left(\frac{1}{\tilde u_3},-\tilde u_{3}^2\tilde v_{3}+\theta\tilde u_{3}\right)\vspace{.2cm}\\
\tilde \omega_{3}&=&  1\vspace{.2cm}\\
E&=&(1-\tilde u_{3})(1-x\tilde u_{3})\left(\frac{\tilde v_{3}(\tilde u_{3}\tilde v_{3}-2\theta+\theta_0)}{x-1}   +\frac{\theta\overline{\theta}+\theta_1(\tilde v_{3}-\theta)}{(x-1)(1-\tilde u_{3})}+\frac{(\theta_x-1)(\tilde v_{3}-x\theta)}{(x-1)(1-x\tilde u_{3})}\right)
   \end{array}\right.
\vspace{.5cm}\\

\left\{\begin{array}{rcl}
\dot{\tilde u}_3&=&  \frac{\tilde{u}_3(1-\tilde{u}_3)(1-x\tilde{u}_3)}{(x-1)}\left(2\tilde{v}_3-\frac{ {\theta}_\infty}{\tilde{u}_3}+\frac{x(\theta_x -1)}{1-x\tilde{u}_3}+\frac{\theta_1}{1-\tilde{u}_3} \right)
 \vspace{.2cm}\\ 
\dot{\tilde v}_3&=& -\frac{3x\tilde{u}_3^2-2(x+1)\tilde{u}_3+1}{x-1}\tilde{v}_3^2+\frac{x(2\theta-\theta_0)(2\tilde{u}_3-1)+(x-1)\theta_1-\theta_\infty}{x-1}\tilde{v}_3-x\frac{\theta(\theta-\theta_0)}{x-1}\, .
  \end{array}\right.\end{array}$$

  \subsubsection{The chart $(u_{03} , v_{03})=\left(u_{02} -\theta_0, \frac{v_{02}}{u_{02}-\theta_0} \right)$}\label{app:0303}
$ $\\  Domain of definition: $\C^2$.\\
 Visible components of the infinity set:   $ \mathcal D_0^* : \{v_{03}=0\} $\\
 Visible exceptional lines: $ \mathcal E_0 : \{u_{03}=0\} $\vspace{.2cm}
     $$\begin{array}{l}\left\{\begin{array}{rcl}
     (u_{03} , v_{03})&=&\left(uv-\theta_0, \frac{1}{uv^2-\theta_0 v} \right)\vspace{.2cm}\\
(u,v)&=&\left(u_{03}^2v_{03}+ \theta_0u_{03}v_{03} , \frac{1}{u_{03}v_{03}   }\right)\vspace{.2cm}\\
\omega_{03}&=&  -v_{03}
\vspace{.2cm}\\
E&=&\left.\frac{1}{x-1}\right[\frac{((u_{03}+\theta_0)u_{03}v_{03}-x)((u_{03}+\theta_0)u_{03}v_{03}-1)}{v_{03}}-(\theta+\overline{\theta}-\theta_0)(u_{03}+\theta_0)^2u_{03}v_{03}  +\\&&\quad \quad\quad\quad \quad \quad\quad \quad \Bigl.+\theta\overline{\theta}((u_{03}+\theta_0)u_{03}v_{03}-x)   +\Bigl( x\theta_1 +(\theta_x-1)\Bigr)(u_{03}+\theta_0)\Bigr] 
   \end{array}\right.
\vspace{.5cm}\\

\left\{\begin{array}{rcl}
\dot u_{03} &=& \frac{1}{1-x}(u_{03}+\theta_0-\theta)(u_{03}+\theta_0-\overline{\theta}) (u_{03}+\theta_0)u_{03}v_{03}
  -\frac{x}{(1-x)v_{03}}
\vspace{.2cm}\\
\dot v_{03} &=&- \frac{1}{1-x}(u_{03}+\theta_0-\theta)(u_{03}+\theta_0-\overline{\theta}) (2u_{03}+\theta_0) v_{03}^2
 -\frac{2(u_{03}+\theta_0)-(\theta+\overline{\theta})}{1-x}(u_{03}+\theta_0)u_{03}v_{03}^2 +\vspace{.2cm}\\
&&+\frac{2u_{03}+\theta_0-\theta_x+1  }{1-x}v_{03}+\frac{x(2u_{03}+\theta_0-\theta_1)}{1-x}v_{03}  \end{array}\right.\end{array}$$

We have:
{$$\omega_{x3}= -\frac{1}{((u-x)uv-x\theta_x)v}=-\frac{u_{03}v_{03}}{u_{03}v_{03} (u_{03} + \theta_0 )^2-x(u_{03} + \theta_0+\theta_x)}$$
  $$\omega_{x3}\sim \frac{u_{03}v_{03}}{ x(u_{03} + \theta_0+\theta_x)}$$
  $$E\sim  \frac{1}{x-1} \frac{x}{v_{03}} $$
    $$E\omega_{x3}\sim \frac{1}{x-1} \frac{u_{03} }{ (u_{03} + \theta_0+\theta_x)}\longrightarrow \frac{1}{x-1}  $$
   $$(x-1)\omega_{x3} \longrightarrow \frac{1}{E}  $$}
 
Further formulae:

      $$ \left\{\begin{array}{rcl}
 -\frac{(x-1)\omega_{03}}{x}E&=&  1 - \frac{v_{03}}{x} \Bigl[(\theta+\overline{\theta}-\theta_0-u_{03})(u_{03}+\theta_0)^2u_{03}v_{03}  +(u_{03}+\theta_0)u_{03}(\theta\overline{\theta}v_{03}+x+1)-\\&&\quad \quad\quad\quad \quad \quad\quad \quad \quad\quad \quad \quad\quad \quad \quad\quad \quad \quad\quad \quad \Bigl.-\theta\overline{\theta}x   +\Bigl( x\theta_1 +\theta_x-1\Bigr)(u_{03}+\theta_0)\Bigr]\vspace{.2cm}\\
 \frac{\dot \omega_{03}}{\omega_{03}}&=&- \frac{1}{1-x}(u_{03}+\theta_0-\theta)(u_{03}+\theta_0-\overline{\theta}) (2u_{03}+\theta_0) v_{03}
 -\frac{2(u_{03}+\theta_0)-(\theta+\overline{\theta})}{1-x}(u_{03}+\theta_0)u_{03}v_{03} -\vspace{.2cm}\\
&&+\frac{2u_{03}+\theta_0-\theta_x+1  }{1-x}+\frac{x(2u_{03}+\theta_0-\theta_1)}{1-x} \end{array}\right. $$
 
    \subsubsection{The chart $(u_{04} , v_{04})=\left( \frac{1}{v_{03}} ,  u_{03}v_{03}  \right)$}\label{app:0404}
 $ $\\  Domain of definition: $\C^2$.\\
 Visible components of the infinity set:   $\emptyset $\\
 Visible exceptional lines: $ \mathcal E_0 : \{v_{04}=0\} $\vspace{.2cm}   $$\begin{array}{l}\left\{\begin{array}{rcl}
   (u_{04} , v_{04})&=& \left(uv^2-\theta_0 v  , \frac{1}{v} \right)\vspace{.2cm}\\
(u,v)&=&\left((u_{04}v_{04}+ {\theta_0})v_{04} , \frac{1}{ v_{04}  } \right)\vspace{.2cm}\\
\omega_{04}&=&  -1
\vspace{.2cm}\\
E&=&\left.\frac{1}{x-1}\right[((u_{04}v_{04}+\theta_0)v_{04}-x)((u_{04}v_{04}+\theta_0)v_{04}-1)u_{04}-\\&&\quad \quad\quad\quad \quad \quad\quad \quad -(\theta+\overline{\theta}-\theta_0)(u_{04}v_{04}+\theta_0)^2v_{04} +\theta\overline{\theta}((u_{04}v_{04}+\theta_0)v_{04}-x) +\\&&\hspace{4.5cm}\quad \quad\quad\quad \quad \quad\quad \quad \Bigl.   +\Bigl( x\theta_1 +(\theta_x-1)\Bigr)(u_{04}v_{04}+\theta_0)\Bigr]
   \end{array}\right.
\vspace{.5cm}\\

\left\{\begin{array}{rcl}
\dot u_{04}&=&\frac{1}{1-x}(u_{04}v_{04}+\theta_0-\theta)(u_{04}v_{04}+\theta_0-\overline{\theta}) (2u_{04}v_{04}+\theta_0)  
  -\theta_1u_{04}   -\vspace{.2cm}\\
&&-\frac{(\theta+\overline{\theta})-2(u_{04}v_{04}+\theta_0) }{1-x}[(u_{04}v_{04}+\theta_0)v_{04}-1]u_{04} -\frac{x(2u_{04}v_{04}+\theta_0 )}{1-x}u_{04}\vspace{.2cm}\\

\dot v_{04}  
&=&-\frac{1}{1-x}\Bigl[ \left[2(u_{04}v_{04}+\theta_0)-\theta\right]v_{04}-1\Bigr]\Bigl[\left[2(u_{04}v_{04}+\theta_0)-\overline{\theta}\right]v_{04}-1\Bigr]+
\vspace{.2cm}\\
&&+\frac{1}{1-x}\left[(u_{04}v_{04}+\theta_0)v_{04}-1\right]^2+\theta_1v_{04}+\frac{x}{1-x}\left[(2u_{04}v_{04}+\theta_0)v_{04}-1\right]
\end{array}\right.\end{array}$$
  
 \subsubsection{The chart $(u_{x3} , v_{x3})=\left(u_{x2}-x\theta_x , \frac{v_{x2}}{u_{x2}-x\theta_x} \right) $}\label{app:x3x3}
  $ $\\  Domain of definition: $\C^2\setminus \{\gamma_0,\alpha \}$, where $$\gamma_0:\left(-x(\theta_0+\theta_x)\,  ,\,  -\frac{1}{x\theta_0(\theta_0+\theta_x)}\right)\, , \quad \alpha :\left(-x\frac{\theta_x}{2} \,  ,\,  \frac{4}{x\theta_x^2}\right)\, .$$ Here $\alpha $ is an apparent base point on $\mathcal D_0\setminus \mathcal{H}$, not a base point in the charts $\C^2_{u_{03},v_{03}}$ and $\C^2_{u_{04},v_{04}}$.
  \\
 Visible components of the infinity set:   $\mathcal D_0\setminus \gamma_0 : \{u_{x3}v_{x3}(u_{x3}+x\theta_x)=-x\} \,,  \mathcal D_x^* : \{v_{x3}=0\}$\\
 Visible exceptional lines: $ \mathcal E_x : \{u_{x3}=0\} $\vspace{.2cm}
   $$\begin{array}{l}\left\{\begin{array}{rcl}
 (u_{x3} , v_{x3})&=&\left((u-x)uv-x\theta_x , \frac{1}{((u-x)uv-x\theta_x)uv} \right)\vspace{.2cm}\\
(u,v)&=&\left((u_{x3}+x\theta_x)u_{x3}v_{x3}+x , \frac{1}{((u_{x3}+x\theta_x)u_{x3}v_{x3}+x)u_{x3}v_{x3}} \right)\vspace{.2cm}\\
\omega_{x3}&=& -((u_{x3}+x\theta_x)u_{x3}v_{x3}+x) v_{x3} 
\vspace{.2cm}\\
  E&=&\left.\frac{1}{(x-1) u_{x3}v_{x3}}\right[-\frac{u_{x3}+x\theta_x+x\theta_0}{(u_{x3}+x\theta_x)u_{x3}v_{x3}+x}-(\theta u_{x3}v_{x3}-1)(\overline{\theta}u_{x3}v_{x3}-1)(u_{x3}+x\theta_x) +\Bigr.\\&&
  \quad   \quad   \quad   \quad   \quad   \quad   \quad   \quad   \quad   \quad   \quad  \quad   \quad   \quad   \quad   \quad   \quad   \quad   \quad   \quad   \quad   \quad \Bigl.+  \theta_0-(x-1)(\theta_x-1) \Bigr]\vspace{.2cm}\\
  \end{array}\right.
\vspace{.5cm}\\

\left\{\begin{array}{rcl}
\dot u_{x3} &=& -u_{x3}(\theta_x-1)-x\theta_x^2  -\frac{(u_{x3}+x\theta_x+\theta_0)(u_{x3}+x\theta_x)}{1-x}+\frac{1}{ v_{x3}}+\frac{(u_{x3}+x(\theta_0+\theta_x))(u_{x3}+x\theta_x)}{(1-x)((u_{x3}+x\theta_x)u_{x3}v_{x3}+x)}+\vspace{.2cm}\\
&& +\frac{\theta\overline{\theta}}{1-x} ((u_{x3}+x\theta_x)u_{x3}v_{x3}+x)(u_{x3}+x\theta_x)u_{x3}v_{x3} 
\vspace{.2cm}\\
 \dot v_{x3} &=&\frac{((u_{x3}+x\theta_x)u_{x3}v_{x3}+x)v_{x3}}{1-x}\left[-\theta\overline{\theta}(2u_{x3}+x\theta_x)v_{x3}+(\theta+\overline{\theta})\right]-
\vspace{.2cm}\\
&&-\frac{(2u_{x3}+x\theta_x)v_{x3}}{(1-x)((u_{x3}+x\theta_x)u_{x3}v_{x3}+x)}+\frac{(\theta_0+\theta_x)v_{x3}}{1-x}\left(1+\frac{x\theta_x(u_{x3}+x\theta_x)v_{x3}-2x}{(u_{x3}+x\theta_x)u_{x3}v_{x3}+x}\right)
-v_{x3}
\end{array}\right.\end{array}$$
Further formulae:
 $$\begin{array}{l}\left\{\begin{array}{rcl}
 - \omega_{x3} E&=&-\left[\frac{x(\theta \overline{\theta}u_{x3}v_{x3}-(\theta+\overline{\theta}+\theta_x-1))+(\theta+\overline{\theta}-\theta_1)+(\theta u_{x3}v_{x3}-1)(\overline{\theta}u_{x3}v_{x3}-1)(u_{x3}+x\theta_x)}{1-x}\right] (u_{x3}+x\theta_x)v_{x3}  +\vspace{.2cm}\\&&
+\left(1+ \frac{x}{u_{x3} }\right)
\vspace{.2cm}\\
\frac{\dot \omega_{x3}}{\omega_{x3}}&=&-\frac{((u_{x3}+x\theta_x)u_{x3}v_{x3}+x)(2u_{x3}+x\theta_x) v_{x3}}{1-x}\cdot\theta\overline{\theta} +\frac{2u_{x3}+x\theta_x}{x-1}+
\vspace{.2cm}\\
&&+ \frac{\theta+\overline{\theta}}{1-x}(2(u_{x3}+x\theta_x)u_{x3}v_{x3}+x)-x(\theta_0+\theta_x)\frac{\theta_x(u_{x3}+x\theta_x)v_{x3}-1}{(x-1)((u_{x3}+x\theta_x)u_{x3}v_{x3}+x)}\vspace{.2cm}\\
&=&\frac{2u_{x3}+x\theta_x }{1-x}\left(\theta\overline{\theta}\omega_{x3}-(\theta_0+\theta_x)\theta_x\frac{xv_{x3} }{\omega_{x3}}v_{x3}+1 \right)  -x \frac{\theta+\overline{\theta}}{1-x}(2 \frac{\omega_{x3}}{xv_{x3}} +1) - \frac{ (\theta_0+\theta_x)}{x-1 } \frac{xv_{x3}}{\omega_{x3}}
\end{array}\right.\end{array}$$

 \subsubsection{The chart $(u_{x4} , v_{x4}) =\left( \frac{u_{x2}-x\theta_x}{v_{x2}} , {v_{x2}} \right) $}\label{app:x4x4}
  $ $\\  Domain of definition: $\C^2\setminus \{\gamma_0,\alpha \}$, where $$\gamma_0:\left(-x(\theta_0+\theta_x)\theta_0\,  ,\,  \frac{1}{\theta_0}\right) \, , \quad \alpha :\left(x\frac{\theta_x^2}{4} \,  ,\, - \frac{2}{\theta_x}\right)\, .$$ Visible components of the infinity set:   $\mathcal D_0\setminus \gamma_0 :  \{ v_{x4}(u_{x4}v_{x4}+x\theta_x)=-x\}$\\
 Visible exceptional lines: $ \mathcal E_x : \{v_{x4}=0\} $\vspace{.2cm}
 $$\begin{array}{l}\left\{\begin{array}{rcl}
 (u_{x4} , v_{x4})&=& \left(((u-x)uv-x\theta_x)uv , \frac{1}{uv} \right)\vspace{.2cm}\\
(u,v)&=&\left((u_{x4}v_{x4}+x\theta_x)v_{x4}+x , \frac{1}{((u_{x4}v_{x4}+x\theta_x)v_{x4}+x)v_{x4}} \right)\vspace{.2cm}\\
\omega_{x4}&=& -((u_{x4}v_{x4}+x\theta_x)v_{x4}+x) 
\vspace{.2cm}\\
  E&=&\left.\frac{1}{(x-1) v_{x4}}\right[-\frac{u_{x4}v_{x4}+x\theta_x+x\theta_0}{(u_{x4}v_{x4}+x\theta_x) v_{x4}+x}-(\theta v_{x4}-1)(\overline{\theta}v_{x4}-1)(u_{x4}v_{x4}+x\theta_x) +\Bigr.\\&&
  \quad   \quad   \quad   \quad   \quad   \quad   \quad   \quad   \quad   \quad   \quad  \quad   \quad   \quad   \quad   \quad   \quad   \quad   \quad   \quad   \quad   \quad \Bigl.+  \theta_0-(x-1)(\theta_x-1) \Bigr] 
  \end{array}\right.
\vspace{.5cm}\\

\left\{\begin{array}{rcl}
\dot u_{x4} &=& \frac{((u_{x4}v_{x4}+x\theta_x)v_{x4}+x)}{1-x}\left[\theta\overline{\theta}(2u_{x4}v_{x4}+x\theta_x)-(\theta+\overline{\theta})u_{x4}\right]+
\vspace{.2cm}\\
&&+\frac{(2u_{x4}v_{x4}+x\theta_x)u_{x4}}{(1-x)((u_{x4}v_{x4}+x\theta_x)v_{x4}+x)}-\frac{\theta_0+\theta_x}{1-x}\left(u_{x4}+\frac{x\theta_x(u_{x4}v_{x4}+x\theta_x)-2xu_{x4}}{(u_{x4}v_{x4}+x\theta_x)v_{x4}+x}\right)
+u_{x4}
 \vspace{.2cm}\\
\dot v_{x4} &=&-\frac{(u_{x4}v_{x4}+x\theta_x)v_{x4}+x}{1-x}(\theta v_{x4}-1)(\overline{\theta}v_{x4}-1)+x\frac{\theta_0v_{x4}-1}{(x-1)((u_{x4}v_{x4}+x\theta_x)v_{x4}+x)}
 \end{array}\right.\end{array}$$
We also have:
{$$\omega_{x3}= -((u_{x4}v_{x4}+x\theta_x) v_{x4}+x) \frac{1}{u_{x4}} $$}
  
\subsubsection{The chart $(u_{13},v_{13})=\left(u_{12}-\theta_1 , \frac{v_{12}}{u_{12}-\theta_1}\right) $}\label{app:1313}
  $ $\\  Domain of definition: $\C^2$.\\
   Visible components of the infinity set:   $\mathcal D_1^* : \{v_{13}=0\}$\\
 Visible exceptional lines: $ \mathcal E_1 : \{u_{13}=0\} $\vspace{.2cm}
 $$\begin{array}{l}\left\{\begin{array}{rcl}
 (u_{13},v_{13})&=& \left((u-1)v-\theta_1 , \frac{1}{(u-1)v^2-\theta_1v} \right)\vspace{.2cm}\\
(u,v)&=&\left(u_{13}^2v_{13}+\theta_1 u_{13}v_{13}+1 , \frac{1}{ u_{13}v_{13}} \right)\vspace{.2cm}\\
\omega_{13}&=& -v_{13}\vspace{.2cm}\\
\dot E&=&-x\frac{u_{13}+\theta_1 }{(x-1)^2  }\Bigl[(u_{13}+\theta_1-\theta )(u_{13}+\theta_1-\overline{\theta}  )u_{13}v_{13}+u_{13}-(\theta_x-1)  \Bigr]\vspace{.2cm}\\
E &=&\frac{(u_{13}+\theta_1-\theta )(u_{13}+\theta_1-\overline{\theta}  )}{x-1 }((u_{13}+\theta_1)u_{13}v_{13}+1)-x\frac{\theta \overline{\theta} -\theta_0(u_{13}+\theta_1)}{x-1}-\frac{(u_{13}+\theta_1)u_{13}v_{13}+1}{v_{13} }  \end{array}\right.
\vspace{.5cm}\\

\left\{\begin{array}{rcl}
\dot u_{13} &=&   -\frac{1}{v_{13}}+\frac{1}{1-x} \left(u_{13} +\theta_1-\theta\right)\left(u_{13} +\theta_1-\overline{\theta}\right)(u_{13}+\theta_1) u_{13}v_{13}\vspace{.2cm}\\
\dot v_{13} &=&-(2u_{13}+\theta_1)v_{13}-\frac{1}{1-x} \left(u_{13} +\theta_1-\theta\right)\left(u_{13} +\theta_1-\overline{\theta}\right)(2u_{13}+\theta_1)v_{13}^2-\vspace{.2cm}\\
&&\quad - \frac{1}{1-x}\left(2u_{13} +2\theta_1-\theta-\overline{\theta}\right)( u_{13}^2v_{13}+\theta_1 u_{13}v_{13} +1)v_{13}-\frac{x\theta_0}{1-x}v_{13}
 \end{array}\right.\end{array}$$
 
 \subsubsection{The chart $(u_{14},v_{14})=\left(\frac{u_{12}-\theta_1}{v_{12}} , v_{12}\right) $}\label{app:1414}
 $ $\\  Domain of definition: $\C^2.$\\
  Visible components of the infinity set:   $\emptyset$\\
 Visible exceptional lines: $ \mathcal E_1 : \{v_{14}=0\} $\vspace{.2cm}
 $$\begin{array}{l}\left\{\begin{array}{rcl}
 (u_{14},v_{14})&=& \left((u-1)v^2-\theta_1v , \frac{1}{ v} \right)\vspace{.2cm}\\
(u,v)&=&\left(u_{14}v_{14}^2+\theta_1  v_{14}+1 , \frac{1}{ v_{14}} \right)\vspace{.2cm}\\
\omega_{14}&=& -1
\vspace{.2cm}\\
E &=&\frac{(u_{14}v_{14}+\theta_1-\theta )(u_{14}v_{14}+\theta_1-\overline{\theta}  )}{x-1 }((u_{14}v_{14}+\theta_1) v_{14}+1)-x\frac{\theta \overline{\theta} -\theta_0(u_{14}v_{14}+\theta_1)}{x-1}-\Bigr.\\&&   \Bigl.-((u_{14}v_{14}+\theta_1) v_{14}+1)u_{14}\vspace{.2cm}\\
 \end{array}\right.
\vspace{.5cm}\\

\left\{\begin{array}{rcl}
\dot u_{14} &=&  (2u_{14}v_{14}+\theta_1)u_{14}+\frac{(2u_{14}v_{14}+\theta_1) }{1-x}\left(u_{14}v_{14} +\theta_1-\theta\right)\left(u_{14}v_{14} +\theta_1-\overline{\theta}\right)+\vspace{.2cm}\\
&&\quad +2\frac{(u_{14}v_{14}^2+\theta_1  v_{14}+1)u_{14}}{1-x}\left(u_{14}v_{14} +\theta_1-\frac{\theta+\overline{\theta}}{2} \right)+\frac{x\theta_0u_{14}}{1-x}
\vspace{.2cm}\\
\dot v_{14} &=&-\frac{v_{14}^2}{1-x}\left(u_{14}v_{14} +\theta_1-\theta\right)\left(u_{14}v_{14} +\theta_1-\overline{\theta}\right)-1-2\left(u_{14}v_{14} -\frac{\theta_0-\theta_1}{2}\right)v_{14}-
\vspace{.2cm}\\&&\quad - \frac{\theta_0 v_{14}}{1-x}-2\frac{(u_{14}v_{14}^2+\theta_1  v_{14}+1)v_{14}}{1-x}\left(u_{14}v_{14}+\theta_1-\frac{\theta+\overline{\theta}}{2}  \right)
 
 \end{array}\right.\end{array}$$
 \subsubsection{The chart $(u_{\infty 3} , v_{\infty 3}) =\left(u_{\infty 2}, \frac{v_{\infty 2}+\theta_\infty}{u_{\infty 2}}\right) $}\label{app:inf3inf3}
 $ $\\  Domain of definition: $\C^2.$\\
  Visible components of the infinity set:   $\emptyset$\\
 Visible exceptional lines: $ \mathcal E_\infty^- : \{u_{\infty 3}=0\} $, $ \mathcal E_\infty : \{u_{\infty 3}v_{\infty 3}=\theta_\infty \} $\vspace{.2cm}
$$
\begin{array}{l}\left\{\begin{array}{rcl}
 (u_{\infty 3} , v_{\infty 3})&=&\left(\frac{1}{u^2v-\theta u},(uv-\theta)(uv-\overline{\theta})u\right)\vspace{.2cm}\\
(u,v)&=&\left(\frac{1}{(u_{\infty 3}v_{\infty 3}-\theta_\infty)u_{\infty 3}}, \left(u_{\infty 3}v_{\infty 3}+\overline{\theta}\right)(u_{\infty 3}v_{\infty 3}-\theta_\infty)u_{\infty 3} \right)\vspace{.2cm}\\
\omega_{\infty 3}&=& -1\vspace{.2cm}\\
E &=&\theta_1(u_{\infty 3}v_{\infty 3}+\overline{\theta})+\frac{x}{x-1}(u_{\infty 3}v_{\infty 3}+\overline{\theta}-\theta_0)(u_{\infty 3}v_{\infty 3}+\overline{\theta})[u_{\infty 3}(u_{\infty 3}v_{\infty 3}-\theta+\overline{\theta})-1]  -\vspace{.2cm}\\&&-\theta \overline{\theta}-\frac{v_{\infty 3}}{x-1}[u_{\infty 3}(u_{\infty 3}v_{\infty 3}-\theta+\overline{\theta})-1] 
 \end{array}\right.
\vspace{.5cm}\\
\left\{\begin{array}{rcl}
\dot u_{\infty 3}&=&- \left(\theta_1+\frac{\theta_\infty}{1-x}\right)u_{\infty 3}-\frac{2(u_{\infty 3}v_{\infty 3}-\theta_\infty)u_{\infty 3} -1}{1-x}-x\frac{(u_{\infty 3}^2v_{\infty 3}-\theta_\infty u_{\infty 3}-1)^2}{1-x} +
\vspace{.2cm}\\
&&+\frac{x}{1-x }\Bigl((2u_{\infty 3}v_{\infty 3}-\theta-\theta_0+2\overline{\theta})u_{\infty 3}-1\Bigr)\Bigl((2u_{\infty 3}v_{\infty 3}-\theta+2\overline{\theta})u_{\infty 3}-1\Bigr)
\vspace{.2cm}\\
\dot v_{\infty 3} &=&v_{\infty 3}(2u_{\infty 3}v_{\infty 3}+\theta_1-\theta_\infty)- x\frac{ \overline{\theta}(\overline{\theta}-\theta_0)}{1-x}   (2u_{\infty 3}v_{\infty 3}-\theta_\infty)  -\vspace{.2cm}\\
&& -\frac{x}{1-x}\Bigl[    2(2u_{\infty 3}v_{\infty 3}-\theta_\infty)(u_{\infty 3}^2v_{\infty 3}-\theta_\infty u_{\infty 3}-1)    -\vspace{.2cm}\\
&& -(\theta+ \overline{\theta}-\theta_0)\bigl(1-3u_{\infty 3}^2v_{\infty 3}+2\theta_\infty u_{\infty 3}\bigr) 
\Bigr]v _{\infty 3}
   \end{array}\right.
\end{array}$$
    
  \subsubsection{The chart $(u_{\infty 4} , v_{\infty 4})=\left(\frac{u_{\infty 2}}{v_{\infty 2}+\theta_\infty}, v_{\infty 2}+\theta_\infty\right) $}\label{app:inf4inf4}
  $ $\\  Domain of definition: $\C^2.$\\
  Visible components of the infinity set:   $\mathcal{D}_\infty^{**}:\{u_{\infty 4}=0\}$\\
 Visible exceptional lines: $ \mathcal E_\infty^- : \{v_{\infty 4}=0\} $, $ \mathcal E_\infty : \{ v_{\infty 4}=\theta_\infty \} $\vspace{.2cm}
$$
\begin{array}{l}\left\{\begin{array}{rcl}
 (u_{\infty 4} , v_{\infty 4})&=&\left(\frac{1}{(uv-\theta)(uv-\overline{\theta})u  }, uv-\overline{\theta} \right)\vspace{.2cm}\\
(u,v)&=&\left(\frac{1}{(v_{\infty 4}-\theta_\infty)u_{\infty 4}v_{\infty 4}}, \left(v_{\infty 4}+\overline{\theta}\right)(v_{\infty 4}-\theta_\infty)u_{\infty 4}v_{\infty 4} \right)\vspace{.2cm}\\
\omega_{\infty 4}&=& -u_{\infty 4}\vspace{.2cm}\\
  \dot E&=&x\frac{(v_{\infty 4}-\theta_\infty)u_{\infty 4}v_{\infty 4} -1}{(x-1)^2  }\Bigl[\frac{1}{u_{\infty 4}}-(v_{\infty 4}+\overline{\theta}-\theta_0)(v_{\infty 4}+\overline{\theta})    \Bigr]\vspace{.2cm}\\
E &=&\theta_1( v_{\infty 4}+\overline{\theta})+\frac{x}{x-1}(v_{\infty 4}+\overline{\theta}-\theta_0)(v_{\infty 4}+\overline{\theta})[u_{\infty 4}v_{\infty 4}(v_{\infty 4}-\theta+\overline{\theta})-1]  -\vspace{.2cm}\\&&-\theta \overline{\theta}-\frac{1}{(x-1)u_{\infty 4}}[u_{\infty 4}v_{\infty 4}(v_{\infty 4}-\theta+\overline{\theta})-1] 
 \end{array}\right.
\vspace{.5cm}\\
\left\{\begin{array}{rcl}
\dot u_{\infty 4}&=&-u_{\infty 4}(2v_{\infty 4}+\theta_1-\theta_\infty)+ x\frac{u_{\infty 4}}{1-x}\Bigl[    \overline{\theta}(\overline{\theta}-\theta_0) (2v_{\infty 4}-\theta_\infty)u_{\infty 4}\Bigr. +\vspace{.5cm}\\
&&\left.\quad \quad \quad \quad +(\theta_x+\theta_1-1)(3u_{\infty 4}v_{\infty 4}^2-2\theta_\infty u_{\infty 4}v_{\infty 4} -1)\right. 
+\vspace{.5cm}\\
&&\Bigl.\quad \quad \quad \quad +2(2v_{\infty 4}-\theta_\infty)(u_{\infty 4}v_{\infty 4}^2-\theta_\infty u_{\infty 4}v_{\infty 4}-1)
\Bigr]\vspace{.2cm}\\
\dot v_{\infty 4} &=&\frac{1}{(1-x)u_{\infty 4}}-\frac{x}{1-x}(v_{\infty 4}+\overline{\theta})(v_{\infty 4}+\overline{\theta}-\theta_0)(v_{\infty 4}-\theta_\infty)u_{\infty 4}v_{\infty 4}
 \end{array}\right.
\end{array}$$

\section{Estimates near $\mathcal{D}_\infty^{**}\setminus \mathcal{H}^*$ and $\mathcal{D}_1^{*}\setminus \mathcal{H}^*$}\label{app:EstD1Dinfty}
\begin{lemma}[Behaviour near $\mathcal{D}_1^{*}\setminus \mathcal{H}^*$]\label{lemma:nearD1}
If a solution at a complex time $t$ is sufficiently close to $\mathcal{D}_1^{*}\setminus \mathcal{H}^*$, then there exists unique $\tau\in\mathbf{C}$ such that $(u(\tau),v(\tau))$ belongs to line $\mathcal{E}_1$.
In other words, the pair $(u(t),v(t))$ has a pole at $t=\tau$.

Moreover $|t-\tau|=O(|d(t)||u_{13}(t)|)$ for sufficiently small $d(t)$ and bounded $u_{13}$.

For large $R_1>0$, consider the set $\{t\in\mathbb{C}\mid |u_{13}(t)|\le R_1\}$.
Its connected component containing $\tau$ is an approximate disk $\Delta_1$ with centre $\tau$ and radius $|d(\tau)|R_1$, and
$t\mapsto u_{13}(t)$ is a complex analytic diffeomorphism from that approximate disk onto $\{u\in\mathbb{C}\mid|u|\le R_1\}$.
\end{lemma}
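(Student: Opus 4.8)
The plan is to mimic, almost verbatim, the proof of Lemma \ref{lemma:nearDx} (behaviour near $\mathcal{D}_x^*\setminus\mathcal{H}^*$), since the geometry near $\mathcal{D}_1^*\setminus\mathcal{H}^*$ is of the same type: an exceptional line $\mathcal{E}_1=\{u_{13}=0\}$ crossing the vertical leaf $\mathcal{D}_1^*=\{v_{13}=0\}$ transversally, with the distance function $d$ controlled by $\omega_{13}$. First I would record the asymptotics of the vector field in the chart $(u_{13},v_{13})$ from Section \ref{app:1313}: as $v_{13}\to 0$ with $u_{13}$ bounded and $x=e^t$ bounded away from $0$ and $1$, one has $\dot u_{13}\sim -1/v_{13}$, $\dot v_{13}\sim -(2u_{13}+\theta_1)v_{13}$ (plus lower-order terms linear in $v_{13}$ involving $x$), and $\omega_{13}=-v_{13}$. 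From the logarithmic derivative $\dot\omega_{13}/\omega_{13}=\dot v_{13}/v_{13}$, which is bounded for bounded $u_{13}$, integration along a short path in $D_t$ shows $\omega_{13}(t)/\omega_{13}(\tau)\sim 1$, hence $v_{13}$, and therefore $d$, is approximately a small constant along the relevant region. This is the analogue of the boxed remark inside the proof of Lemma \ref{lemma:nearDx}.

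Next, using $\dot u_{13}\sim -1/v_{13}$ and the near-constancy of $v_{13}$, one integrates to get $u_{13}(t)\sim u_{13}(\tau)-(t-\tau)/v_{13}(\tau)$. Thus as $t$ ranges over an approximate disk centred at $\tau$ of radius $|v_{13}|R_1$, $u_{13}$ sweeps out an approximate disk centred at $u_{13}(\tau)$ of radius $R_1$; for $R_1$ large this forces $0\in u_{13}(\Delta)$, i.e. the solution hits $\mathcal{E}_1$, which by the coordinate formula $v=1/(u_{13}v_{13})$ means $v$ has a pole (and $u-1$ has a zero, from $u=u_{13}^2v_{13}+\theta_1 u_{13}v_{13}+1$). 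Taking $\tau$ to be that pole point, one then has $u_{13}(t)\sim -(t-\tau)/v_{13}(\tau)$. To express the radius in terms of $d$, I would use Lemma \ref{lem:distancefn}: near $\mathcal{D}_1^*$ the distance function is $d=1/E$, and from Section \ref{app:1313} one computes $E\omega_{13}=E\cdot(-v_{13})\sim$ (a nonzero constant), so $v_{13}\sim -d/(\text{const})$ with the constant depending only on $x$ and bounded away from $0,1$; in fact the natural normalization gives $v_{13}(\tau)\sim -d(\tau)$ up to a bounded nonvanishing factor, whence $|u_{13}(t)|=R_1$ corresponds to $|t-\tau|\sim |d(\tau)|R_1$. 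This yields $|t-\tau|=O(|d(t)||u_{13}(t)|)$ and identifies the connected component $\Delta_1$ of $\{t\mid|u_{13}(t)|\le R_1\}$ containing $\tau$ as an approximate disk with centre $\tau$ and radius $|d(\tau)|R_1$, on which $t\mapsto u_{13}(t)$ is a complex analytic diffeomorphism onto $\{|u|\le R_1\}$ and $d(t)/d(\tau)\sim 1$.

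The one genuine point of care — and the main obstacle — is verifying that the constant relating $v_{13}$ to $d$ near $\mathcal{D}_1^*$ is the pure constant $1$ (unlike the $\mathcal{D}_0^*$ and $\mathcal{D}_x^*$ cases, where Lemma \ref{lem:distancefn} inserts the explicit factors $-(x-1)/x$ and $-1$, and the radius picks up $e^{\pm\tau}$-dependence). Here $d=1/E$ directly, and one must check from the $(u_{13},v_{13})$-formula for $E$ in Section \ref{app:1313} that $E\sim -(u_{13}^2v_{13}+\theta_1 u_{13}v_{13}+1)/v_{13}\sim -1/v_{13}$ as $v_{13}\to 0$, so that $v_{13}\sim -d$ with no $x$-dependent prefactor, which is exactly why the radius in the statement is $|d(\tau)|R_1$ with no exponential correction. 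Everything else is a line-by-line transcription of the argument in Lemma \ref{lemma:nearDx}, and the uniqueness of $\tau$ follows as there from $u_{13}$ being a diffeomorphism on $\Delta_1$. Finally, I would remark that the corresponding statements for $\mathcal{D}_\infty^{**}\setminus\mathcal{H}^*$ follow identically using the chart $(u_{\infty 4},v_{\infty 4})$ of Section \ref{app:inf4inf4}, with $\mathcal{E}_\infty$ and $\mathcal{E}_\infty^-$ playing the role of $\mathcal{E}_1$ and $d=1/E$ again giving a radius $|d(\tau)|R_\infty$.
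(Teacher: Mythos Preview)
Your proposal is correct and follows essentially the same approach as the paper's own proof: working in the $(u_{13},v_{13})$ chart, recording the asymptotics $\dot u_{13}\sim -1/v_{13}$, $\omega_{13}=-v_{13}$, and $E\omega_{13}\sim 1$, integrating $\dot\omega_{13}/\omega_{13}$ to see that $v_{13}$ is approximately a small constant, then integrating $\dot u_{13}$ to obtain $u_{13}(t)\sim u_{13}(\tau)-(t-\tau)/v_{13}(\tau)$ and concluding via the approximate-disk argument exactly as in Lemma~\ref{lemma:nearDx}. Your explicit verification that $E\sim -1/v_{13}$ so that $d=1/E\sim -v_{13}$ with no $x$-dependent prefactor is precisely the paper's relation $E\omega_{13}\sim 1$, and correctly explains why the radius here is $|d(\tau)|R_1$ without the exponential factors seen near $\mathcal{D}_0^*$ or $\mathcal{D}_x^*$.
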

\begin{proof}
For the study of the solutions near $\mathcal{D}_1^{*}\setminus \mathcal{H}^*$, we use coordinates $(u_{13},v_{13})$, see Section \ref{app:1313}.
In this chart, the set $\mathcal{D}_1^{*}\setminus \mathcal{H}^*$  is given by $\{v_{13}=0\}$ and parametrized by $u_{13}\in\mathbb{C}$.
Moreover, $\mathcal{E}_1$ is given by $u_{13}=0$ and parametrized by $v_{13}$.

Asymptotically, for $v_{13}\to0$, bounded $u_{13}$, and $x=e^t$ bounded away from $0$ and $1$, we have:
\begin{subequations}
\begin{align}
\label{u13dot}&\dot u_{13}\sim -\frac{1}{v_{13}}\\
\label{v13dot}&\dot v_{13}\sim -2u_{13}v_{13}\frac{2-x}{1-x}-(3\theta_1-\theta-\bar{\theta}-\theta_0)v_{13}-\frac{\theta_0}{1-x}v_{13}\\
\label{omega13}&\omega_{13}=-v_{13}\\
\label{omega13dot}&\frac{\dot\omega_{13}}{\omega_{13}}= -(2u_{13}+\theta_1)
 - \frac{1}{1-x}\left(2u_{13} +2\theta_1-\theta-\overline{\theta}\right) -\frac{x\theta_0}{1-x}+O(\omega_{13}) \\
\label{e13omega13}&E\omega_{13}\sim 1
\end{align}
\end{subequations}

Integrating \eqref{omega13dot} from $\tau$ to $t$, we get:
$$
\omega_{13}(t)=\omega_{13}(\tau)e^{-\theta_1(t-\tau)}e^{K(t-\tau)}(1+o(1)),
$$
with 
$$
K=-2u_{13}(\tilde\tau)
-\frac{1}{1-e^{\tilde\tau}}\left(2u_{13}(\tilde\tau)+2\theta_1-\theta-\overline{\theta}\right) 
-\frac{e^{\tilde\tau}\theta_0}{1-e^{\tilde\tau}},
$$
and $\tilde\tau$ being on the integration path.

Arguments similar to those in the proof of Lemma \ref{lemma:nearDx} show that $v_{13}$ is approximately equal to a small constant, from \eqref{u13dot} follows that: 
$$
u_{13}\sim u_{13}(\tau)-\frac{t-\tau}{v_{13}}.
$$
Thus, if $t$ runs over an approximate disk $\Delta$ centred at $\tau$ with radius $|v_{13}|R$, then $u_{13}$ fills an approximate disk centred at $u_{13}(\tau)$ with radius $R$.
Therefore, if $|v_{13}|\ll|\tau|$, the solution has the following properties for $t\in\Delta$:
$$
\frac{v_{13}(t)}{v_{13}(\tau)}\sim1,
$$
and $u_{13}$ is a complex analytic diffeomorphism from $\Delta$ onto an approximate disk with centre $u_{13}(\tau)$ and radius $R$.
If $R$ is sufficiently large, we will have $0\in u_{13}(\Delta)$, i.e.~the solution of the Painlev\'e equation will have a pole at a unique point in $\Delta$.

Now, it is possible to take $\tau$ to be the pole point.
For $|t-\tau|\ll|\tau|$, we have:
$$
\frac{d(t)}{d(\tau)}\sim1,
\quad\text{i.e.}\quad
\frac{v_{13}(t)}{d(\tau)}\sim-\frac{\omega_{13}(t)}{d(\tau)}\sim-1,
\quad
u_{13}(t)\sim -\frac{t-\tau}{v_{13}}\sim\frac{t-\tau}{d(\tau)}.
$$
Let $R_1$ be a large positive real number.
Then the equation $|u_{13}(t)|=R_1$ corresponds to $|t-\tau|\sim|d(\tau)|R_1$, which is still small compared to $|\tau|$ if $|d(\tau)|$ is sufficiently small.
Denote by $\Delta_1$ the connected component of the set of all $t\in\mathbb{C}$ such that $\{t\mid |u_{13}(t)|\le R_1\}$ is an approximate disk with centre $\tau$ and radius $2|d(\tau)|R_1$.
More precisely, $u_{13}$ is a complex analytic diffeomorphism from $\Delta_1$ onto $\{u\in\mathbb{C}\mid |u|\le R_1\}$, and
$$
\frac{d(t)}{d(\tau)}\sim1
\quad\text{for all}\quad
t\in\Delta_1.
$$
From \eqref{e13omega13}, $E(t)\omega_{13}(t)\sim1$ for the annular disk $\Delta_1\setminus\Delta_1'$, where $\Delta_1'$ is a disk centred at $\tau$ with small radius compared to radius of $\Delta_1$.
\end{proof}

 \begin{lemma}[Behaviour near $\mathcal{D}_\infty^{**}\setminus \mathcal{H}^*$]\label{lemma:nearDinf}
If a solution at a complex time $t$ is sufficiently close to $\mathcal{D}_\infty^{**}\setminus \mathcal{H}^*$, then there exists unique $\tau\in\mathbf{C}$ such that $(u(t),v(t))$ has a pole at $t=\tau$.
Moreover $|t-\tau|=O(|d(t)||v_{\infty4}(t)|)$ for sufficiently small $d(t)$ and bounded $v_{\infty4}$.

For large $R_{\infty}>0$, consider the set $\{t\in\mathbb{C}\mid |v_{\infty 4}|\le R_\infty\}$.
Its connected component containing $\tau$ is an approximate disk $\Delta_\infty$ with centre $\tau$ and radius $|d(\tau)|R_\infty$, and
$t\mapsto v_{\infty4}(t)$ is a complex analytic diffeomorphism from that approximate disk onto $\{u\in\mathbb{C}\mid|u|\le R_\infty\}$.
\end{lemma}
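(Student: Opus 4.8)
The plan is to transcribe the argument of Lemmas~\ref{lemma:nearDx} and~\ref{lemma:nearD1} into the chart $(u_{\infty 4},v_{\infty 4})$ of Section~\ref{app:inf4inf4}. There $\mathcal{D}_\infty^{**}\setminus\mathcal{H}^*$ is the line $\{u_{\infty 4}=0\}$, parametrized by $v_{\infty 4}$, while the two exceptional lines crossing it are $\mathcal{E}_\infty^-=\{v_{\infty 4}=0\}$ and $\mathcal{E}_\infty=\{v_{\infty 4}=\theta_\infty\}$; this chart does not see $\mathcal{H}^*$, so ``bounded $v_{\infty 4}$'' automatically keeps us away from it. Since $\mathcal{D}_\infty^{**}$ meets neither $\mathcal{D}_0^*$ nor $\mathcal{D}_x^*$, Lemma~\ref{lem:distancefn} gives $d=1/E$ on this part of $\mathcal{I}$, and $\omega_{\infty 4}=-u_{\infty 4}$.

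First I would record the asymptotics as $u_{\infty 4}\to0$ with $v_{\infty 4}$ bounded and $x=e^t$ bounded away from $0$ and $1$. Reading off the formulas of Section~\ref{app:inf4inf4}: $\dot u_{\infty 4}=O(u_{\infty 4})$, so $\dot\omega_{\infty 4}/\omega_{\infty 4}=\dot u_{\infty 4}/u_{\infty 4}$ is bounded; $\dot v_{\infty 4}\sim 1/((1-e^t)u_{\infty 4})$; and $E\sim 1/((e^t-1)u_{\infty 4})$, hence $d\sim(e^t-1)u_{\infty 4}$. The only thing needing a line of checking is that the $\tfrac{x}{1-x}[\,\cdots\,]$ correction terms in $\dot u_{\infty 4}$ and $\dot v_{\infty 4}$ are genuinely subleading: in $\dot v_{\infty 4}$ that term is $O(u_{\infty 4})$, negligible beside the $1/u_{\infty 4}$ main term, and in $\dot u_{\infty 4}$ the bracket is $O(1)$ and is multiplied by $u_{\infty 4}$.

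Next, exactly as in the proof of Lemma~\ref{lemma:nearDx}, integrating $\dot\omega_{\infty 4}/\omega_{\infty 4}$ along a short segment shows $u_{\infty 4}$, hence $d$, is approximately constant on a small disk; together with $e^t$ staying close to $e^\tau$ this gives $v_{\infty 4}(t)\sim v_{\infty 4}(\tau)-(t-\tau)/d(\tau)$. Thus $v_{\infty 4}$ is a complex analytic diffeomorphism from an approximate disk of radius $|d|R$ onto the approximate disk of radius $R$ about $v_{\infty 4}(\tau)$ in the $v_{\infty 4}$-plane; for $R$ large enough this image contains $0$, yielding a unique nearby $\tau$ with $v_{\infty 4}(\tau)=0$, i.e.\ the solution crosses $\mathcal{E}_\infty^-$ and $u$ has a pole at $\tau$. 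Taking $\tau$ to be that point, $v_{\infty 4}(t)\sim -(t-\tau)/d(\tau)$, so $|v_{\infty 4}(t)|=R_\infty$ corresponds to $|t-\tau|\sim|d(\tau)|R_\infty$, which remains $\ll|\tau|$ once $|d(\tau)|$ is small. This identifies the connected component $\Delta_\infty$ of $\{t\in\mathbb{C}:|v_{\infty 4}(t)|\le R_\infty\}$ containing $\tau$ as an approximate disk of centre $\tau$ and radius $|d(\tau)|R_\infty$, with $d(t)/d(\tau)\sim1$ on it; and $|t-\tau|=O(|d(t)||v_{\infty 4}(t)|)$ follows from $t-\tau\sim -d(\tau)v_{\infty 4}(t)\sim -d(t)v_{\infty 4}(t)$. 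The crossing with $\mathcal{E}_\infty$ (where $v_{\infty 4}=\theta_\infty$) is obtained the same way, or from the $\theta_\infty\mapsto-\theta_\infty$ symmetry of Section~\ref{s:movable}, which interchanges $\mathcal{E}_\infty$ and $\mathcal{E}_\infty^-$.

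I expect the main obstacle to be the self-consistency of this approximate-disk estimate: one must verify that the a priori hypotheses used to derive the leading asymptotics --- $u_{\infty 4}$ small, and $e^t$ close to $e^\tau$ over the disk of radius $\sim|d(\tau)|R_\infty$ --- are reproduced by the conclusion, and it is precisely there that the smallness of $d(\tau)$ is consumed. The remainder is a routine adaptation of the computations already carried out for $\mathcal{D}_x^*$, $\mathcal{D}_0^*$ and $\mathcal{D}_1^*$.
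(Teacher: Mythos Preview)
Your proposal is correct and follows essentially the same route as the paper: work in the $(u_{\infty4},v_{\infty4})$ chart, record the leading asymptotics $\dot u_{\infty4}=O(u_{\infty4})$, $\dot v_{\infty4}\sim 1/((1-e^t)u_{\infty4})$, $E\omega_{\infty4}\sim -1/(e^t-1)$, show $u_{\infty4}$ (hence $d$) is approximately constant on a short segment, integrate to get $v_{\infty4}$ approximately affine in $t$, and conclude the approximate-disk diffeomorphism with the unique pole at the point where $v_{\infty4}=0$. The paper carries the integration of $\dot v_{\infty4}$ one step further, obtaining $v_{\infty4}\sim v_{\infty4}(\tau)+\bigl(t-\tau-\log\tfrac{1-e^t}{1-e^\tau}\bigr)/u_{\infty4}$ and hence a radius $|d(\tau)|R_\infty/|e^t-1|$, but since $e^t$ is small this agrees with your $|d(\tau)|R_\infty$; otherwise the arguments coincide.
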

\begin{proof}
For the study of the solutions near $\mathcal{D}_\infty^{**}\setminus \mathcal{H}^*$, we use coordinates $(u_{\infty4},v_{\infty4})$, see Section \ref{app:inf4inf4}.
In this chart, the set $\mathcal{D}_\infty^{**}\setminus \mathcal{H}^*$  is given by $\{u_{\infty4}=0\}$ and parametrized by $v_{\infty4}\in\mathbb{C}$.
Moreover, $\mathcal{E}_{\infty}$ is given by $\{v_{\infty4}=\theta_{\infty}\}$ and parametrised by $u_{\infty4}$, while $\mathcal{E}_{\infty}^-$ is given by $\{v_{\infty4}=0\}$ and also parametrised by $u_{\infty4}$.

Asymptotically, for $u_{\infty4}\to0$,  $v_{\infty4}$ bounded,  and $x=e^t$ bounded away from $0$ and $1$, we have:
\begin{subequations}
\begin{align}
\label{uinf4dot}
&\dot{u}_{\infty 4} \sim \frac{(2v_{\infty 4}-\theta_\infty)(x+1)+\theta_1+x(\theta_x-1)}{x-1}\cdot u_{\infty 4},
\\
\label{vinf4dot}
&\dot{v}_{\infty 4} \sim  -\frac{1}{(x-1)u_{\infty 4}},
\\
\label{omegainf4}
&\omega_{\infty4}= - u_{\infty4}\\
\label{omegainf4dot}
&\frac{\dot\omega_{\infty4}}{\omega_{\infty4}}\sim2v_{\infty4}-\theta_{\infty}+\theta_x-1+\frac{4v_{\infty 4}-2\theta_\infty+\theta_1+\theta_x-1}{x-1} \\
\label{einf4omegainf4}
&E\omega_{\infty4}\sim -\frac{1}{x-1}.
\end{align}
\end{subequations}

Integrating \eqref{omegainf4dot} from $\tau$ to $t$, we get:
$$
\omega_{\infty4}(t)=\omega_{\infty4}(\tau)e^{(\theta_x-\theta_{\infty}-1)(t-\tau)}e^{K(t-\tau)}(1+o(1)),
$$
with 
$$
K=2v_{\infty_4}(\tilde\tau)+\frac{4v_{\infty 4}(\tilde\tau)-2\theta_\infty+\theta_1+\theta_x-1}{e^{\tilde\tau}-1}
$$
and $\tilde\tau$ being on the integration path.

Arguments similar to those in the proof of Lemma \ref{lemma:nearDx} show that $u_{\infty4}$ is approximately equal to a small constant, and from \eqref{vinf4dot} follows that: 
$$
v_{\infty4}\sim v_{\infty4}(\tau)+\frac{t-\tau-\log\frac{1-e^t}{1-e^{\tau}}}{u_{\infty4}}.
$$
Thus, for large $R$, if $t$ runs over an approximate disk $\Delta$ centred at $\tau$ with radius $|u_{\infty4}|R$, then $v_{\infty4}$ fills an approximate disk centred at $v_{\infty4}(\tau)$ with radius $R$.
Therefore, if $|u_{\infty4}|\ll|\tau|$, the solution has the following properties for $t\in\Delta$:
$$
\frac{u_{\infty4}(t)}{u_{\infty4}(\tau)}\sim1,
$$
and $v_{\infty4}$ is a complex analytic diffeomorphism from $\Delta$ onto an approximate disk with centre $v_{\infty4}(\tau)$ and radius $R$.
If $R$ is sufficiently large, we will have $0\in v_{\infty4}(\Delta)$, i.e.~the solution of the Painlev\'e equation will have a pole at a unique point in $\Delta$.

Now, it is possible to take $\tau$ to be the pole point.
For $|t-\tau|\ll|\tau|$, we have that $d(t)\sim d(\tau)$ implies:
$$
1\sim
\frac{(1-x)\omega_{\infty4}(t)}{d(\tau)}
\sim
\frac{(e^t-1) u_{\infty4}(t)}{d(\tau)},
\quad
v_{\infty4}(t)\sim \frac{t-\tau}{u_{\infty4}}\sim\frac{(t-\tau)(e^t-1)}{d(\tau)}.
$$
Let $R_{\infty}$ be a large positive real number.
Then the equation $|v_{\infty4}(t)|=R_{\infty}$ corresponds to $|(e^t-1)(t-\tau)|\sim|d(\tau)|R_{\infty}$, which is still small compared to $|\tau|$ if $|d(\tau)|$ is sufficiently small.
Denote by $\Delta_{\infty}$ the connected component of the set of all $t\in\mathbb{C}$ such that $\{t\mid |v_{\infty4}(t)|\le R_{\infty}\}$ is an approximate disk with centre $\tau$ and radius $2|d(\tau)|R_{\infty}$.
More precisely, $v_{\infty4}$ is a complex analytic diffeomorphism from $\Delta_{\infty}$ onto $\{v\in\mathbb{C}\mid |v|\le R_{\infty}\}$, and
$$
\frac{d(t)}{d(\tau)}\sim1
\quad\text{for all}\quad
t\in\Delta_{\infty}.
$$
From \eqref{einf4omegainf4}, $E(t)\omega_{\infty4}(t)\sim1/(1-e^t)$ for the annular disk $\Delta_{\infty}\setminus\Delta_{\infty}'$, 
where $\Delta_{\infty}'$ is a disk centred at $\tau$ with small radius compared to radius of $\Delta_{\infty}$.
\end{proof}

 \section{The vector field in the limit space}\label{s:Details0B}
 $ 
  \left\{\begin{array}{ccl}
E_0&=&-u \{ (u-1)v\left(uv-2\theta+\theta_\infty\right)-\theta_1v  + \theta (\theta-\theta_\infty)  \}  \vspace{.2cm}\\
\dot{u} &=& - 2u^2 (u -1)v  +(\theta+\overline{\theta} )u (u -1)+\theta_1u \vspace{.2cm}\\
\dot{v} &=& (3u-2)uv^2-2(\theta+\overline{\theta})u v +\left(  \theta+\overline{\theta}   -\theta_1 \right)v +\theta \overline{\theta} \, . 
\end{array}\right.$\vspace{.4cm}\\
 $
\begin{array}{l}\left\{\begin{array}{rcl}
     (\tilde u_{1} , \tilde v_{1})&=&\left(u,\frac{1}{u^2 (u-1)v} \right)\vspace{.2cm}\\
 E_0&=&-  \frac{1}{\tilde u_{1}^2 (\tilde u_{1}-1)\tilde v_{1}^2} +\frac{1} {\tilde v_{1}}\left(\frac{\theta+\overline{\theta}-\theta_1}{\tilde{u}_1} +\frac{\theta_1}{\tilde{u}_1-1} \right)  - \theta \overline{\theta}\tilde{u}_1  
   \vspace{.2cm}\\
\dot{\tilde{u}}_1 &=& -\frac{ 2}{ \tilde v_{1}} +  \tilde{u}_1^2 (\tilde{u}_1 -1)  \left(  \frac{\theta+\overline{\theta}-\theta_1}{\tilde{u}_1 }+\frac{\theta_1}{\tilde{u}_1 -1} \right)\vspace{.2cm}\\
\dot{\tilde v}_1 &=& \frac{  1}{ \tilde u_1}-\frac{\theta_1\tilde v_1-1}{\tilde u_1-1} -\left((\theta+\overline{\theta})(\tilde u_1-1) +2  \theta_1 \right)\tilde v_1 -  \theta\overline{\theta} \tilde u_1^2(\tilde u_1-1) \tilde{v}_1^2\, . 
  \end{array}\right.\end{array}$
  \vspace{.4cm}\\
  $\begin{array}{l}\left\{\begin{array}{rcl}
     (\tilde u_{2} , \tilde v_{2})&=&\left(\frac{1}{u},\frac{1}{ (u-1)v} \right)\vspace{.2cm}\\
 E_0&=& -\frac{(\theta \tilde v_{2}-1)\left(\overline{\theta}\tilde v_{2}-1\right)}{ \tilde u_{2}\tilde v_{2}^2}+ \frac{ \theta_1 \tilde v_{2}-1 }{ (1-\tilde u_{2})\tilde v_{2}^2}   \vspace{.2cm}\\
 \dot{\tilde u}_2&=& \frac{2 }{\tilde v_2} +\theta_1 -    (1-{\tilde u}_2)  \left( \theta+\overline{\theta}-\theta_1  \right)\vspace{.2cm}\\
\dot{\tilde v}_2&=& -\frac{(\theta \tilde v_2-1)(\overline{\theta}\tilde v_2-1)}{ \tilde u_2} -\frac{ \theta_1\tilde v_2-1}{ (1-\tilde u_2)}-(\theta+\overline{\theta}-\theta_1)\tilde v_2+\theta \overline{\theta} \tilde v_2^2\, .
  \end{array}\right.\end{array}$
    \vspace{.4cm}\\
   $\begin{array}{l}\left\{\begin{array}{rcl}
     (\tilde u_{3} , \tilde v_{3})&=&\left(\frac{1}{u}, -u(uv-\theta)  \right)\vspace{.2cm}\\
 E_0&=&  -(1-\tilde u_{3})\tilde u_{3}^2\tilde v_{3}-(\theta_\infty+\theta_1 ) \tilde u_{3}\tilde v_{3}  + \theta_\infty  \tilde v_{3}- \theta(\overline{\theta}-\theta_1 )
  \vspace{.2cm}\\
\dot{\tilde u}_3&=&  -   2\tilde{u}_3(1-\tilde{u}_3)\tilde{v}_3+  {\theta}_\infty(1-\tilde{u}_3) - \theta_1 \tilde{u}_3   \vspace{.2cm}\\ 
\dot{\tilde v}_3&=& ( -2\tilde{u}_3+1) \tilde{v}_3^2 -(\theta_1+\theta_\infty) \tilde{v}_3 \, .
  \end{array}\right.\end{array}$  \vspace{.4cm}\\
$\left\{\begin{array}{rcl}
     (u_{03} , v_{03})&=&\left(uv-\theta_0, \frac{1}{uv^2-\theta_0 v} \right)\vspace{.2cm}\\
(u,v)&=&\left(u_{03}^2v_{03}+ \theta_0u_{03}v_{03} , \frac{1}{u_{03}v_{03}   }\right)\vspace{.2cm}\\
 E_0&=&\Bigl[- (u_{03}+\theta_0)u_{03} v_{03}\left(u_{03}+\theta_0  -(\theta+\overline{\theta}-\theta_0) \right)  - u_{03}(\theta\overline{\theta}v_{03}-1)    -(\theta_x-1)\Bigr] (u_{03}+\theta_0)
 \vspace{.2cm}\\
 \dot u_{03} &=&  (u_{03}+\theta_0-\theta)(u_{03}+\theta_0-\overline{\theta}) (u_{03}+\theta_0)u_{03}v_{03}
  \vspace{.2cm}\\
\dot v_{03} &=&-  (u_{03}+\theta_0-\theta)(u_{03}+\theta_0-\overline{\theta}) (2u_{03}+\theta_0) v_{03}^2
 -(2(u_{03}+\theta_0)-(\theta+\overline{\theta}))(u_{03}+\theta_0)u_{03}v_{03}^2 +\vspace{.2cm}\\
&&+(2u_{03}+\theta_0-\theta_x+1  )v_{03}   \end{array}\right.$
 \vspace{.4cm}\\
 $\left\{\begin{array}{rcl}
   (u_{04} , v_{04})&=& \left(uv^2-\theta_0 v  , \frac{1}{v} \right)\vspace{.2cm}\\
E_0&=&\Bigl[u_{04}v_{04} -(u_{04}v_{04}+\theta_0-(\theta+\overline{\theta}))(u_{04}v_{04}+\theta_0)v_{04}   -\\&&\hspace{4.5cm}\quad \quad\quad\quad \quad \quad\quad \quad \Bigl.   -\theta\overline{\theta} v_{04}- (\theta_x-1)\Bigr] (u_{04}v_{04}+\theta_0)
 \vspace{.2cm}\\
\dot u_{04}&=& (u_{04}v_{04}+\theta_0-\theta)(u_{04}v_{04}+\theta_0-\overline{\theta}) (2u_{04}v_{04}+\theta_0)  
  -\theta_1u_{04}   -\vspace{.2cm}\\
&&-[(\theta+\overline{\theta})-2(u_{04}v_{04}+\theta_0) ][(u_{04}v_{04}+\theta_0)v_{04}-1]u_{04}  \vspace{.2cm}\\

\dot v_{04} &=&- \Bigl[ \left(2(u_{04}v_{04}+\theta_0)-\theta\right)v_{04}-1\Bigr]\Bigl[\left(2(u_{04}v_{04}+\theta_0)-\overline{\theta}\right)v_{04}-1\Bigr]+
\vspace{.2cm}\\
&&+ \left[(u_{04}v_{04}+\theta_0)v_{04}-1\right]^2+\theta_1v_{04} 
 \end{array}\right.$
  \vspace{.4cm}\\
   $ \left\{\begin{array}{rcl}
 (u_{x3} , v_{x3})&=&(u_{x2},v_{x2}/u_{x2})=\left(u^2v  , \frac{1}{u^3v^2} \right)\vspace{.2cm}\\
   E_0&=& \frac{1-   (\theta+\overline{\theta}-\theta_1) u_{x3} v_{x3} +u_{x3}^2v_{x3} }{(u_{x3}v_{x3})^2}+\theta\overline{\theta}u_{x3}^2v_{x3}-(\theta+\overline{\theta})u_{x3} \vspace{.2cm}\\
\dot u_{x3} &=&   - (u_{x3} +\theta+\overline{\theta}-\theta_1)u_{x3}  +\frac{2}{ v_{x3}}+
 \theta\overline{\theta}  u_{x3}^4v_{x3}^2 
\vspace{.2cm}\\
 \dot v_{x3} &=& -2\theta\overline{\theta}u_{x3}^3 v_{x3}^3+(\theta+\overline{\theta})u_{x3}^2v_{x3}^2 -\frac{2 }{ u_{x3} }+ (\theta+\overline{\theta}-\theta_1)v_{x3} 
 
\end{array}\right. $ \vspace{.4cm}\\
 $ \left\{\begin{array}{rcl}
 (u_{x4} , v_{x4})&=&(u_{x2}/v_{x2},v_{x2})= \left(u^3v^2 , \frac{1}{uv} \right)\vspace{.2cm}\\
   E_0&=&  \frac{1-(\theta+\overline{\theta}-\theta_1)v_{x4}}{ v_{x4}^2 }+(\theta v_{x4}-1)(\overline{\theta}v_{x4}-1)u_{x4}  
  \vspace{.2cm}\\
\dot u_{x4} &=&  \theta\overline{\theta}u_{x4}^2v_{x4}^3 -(\theta+\overline{\theta})u_{x4}^2v_{x4}^2+\frac{2u_{x4} }{  v_{x4}} -(\theta+\overline{\theta}-\theta_1)u_{x4}
 \vspace{.2cm}\\
\dot v_{x4} &=&-u_{x4}v_{x4}^2(\theta v_{x4}-1)(\overline{\theta}v_{x4}-1) 
 \end{array}\right. $
    \vspace{.4cm}\\
 $ \left\{\begin{array}{rcl}
 (u_{13},v_{13})&=& \left((u-1)v-\theta_1 , \frac{1}{(u-1)v^2-\theta_1v} \right)\vspace{.2cm}\\
 E_0 &=&- (u_{13}+\theta_1-\theta )(u_{13}+\theta_1-\overline{\theta}  ) ((u_{13}+\theta_1)u_{13}v_{13}+1) - (u_{13}+\theta_1)u_{13}  -\frac{ 1}{v_{13} }  \vspace{.2cm}\\
\dot u_{13} &=&   -\frac{1}{v_{13}}+ \left(u_{13} +\theta_1-\theta\right)\left(u_{13} +\theta_1-\overline{\theta}\right)(u_{13}+\theta_1) u_{13}v_{13}\vspace{.2cm}\\
\dot v_{13} &=&-(2u_{13}+\theta_1)v_{13}-  \left(u_{13} +\theta_1-\theta\right)\left(u_{13} +\theta_1-\overline{\theta}\right)(2u_{13}+\theta_1)v_{13}^2-\vspace{.2cm}\\
&&\quad - \left(2u_{13} +2\theta_1-\theta-\overline{\theta}\right)( u_{13}^2v_{13}+\theta_1 u_{13}v_{13} +1)v_{13} 
 \end{array}\right. $  \vspace{.4cm}\\
 $ \left\{\begin{array}{rcl}
 (u_{14},v_{14})&=& \left((u-1)v^2-\theta_1v , \frac{1}{ v} \right)\vspace{.2cm}\\
 E_0 &=&- \Bigl[(u_{14}v_{14}+\theta_1-\theta )(u_{14}v_{14}+\theta_1-\overline{\theta}  )+u_{14}\Bigr] ((u_{14}v_{14}+\theta_1) v_{14}+1)  \vspace{.2cm}\\
\dot u_{14} &=&  (2u_{14}v_{14}+\theta_1)\Bigl[ \left(u_{14}v_{14} +\theta_1-\theta\right)\left(u_{14}v_{14} +\theta_1-\overline{\theta}\right)+u_{14}\Bigr]+\vspace{.2cm}\\
&&\quad +2 (u_{14}v_{14}^2+\theta_1  v_{14}+1)u_{14}\left(u_{14}v_{14} +\theta_1-\frac{\theta+\overline{\theta}}{2} \right) 
\vspace{.2cm}\\
\dot v_{14} &=&- v_{14}^2 \left(u_{14}v_{14} +\theta_1-\theta\right)\left(u_{14}v_{14} +\theta_1-\overline{\theta}\right)-1-2\left(u_{14}v_{14} -\frac{\theta_0-\theta_1}{2}\right)v_{14}-
\vspace{.2cm}\\&&\quad -  \theta_0 v_{14} -2 (u_{14}v_{14}^2+\theta_1  v_{14}+1)v_{14}\left(u_{14}v_{14}+\theta_1-\frac{\theta+\overline{\theta}}{2}  \right)
 
 \end{array}\right.$
    \vspace{.4cm}\\$
\left\{\begin{array}{rcl}
 (u_{\infty 3} , v_{\infty 3})&=&\left(\frac{1}{u^2v-\theta u},(uv-\theta)(uv-\overline{\theta})u\right)\vspace{.2cm}\\
 E_0 &=&(\theta_1  -\theta) \overline{\theta}+ u_{\infty 3}v_{\infty 3}(\theta_1-\theta_\infty)+(u_{\infty 3}v_{\infty 3})^2-v_{\infty 3}\vspace{.2cm}\\
 \dot u_{\infty 3}&=& \left(\theta_\infty -\theta_1\right)u_{\infty 3}- 2u_{\infty 3}^2v_{\infty 3}  +1 
\vspace{.2cm}\\
\dot v_{\infty 3} &=&v_{\infty 3}(2u_{\infty 3}v_{\infty 3}+\theta_1-\theta_\infty) 
   \end{array}\right.$
  \vspace{.4cm}\\
  $\left\{\begin{array}{rcl}
 (u_{\infty 4} , v_{\infty 4})&=&\left(\frac{1}{(uv-\theta)(uv-\overline{\theta})u  }, uv-\overline{\theta} \right)\vspace{.2cm}\\
 E_0 &=&(\theta_1  -\theta )\overline{\theta}+ v_{\infty 4}(v_{\infty 4}-\theta_\infty+\theta_1)-\frac{1}{ u_{\infty 4}} \vspace{.2cm}\\
\dot u_{\infty 4}&=&-u_{\infty 4}(2v_{\infty 4}+\theta_1-\theta_\infty) \vspace{.2cm}\\
\dot v_{\infty 4} &=&\frac{1}{u_{\infty 4}} 
 \end{array}\right.
$

\bibliographystyle{amsplain}

\end{document}